\newmdenv[
leftmargin = 0pt,
innerleftmargin = 1em,
innertopmargin = 0pt,
innerbottommargin = 0pt,
innerrightmargin = 0pt,
rightmargin = 0pt,
linewidth = 3pt,
topline = false,
rightline = false,
bottomline = false,
linecolor=red,
]{leftbar}
\definecolor{marine}{RGB}{85,100,255}
\definecolor{pomegrenate}{RGB}{255,18,94}
\DeclareRobustCommand{\qed}{$\square$}
\DeclareMathOperator{\theargmin}{argmin}
\DeclareMathOperator{\BERN}{Bernoulli}
\DeclareMathOperator{\variance}{var}
\newcommand{\SBM}{{\mathop{\scriptscriptstyle \mathrm{SBM}}}}
\newcommand{\F}{\mathop{\mathfrak{F}}}
\DeclareMathOperator{\M}{\mathbb{M}}
\DeclareMathOperator{\N}{\mathbb{N}}
\DeclareMathOperator{\R}{\mathbb{R}}
\DeclareMathOperator{\cM}{\mathcal{M}}
\newcommand{\bA}{\bm{A}}
\newcommand{\bE}{\bm{E}}
\newcommand{\cG}{\mathcal{G}}
\newcommand{\cH}{\mathcal{H}}
\newcommand{\cK}{\mathcal{K}}
\newcommand{\bp}{\bm{p}}
\newcommand{\bs}{\bm{s}}
\newcommand{\blamb}{\bm{\lambda}}
\DeclareRobustCommand{\argmin}[1]{\underset{#1}{\theargmin}\mspace{4mu}}
\DeclareRobustCommand{\bern}[1]{\BERN \left(#1\right)}
\DeclareRobustCommand{\var}[1]{\variance\left[#1\right]}
\NewDocumentCommand \E { m o}
{
  \IfNoValueTF {#2} {\mathbb{E}\left[#1\right]}{\mathbb{E}_{#2}\mspace{-4mu}\left[#1\right] }
}
\NewDocumentCommand \prob { m o}
{
  \IfNoValueTF {#2} {\mathbb{P}\left(#1\right)}{\mathbb{P}_{#2} \left(#1\right)}
}
\newtheorem{conjecture}{Conjecture}
\newtheorem{definition}{Definition}
\newtheorem{example}{Example}
\newtheorem{notation}{Notation}
\newtheorem{proof}{Proof of Theorem}
\newtheorem{remark}{Remark}
\newtheorem{theorem}{Theorem}
\newcommand{\ER}{Erd\H{o}s-R\'enyi\xspace}
\begin{document}
\begin{frontmatter}

  \title{Approximate Fr\'echet Mean for Data Sets of Sparse Graphs}
  \author{Daniel Ferguson}
  \author{Fran\c{c}ois G. Meyer\fnref{fnt2}}
  \address{Applied Mathematics, University of Colorado at Boulder, Boulder CO 80305}
  \fntext[fnt2]{Corresponding author: fmeyer@colorado.edu}

  \begin{abstract}
    To characterize the location (mean, median) of a set of graphs, one needs a notion of centrality that is adapted to metric
      spaces, since graph sets are not Euclidean spaces. A standard approach is to consider the Fr\'echet mean. In this work, we
      equip a set of graph with the pseudometric  defined by the $\ell_2$ norm between the eigenvalues of their respective
      adjacency matrix . Unlike the edit distance, this pseudometric reveals structural changes at multiple scales, and is well
      adapted to studying various statistical problems on sets of graphs. We describe an algorithm to compute an approximation to
      the Fr\'echet mean of a set of undirected unweighted graphs with a fixed size.
  \end{abstract}

  \begin{keyword}
    graph mean; graph median; Fr\'echet mean.
  \end{keyword}

\end{frontmatter}

\section{Introduction}
Machine learning from a set of data almost always requires some notion of average. Algorithms for clustering, classification,
and linear regression all utilize the average value of the data set \cite{HTF08}. When the distance is induced by a norm, then
the mean is a simple algebraic operation. If the data lie on a Riemannian manifold, equipped with a metric, then one can extend
the notion of mean with the concept of Fr\'echet mean \cite {pennec06}. In fact the concept of Fr\'echet mean only requires that
a (pseudo)metric between points be defined, and therefore one can consider the Fr\'echet mean of a set in a pseudometric space
\cite{frechet48}.  Not surprisingly, many machine learning algorithms, which  were developed for Euclidean spaces, can be extended
to use the Fr\'echet mean. The purpose of this paper is to solve the nontrivial problem of determining the Fr\'echet mean for
data sets of graphs when the pseudometric is the $\ell_2$ distance between the eigenvalues of the adjacency matrix.

In this work we consider a set of simple graphs with $n$ vertices. The graphs are considered to be sparse, in the
  sense that the edge density satisfies,
  \begin{equation}
    \rho_n = \Omega \left(\frac{\ln^3(n)}{n} \right).
  \end{equation}
  Because most real world networks are sparse, this constraint is realistic.  We additionally note that the vertex set must be
sufficiently large and that the technique introduced in this paper will perform poorly for sets of small graphs.

Our line of attack involves the following two intermediate results: (1) the Fr\'echet mean of a set of sparse graphs can be
approximated within any precision by a stochastic block model; (2) given a sequence of eigenvalues of an adjacency matrix, one
can recover the stochastic block model whose spectrum matches these target eigenvalues. We prove various error bounds and
convergence results for our algorithm and validate the theory with several experiments. The paper is structured as follows:
section \ref{sec:Not} introduces the notations used to refer to graphs and sets of graphs as well as defining precisely what is
meant by a random graph and a stochastic block model ensemble. Section \ref{sec:FM&EFM} defines the Fr\'echet mean problem, its
empirical alternative, and introduces the theorems necessary for our solution. Section \ref{sec:Reg} describes  briefly how the
Fr\'echet mean applies to regression for graph valued data sets and section \ref{sec:OptParam} introduces a numerical method to
find the Fr\'echet mean. Section \ref{sec:Exp} serves to experimentally validate the theory of the previous sections. We leave
all proofs of our results to the appendix.

\section{State of the Art}
We consider a set of undirected unweighted graphs of fixed size $n$, $\cG$, wherein we define a distance . To characterize the
location (mean, median) of the set $\cG$, we need a notion of centrality that is adapted to metric spaces, since graph sets are
not Euclidean spaces. A standard approach is to consider the Fr\'echet sample mean, and the Fr\'echet total sample variance.

The choice of metric is crucial to the computation of the Fr\'echet mean, since each metric induces a different mean graph.  The
Fr\'echet mean of graphs has been studied in the context where the distance is the edit distance (e.g.,
\cite{BNY20,ginestet12,jain09,jain16,jiang01} and references therein).  The edit distance reflects small scale  changes in the
graphs and therefore the Fr\'echet mean will be sensitive to the fine structural distinctions between graphs. Effectively,
the Fr\'echet mean with respect to the edit distance can then be interpreted as an average of the fine structures in
the observed graphs as measured by this distance.

In this paper, we consider that the fine scale, which is defined by the local connectivity at the level of each vertex, may be
intrinsically random.  The quantification of such random fluctuations is uninformative when comparing graphs.  We prefer to use
a distance that can detect larger scale patterns of connectivity that happen at multiple scales.

The adjacency spectral distance, which is just the $\ell_2$ norm of the difference between the spectra of the adjacency matrices
of the two graphs of interest \cite{Wilson2008}, exhibits good performance when comparing various types of graphs
\cite{meyer20c}, making it a reliable choice for a wide range of problems. Spectral distances also exhibit practical advantages,
as they can inherently compare graphs of different sizes and can compare graphs without known vertex correspondence (see e.g.,
\cite{ferrer10,MFA05} and references therein). The adjacency spectrum in particular is well-understood, and is perhaps the most
frequently studied graph spectrum \cite{Farkas2001, Flaxman2003}.

In practice, it is often the case that only the first $c$ eigenvalues are compared, where $c\ll n$. We still refer to such
truncated spectral distances as spectral distances. Comparison using the first $c$ eigenvalues for small $c$ allows one to focus
on the community structure of the graph, while ignoring the local structure of the graph \cite{Lee2014}. Inclusion of the
highest-$c$ eigenvalues allows one to discern local features as well as global. This flexibility allows the user to target the
particular scale at which she wishes to study the graph, and is a significant advantage of the spectral distances.

Instead of solving the minimization problem associated with the computation of the Fr\'echet mean in the original set $\cG$, the
authors in \cite{ferrer10} suggest to embed the graphs in Euclidean space, wherein they can trivially find the mean of the
set. Because the embedding in \cite{ferrer10} is not an isometry, there is no guarantee that the inverse of the average embedded
graphs be equal to the Fr\'echet mean. Furthermore, the inverse embedding may not be available in closed form.  In the case of
simple graphs, the Laplacian matrix of the graph uniquely characterizes the graph. The authors in \cite{ginestet17} define the
mean of a set of graphs using the Fr\'echet sample mean (computed on the manifold defined by the cone of symmetric positive
semi-definite matrices) of the respective Laplacian matrices.
\section{Notation
  \label{sec:Not}}
We denote by $G = (V,E)$ a graph with vertex set $V = \lbrace 1,2,...,n \rbrace$ and edge set $E \subset V \times V$. For
vertices $i,j \in V$ an edge exists between them if the pair $(i,j) \in E$. The size of a graph is called
$n = \vert V \vert$ and the number of edges is $m = \vert E \vert$. The density of a graph is called
$\rho_n = \frac{m}{n(n-1)/2}$. For any $n > 100$, we say a graph is sparse when

\begin{equation}
\rho_n = \Omega \left(\frac{\ln^3(n)}{n} \right).
\end{equation} 

The matrix $\bA$ is the adjacency matrix of the graph and is defined as
\begin{align*}
  \bA_{ij} = 
  \begin{cases}
    1 \quad \text{if }(i,j) \in E,\\
    0 \quad \text{else.}
  \end{cases}
\end{align*}
  We define the function $\sigma$ to be the mapping from the set of $n \times n$ adjacency matrices (square, symmetric matrices with zero entries
  on the diagonal), $\M_{n \times n}$ to $\R^n$ that    assigns to an adjacency matrix the vector of its $n$ sorted eigenvalues,
  \begin{align}
    \sigma:        \M_{n\times n} & \longrightarrow \R^n,\\
    \bA & \longmapsto \blamb = [\lambda_1,\ldots,\lambda_n],
  \end{align}
  where $\lambda_1 \geq \ldots \geq \lambda_n$ Because we often consider the $c$ largest eigenvalue of the
  adjacency matrix $\bA$, we define the mapping to the truncated spectrum as $\sigma_c$ ,
  \begin{align}
    \sigma_c:        \M_{n\times n} & \longrightarrow \R^c,\\
    \bA & \longmapsto \blamb_c = [\lambda_1,\ldots,\lambda_c].
  \end{align}
We write $G \tilde = G'$ when two graphs are isomorphic. Two graphs are isomorphic if and only if there exists a permutation
matrix $\bm P$ such that $\bA' = \bm P^T \bA \bm P$.  

\begin{definition}[Adjacency spectral pseudometric] We define the adjacency spectral pseudometric as the $\ell_2$ norm between
  the spectra of the respective adjacency matrices,
  \begin{align} 
    d_A(G,G') = ||\sigma(G)-\sigma(G')||_2. \label{distance}
  \end{align}
  The pseudometric $d_A$ satisfies the symmetry and triangle inequality axioms, but not the identity axiom. Instead, $d_A$
      satisfies the reflexivity axiom
      \begin{equation*}
        d_A(G,G) = 0, \quad \forall G \in \cG.
      \end{equation*}
      We define the truncated adjacency spectral pseudometric as
    \begin{align}
      d_{A_c}(G,G') = ||\sigma_c(G)-\sigma_c(G')||_2. \label{distance-trunc}
    \end{align}
  \end{definition}
  
  \begin{definition}[Set of graphs and sparse graphs]
    We denote by  $\cG$  the set of all simple graphs on $n$ nodes. \\

    \noindent Furthermore, we denote by $\cG_s \subset \cG$ the subset of sparse graphs for which the edge density satisfies
    \begin{equation}
      \rho (G)    = \frac{2m}{n(n-1)}= \Omega\left(\frac{\ln^3(n)}{n}\right) \label{sparsity}
    \end{equation}
    where $m$ is the number of edges of $G$.
  \end{definition}

\subsection{Random Graphs} 
We denote by $\cM (\cG)$ the space  of probability measures on $\cG$. In this work, when we talk about a measure we  always
mean a probability measure.
  \begin{definition}[Set of random graphs associated with a measure $\mu$]
    Let  $\cH$ be a subset of $\cG$, and $\mu$ a probability measure defined on $\cH$. We can extend $\mu$ to $\cG$,
    such that for any $\mathcal{A} \subset \cG$, we define $\mu(\mathcal{A}) = \mu (\mathcal{A} \cap \cH)$.  We define the set
    of random graphs distributed according  to $\mu$ to be the probability space $\left(\cH, \mu\right)$.
  \end{definition}
  \begin{remark}
    In this paper, the $\sigma$-field associated with the $\left(\cH, \mu\right)$ will always be the power set of $\cH$.
  \end{remark}
  This definition allows to recast various ensemble of random graphs (e.g., \ER, inhomogeneous \ER, stochastic block models,
  etc) using a unique notation.
\subsubsection{Kernel Probability Measures
  \label{subsec:kernProbMeas}}
  \noindent Let $\left\{ \xi_i \right\}_{i=1}^n$ be the sequence of equispaced points  in the interval $[0,1]$, $\xi_i  = i/n$.
  \begin{definition}[kernel probability measure]
    A probability measure $\mu \in \cM (\cG)$ is called a kernel probability measure if there exist a subset $\cH \subset \cG$
    and a  function $f$,
    \begin{equation}
      f: [0,1]\times[0,1] \mapsto (0,1),
    \end{equation}
    such that $f(x,y) = f(1-y,1-x)$, and such that 
    \begin{equation}
      \forall G \in \cH, \text{with adjacency matrix}\; \bA=\left(a_{ij}\right), 
      \mu\left( \left\{ \bA \right \}\right) = \mspace{-12mu} \prod_{1 \leq i < j \leq n}\mspace{-12mu}\prob{a_{ij}} = \mspace{-12mu} \prod_{1 \leq i < j \leq
        n}\mspace{-12mu} \bern{f(\xi_i,\xi_j)}.
    \end{equation}
    The function $f$ is called a kernel of $\mu$.
\end{definition}

\begin{definition}
 We denote by $\cK$ the set of all kernel functions,
\begin{equation}
  \cK = \left\{ f\vert f: [0,1]^2 \mapsto (0,1); f (x,y) = f(1-y,1-x)  \right\}.
\end{equation}
\end{definition}
\noindent We note that given the sequence $\left \{\xi_i\right \}\; 1 \leq i \leq n$ and the measure $\mu$, the kernel $f$ forms an
  equivalence class of functions, characterized by their values on the grid $\left \{ (\xi_i, \xi_i)\; 1 \leq i,j \leq n \right\}$.

\begin{remark}
  Many definitions of random graphs allow for $\lbrace \xi_i \rbrace$ to be a random sample from some probability density
  function on $[0,1]$ (e.g., \cite{borgs19,le18,janson13, olhede14}, and references therein) With the right distribution defined
  on $[0,1]$, many of the results would be identical when taking an expectation over the random sample $\lbrace \xi_i
  \rbrace$. The advantage to specifying the points as an equispaced grid allows a greater level of control when specifying
  certain properties of the kernels. For example, we will be able to guarantee the number of nodes in a community when
  considering kernels of stochastic block models, which we define in the following section.
\end{remark}    

\begin{notation}
We denote by $G_{\mu}$ a random realization of a graph $G\in \left(\cG,\mu\right)$. Sometimes, we use the notation  $G(n,f)$ if
$\mu$ is kernel probability measure. This notation  generalizes the notation of the classic \ER random graphs of
$G(n,p)$ where taking $f(x,y) = p$ is a possible kernel as in \cite{J10}.
\end{notation}

\begin{notation}
Given a measurable function $\varphi$ defined on the probability space $\left(\cG,\mu\right)$, we denote by $\E{\varphi}[\mu]$
the expected value $\E{\varphi(G)}$, when $G$ is distributed according to $\mu$.
\end{notation}
\begin{definition}
  Given a kernel probability measure $\mu$ with kernel $f$ we denote by 
  \begin{equation}
    \E{\rho_n}[\mu] = \frac{\sum_{i>j} f(\xi_i,\xi_j)}{n(n-1)/2},
  \end{equation}
  the expected density of the measure $\mu$ on the grid $\left \{ (\xi_i, \xi_i)\; 1 \leq i,j \leq n \right\}$.
\end{definition}

\subsubsection{Stochastic Block Models
  \label{subsec:sbm}}
\noindent The stochastic block model \cite{abbe17} plays an important role in this work. We review the specific features of this model using the
  notations that were defined in the previous paragraphs. The key aspects of the model are: the geometry of the blocks, the
  within-community edges densities, and the across-community edge density.

  We assume that there exists a large constant $C$ that provides an upper bound on the number of communities; the actual number of non
  empty communities is denoted by $c \leq C$. \\
  
  {\noindent \bf The geometry} of the stochastic block model is encoded using the relative sizes of the communities. We denote by
$\bs \in [0,1]^c$ the vector of relative sizes of each of the $c$ blocks. We have $0 < s_k \leq 1$, and $\sum_{k=1}^c s_k  = 1.$\\

\noindent {\bf The edge density} within a non empty block $k$, is denoted by $p_k$. We can concatenate the $p_k$ into a vector
$\bp = \begin{bmatrix} p_1 \cdots p_c \end{bmatrix}$, which describes the within-block edges densities.  Finally, we denote by
$q$ the across-community edge density.

We can parameterize a stochastic block model using one representative of the equivalence class of kernel, $f$. We simply
consider the function $f$, which is piecewise constant over the blocks, and is defined by
\begin{align}
  f: [0,1] \times [0,1] & \longrightarrow (0,1)\\
  (x,y)& \longmapsto
         \begin{cases}
         p_k & \text{if} \quad \sum_{j=1}^{k-1}s_k  \leq x <  \sum_{j=1}^{k} s_k, \; \text{and} \quad 
         \sum_{j=1}^{k-1}s_k  \leq y <  \sum_{j=1}^{k} s_k,\\
         q & \text{otherwise.}
       \end{cases}
\end{align}
This piecewise constant function is called the canonical kernel associated to the block model with measure
$\mu$ (see, e.g. Fig.~\ref{fig:Ex-f}), and we denote it by $f(x,y,\bp,q,\bs)$. 

\begin{example}
  Given $\bm s = \begin{bmatrix} 1/2 & 1/4 & 1/4 & 0  \cdots \end{bmatrix}^T$ the values of $f(x,y; \bm p, q, \bm s)$ in the
  unit square are shown in Fig.~\ref{fig:Ex-f}
\end{example}
Given the edges densities $(\bp,q)$, a block geometry $\bs$, and a graph size $n$, we can generate a random realization of
  the stochastic block model with kernel $f(x,y,\bp,q,\bs)$ by sampling the entries of its adjacency matrix $\bA=\left
    (a_{ij}\right)$ according to, 
\begin{equation}
  a_{i,j} \sim \bern{f(\xi_i,\xi_j; \bm p, q, \bm s}, \; a_{ji} = a_{ij}, \; \text{and} \; a_{ii} = 0, \quad 1 \leq i < j \leq n.
\end{equation}

\noindent Using the generalized \ER notation a random graph from the stochastic block model is denoted by
\begin{equation}
G(n,f(x,y;\bm p, q, \bm s)).
\end{equation}

\begin{definition}
We denote by $\cK_{\SBM}$ the set of piecewise constant functions on $[0,1]\times [0,1]$ that are canonical kernels of
stochastic block models,
\begin{equation}
\cK_{\SBM} = \left\{ f(x,y; \bm p, q, \bm s) ; c \in \N, \bp \in [0,1]^c, q \in [0,1], \bs \in [0,1]^c, \sum_{k=1}^c s_k
  = 1\right\}
\end{equation}
\end{definition}
\begin{figure}[H]
\centerline{
  \includegraphics[width = 0.4\textwidth]{./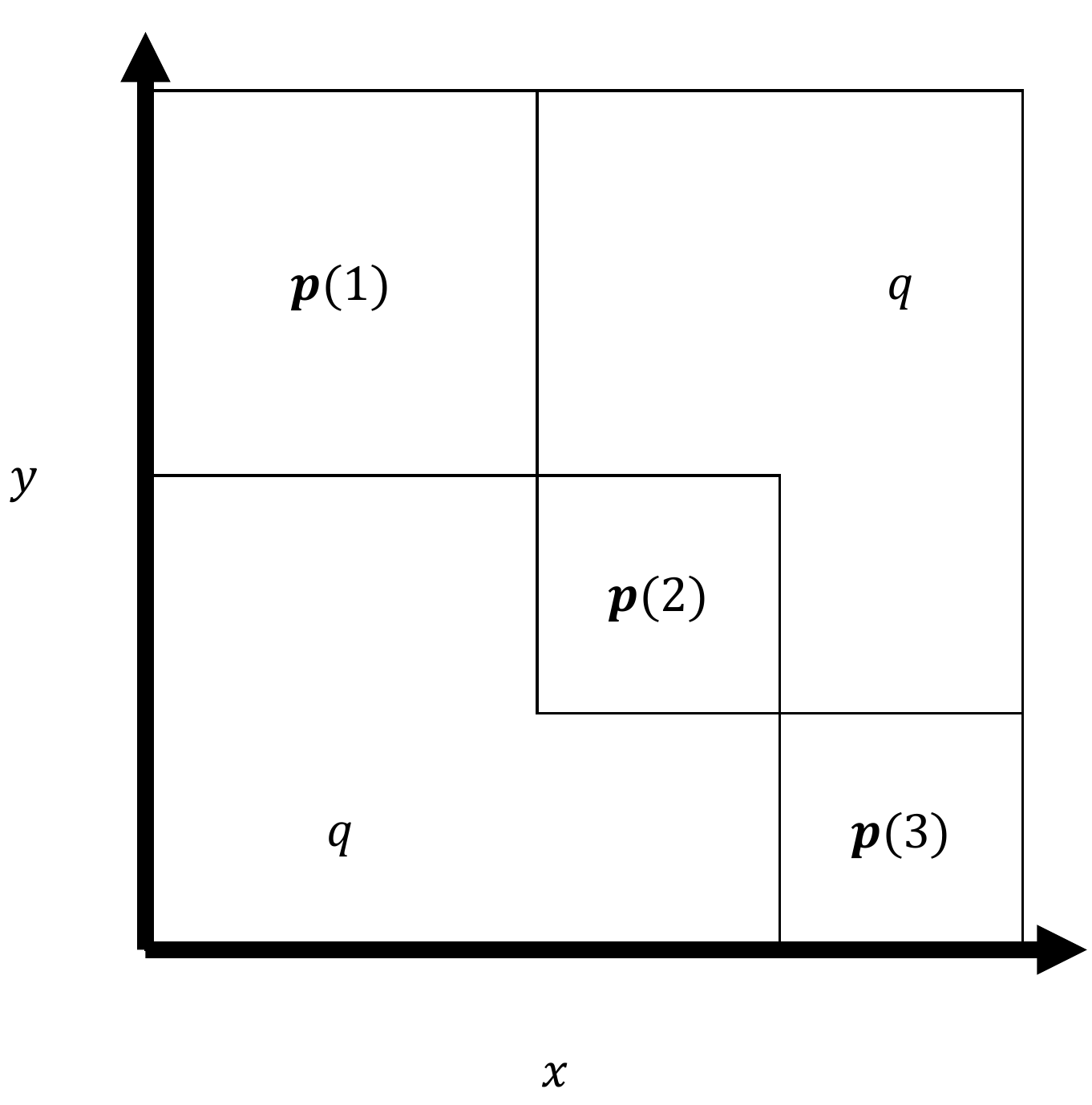}
  }
  \caption{Example $f(x,y; \bm p, q, \bm s)$}
  \label{fig:Ex-f}
\end{figure}
Defining the ensemble in this way allows for the smooth introduction of new communities by allowing the vector $\bs$ to
continuously increase in size. Additionally, defining the parameters independently of the number of nodes naturally allows
for large graph limits to be explored.

For a fixed $n$, and a canonical kernel $f_{\SBM} \in K_{\SBM}$, there exists a unique induced probability measure, which we
denote by $\mu_{\SBM}(\bm p,q,\bs, n)$. 
\begin{definition}
The set of all probability measures induced by the set $K_{\SBM}$ is denoted by
\begin{equation}
  \cM_{\SBM}(\cG) = \lbrace \mu_{\SBM} \vert \mu_{\SBM} \text{ is induced by } f_{\SBM} \in K_{\SBM}\rbrace.
  \end{equation}
\end{definition}
Sometimes we suppress the parameters and write $\mu_{\SBM} \in \cM_{\SBM}(\cG)$. 

\section{The Fr\'echet mean and Empirical Fr\'echet Mean
  \label{sec:FM&EFM}}
We consider $\cG$ the space of graphs defined on sets of $n$ vertices. We equip that space with the pseudometric defined by the
  $\ell_2$ norm between the spectra of the  respective adjacency matrices, $d_A$, (see (\ref{distance})). We consider a
  probability measure in $\mu \in \cM$ that describes the chances of obtaining a given graph when we sample $\cG$ according to
  $\mu$. Using $d_A$, we can quantify the spread of the graphs, and we can also define a notion of centrality, which gives the
  location of the average graph, according to   $\mu$.

\begin{definition}[Fr\'echet mean \cite{frechet48}]
  The Fr\'echet mean of the pseudometric space $(\cG,d)$, where $d$ is the pseudometric (\ref{distance}), equipped with
  probability measure $\mu$ is the set of graphs $G^*$ whose expected distance to $\cG$ is minimum,
  \begin{equation}
    \left\{G^* \in \cG\right\} =  \argmin{G \in \cG} \E{d^2(G,G_{\mu})}[\mu]. \label{def:FM-general}
    \end{equation}
  \end{definition}
  where $G_\mu$ is a random realization of a graph from the probability space $\left (\cG,\mu\right)$ , and the expectation
  $\E{d^2(G,G_{\mu})}[\mu]$ is computed with respect to the probability measure $\mu$.\\

  In this work, we assume that the Fr\'echet mean both exists and is unique. Therefore $\mathcal{M}^*$ is a singleton and we write
  the Fr\'echet mean as
\begin{align}
  G^* = \underset{G \in \cG}{\text{argmin }}\E{d^2(G,G_{\mu})}[\mu]. \label{def:FM} 
\end{align}

As we change $\mu$ we expect that, for a fixed $G$, $\E{d^2(G,G_{\mu})}[\mu]$ will change, and therefore the Fr\'echet mean
  $G^*$ will move inside $\cG$ for different choices of the probability measure $\mu$. $G^*$ plays the role of the center of
  mass, for the mass distribution associated with $\mu$. We make this dependency explicit by defining the Fr\'echet mean map. 

\begin{definition}[The Fr\'echet mean map]
Given the set of graphs $\cG$,  the Fr\'echet mean map assigns to a probability measure $\mu \in \cM(\cG)$ the corresponding Fr\'echet mean,
  \begin{align}
    \F(\mu) = \argmin{G \in \cG}\E{d^2(G,G_{\mu})}[\mu]. \label{eqn:FMM}
  \end{align}
\end{definition}

For a fixed $\mu$ we may write the Fr\'echet mean as $G^* = \F(\mu).$ The Fr\'echet mean map is of significant importance
throughout this paper and the map $\F$ will be referred to throughout many theorems, equations, and proofs.

In practice, the only information known about a distribution on $\cG$ comes from a sample of graphs. Therefore, we need a notion
of Fr\'echet sample mean, which is defined as follows. 
\begin{definition}[Empirical Fr\'echet mean] 
  Let $\mu$ be a probability measure on $\cG$, and let $\left\{G_i\right\}\; 1 \leq i \leq N$ be a random sample from the
  probability space $\left(\cG,\mu\right)$.  The empirical Fr\'echet mean is defined by
  \begin{align}
    G_N^* = \argmin{G \in \cG}\sum_{i=1}^Nd^2(G,G_i). \label{eqn:SFM}
  \end{align}
  We additionally assume that the empirical Fr\'echet mean exists and is unique for any given sample of graphs.  The dependence
  on $N$ is explicitly given but may be suppressed throughout the paper when it is obvious.
\end{definition}

\begin{remark}
  Because the pseudometric $d$ (see (\ref{distance})) in \eqref{eqn:SFM} is the $\ell^2$ distance between the spectra of $G$ and
  $G_i$ respectively, $G_N^*$ is the unique minimizer for \eqref{eqn:SFM}  if and only if
  \begin{equation}
    \sigma(G_N^*) = \frac{1}{N} \sum_{i=1}^N \sigma(G_i).
  \end{equation}
  There exists several computational algorithms to solve the inverse eigenvalue problem,
\begin{equation}
  G^*_N = \sigma^{-1} \left( \frac{1}{N} \sum_{i=1}^N \sigma(G_i) \right).
\end{equation}
Some recent work in this field is presented in \cite{BMB18, SK19}. The work in \cite{BMB18} involves knowledge of a graph's
adjacency matrix to generate different graphs with similar spectra. The work in \cite{SK19} is very closely related to the ideas
presented in this paper, however they utilize the spectra of the normalized Laplacian matrix and construct adjacency matrices
that will have similar spectra. This work is exciting but the algorithms presented are not currently fully supported with
theory.\\

Instead, we propose to solve the inverse eigenvalue problem over the space of sparse SBM graph models. Specifically, we
construct an SBM model, whose Fre\'echet mean is $\varepsilon$ away from $  G^*_N$, according to the truncated spectral pseudometric,
  \begin{equation}
d_{A_c}(G_N^*, G_{\SBM}^*) < \varepsilon.
\end{equation}
\end{remark}  
\section{Main Results}
We are now in position to state the main results of the paper. The next theorem constitutes the main theorem. 
\begin{theorem}[An approximation to the empirical Fr\'echet mean with respect to $d_{A_c}$]
  \label{thm:AEFM}
  ~\\
  Let $\mu \in \cM\left(\cG\right)$, and let $\left\{G_i\right\}\; 1 \leq i \leq N$ be a random sample of sparse graphs sampled 
from the probability  space $\left(\cG_s,\mu\right)$. Let 
  \begin{equation}
G_N^* = \argmin{G \in \cG}\sum_{i=1}^Nd_{A_c}^2(G,G_i)
\end{equation}
be the empirical Fr\'echet mean of the sample $\left\{G_i\right\}\; 1 \leq i \leq N$. Then, as the size $n$ of the graphs in
$\cG$ goes to infinity,
  \begin{equation}
\forall \varepsilon >0,\;    \exists \mu_{\SBM} \in  \cM_{\SBM}(\cG), \; \text{such that}\;     d_{A_c}(G^*_N,\F(\mu_{\SBM})) < \varepsilon.
\end{equation}
\end{theorem}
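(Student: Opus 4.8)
The plan is to chain together two approximations: first reduce the empirical Fréchet mean $G_N^*$ to its spectral average, then realize that spectral average (up to $\varepsilon$ in the truncated distance) by the Fréchet mean of an SBM measure. By the remark following the definition of the empirical Fréchet mean, $\sigma_c(G_N^*) = \tfrac1N\sum_i \sigma_c(G_i)$, so the target vector $\bm v := \tfrac1N\sum_i \sigma_c(G_i) \in \R^c$ is explicitly known. It therefore suffices to produce $\mu_{\SBM} \in \cM_{\SBM}(\cG)$ with $\|\sigma_c(\F(\mu_{\SBM})) - \bm v\|_2 < \varepsilon$ for $n$ large. Since each $G_i$ is sparse, the bulk of the adjacency spectrum is controlled and the top-$c$ eigenvalues are of order $n\rho_n$; the vector $\bm v$ lives in a predictable region of $\R^c$.

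Next I would use the second ingredient advertised in the introduction: given a target sequence of $c$ eigenvalues, one can recover an SBM whose (truncated) spectrum matches. Concretely, I would first identify $\F(\mu_{\SBM})$ for an SBM measure. Because $d_{A_c}$ depends only on the spectrum, and because the adjacency matrix of a large SBM concentrates (in spectral norm) around its expectation $\E[\bA]$ — this is exactly where the sparsity hypothesis $\rho_n = \Omega(\ln^3 n / n)$ enters, via a concentration inequality for sparse inhomogeneous random graphs — the top $c$ eigenvalues of a random SBM graph are, with high probability, within $o(n\rho_n)$ of the top $c$ eigenvalues of the rank-$\le C$ matrix $\E[\bA]$. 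Hence $\sigma_c(\F(\mu_{\SBM}))$ converges, as $n\to\infty$, to the ordered nonzero eigenvalues of the block-structure matrix determined by $(\bp, q, \bs)$. So the problem reduces to: choose $c' \le C$ block sizes $\bs$ and densities $(\bp, q)$ so that the $c$ largest eigenvalues of the associated $c'\times c'$ (weighted) block matrix, scaled by $n$, equal the prescribed values $\bm v$ (up to $\varepsilon/2$, absorbing the concentration error for $n$ large).

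The remaining step is this finite-dimensional inverse eigenvalue problem over $\cK_{\SBM}$. I would argue that the map $(\bp, q, \bs) \mapsto$ (top eigenvalues of the block matrix) has image containing a neighborhood of the relevant region of $\R^c$: with $c' = c$ communities, a diagonal-dominant choice (large $p_k$, small $q$) makes the block matrix's spectrum approximately $(n s_1 p_1, \ldots, n s_c p_c)$, and by tuning the $s_k p_k$ continuously (rescaling so $\sum s_k = 1$) one hits any target vector with distinct, appropriately-ordered, positive-enough entries; the sparsity/monotonicity constraints are satisfiable for large $n$ since $\bm v$'s entries scale like $n\rho_n$. A continuity/intermediate-value argument, or an explicit construction, closes this. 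I expect the main obstacle to be exactly this last step handled rigorously: showing the SBM spectral map is \emph{surjective onto the relevant target set} while simultaneously respecting the constraints ($s_k \ge 0$, $\sum s_k = 1$, densities in $(0,1)$, sparsity, integrality of community sizes $n s_k \in \N$), and controlling how the $o(n\rho_n)$ concentration error interacts with requiring the bound to hold for every $\varepsilon > 0$ — i.e. making the "as $n\to\infty$" quantifier order precise. The integrality of $ns_k$ is a genuine nuisance but is handled by rounding and noting the induced eigenvalue perturbation is $O(\rho_n)$, negligible against the $\Theta(n\rho_n)$ scale.
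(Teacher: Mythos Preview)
Your approach is plausible but takes a genuinely different route from the paper. The paper's proof is two lines: it invokes Conjecture~\ref{cjt:EFMS} to conclude that the empirical Fr\'echet mean $G_N^*$ is itself sparse, and then applies Theorem~\ref{thm:Density} (density of SBM Fr\'echet means in $\cG_s$) as a black box to $G_N^*$. Theorem~\ref{thm:Density} in turn is proved not by an inverse-eigenvalue construction but via the kernel-approximation machinery of \cite{OW14}: any sparse graph admits a kernel, that kernel is approximable in $L^2$ by an SBM kernel, and close kernels yield close operator spectra (Theorems~\ref{thm:USBM2}, \ref{thm:EqNorms}, \ref{thm:SpecConv}, \ref{thm:CtsMap}, \ref{thm:FMES}). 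Your route sidesteps the unproven Conjecture~\ref{cjt:EFMS} entirely by working directly with the spectral target $\bm v \in \R^c$, and replaces the kernel-theoretic density argument with a hands-on surjectivity claim for the map from SBM parameters to top-$c$ eigenvalues. This is more elementary and more constructive, and in principle needs no sparsity of $G_N^*$; the cost is that the surjectivity step you correctly flag as the main obstacle is a genuine gap in your sketch, whereas the paper offloads that burden to existing universality results and pays instead with an open conjecture. One minor slippage: you silently identify $\sigma_c(\F(\mu_{\SBM}))$ with $\sigma_c(\E{\bA})$, but the paper distinguishes $\E{\sigma_c(\bA)}$ from $\sigma_c(\E{\bA})$ and connects them via the continuous map of Theorem~\ref{thm:CtsMap}.
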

\begin{proof}
  The proof relies on theorem \ref{thm:Density} and conjecture \ref{cjt:EFMS}. Let $\left\{G_i\right\}\; 1 \leq i \leq N$ be a
  random sample from the probability space of sparse graph $\left(\cG_s,\mu\right)$. As a consequence of conjecture
  \ref{cjt:EFMS}, $G^*$, the Fr\'echet mean of $\left\{G_i\right\}\; 1 \leq i \leq N$ is sparse. We now apply
  theorem~\ref{thm:Density} to $G^*\in \cG_s $,
  \begin{equation}
  \forall \varepsilon > o, \; \exists \mu_{\SBM} \in  \cM_{\SBM}(\cG), \; \text{such that}\; 
    d_{A_c}(G^*,\F(\mu_{\SBM})) < \varepsilon.
  \end{equation}
  \qed
\end{proof}

The intuition for theorem \ref{thm:AEFM} comes from acknowledging that there are many distributions with the same first
moment. To utilize this idea, we need only to find a distribution $\nu$ whose first moment is the same as the empirical
Fr\'echet mean given the set of observed graphs $M$. 

The metric chosen in theorem \ref{thm:AEFM} is specifically $d_{A_c}$. This metric measures the difference in the largest $c$
eigenvalues of two graphs. It is well known that the largest $c$ eigenvalues of a stochastic block model with $c$ blocks
completely encodes the community structure. Since the community structure is a global property of the graph, we claim that other
global structures of the graphs are captured by the largest $c$ eigenvalues. As a consequence, we claim that the global
properties of the empirical Fr\'echet mean, $G_N^*$, are similar to the global properties of the approximate empirical Fr\'echet
mean, $G_{\SBM}^*$. 

Theorem \ref{thm:AEFM} shows that we need only to find the parameters of the suitable stochastic block model and compute its
Fr\'echet mean. The identification of the optimal set of parameters for the stochastic block model is a non-trivial problem. One
approach to determine  $(\bm p^*, q^*, \bs^*)$ is discussed in section \ref{sec:OptParam} though there may  be several methods
that could be used to determine these parameters that have yet to be explored. The result of theorem \ref{thm:AEFM} is one
application of the density result of theorem \ref{thm:Density}.   
\begin{theorem}[Approximation  of Fr\'echet means  in $\cG_s$ by stochastic block models]
  \label{thm:Density}
  As the size $n$ of the graphs in $\cG$  goes to infinity,
\begin{equation}
\forall \varepsilon > o, \forall G \in \cG_s , \; \exists \mu_{\SBM} \in  \cM_{\SBM}(\cG), \; \text{such that}\; 
    d_{A_c}(G,\F(\mu_{\SBM})) < \varepsilon.
  \end{equation}
\end{theorem}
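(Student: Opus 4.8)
The plan is to recast the statement as an inverse eigenvalue problem for canonical SBM kernels and to solve it with the help of spectral concentration of sparse random graphs. \textbf{Step 1 (reduction to matching an expected truncated spectrum).} Expanding the squared distance exactly as in the remark following the definition of the empirical Fr\'echet mean, for any $\nu\in\cM(\cG)$ the map $H\mapsto\E{d_{A_c}^2(H,G_\nu)}[\nu]$ equals $\|\sigma_c(H)-\E{\sigma_c(G_\nu)}[\nu]\|_2^2$ plus a constant independent of $H$; hence $\sigma_c(\F(\nu))$ is the truncated spectrum of a graph in $\cG$ that is nearest to $\E{\sigma_c(G_\nu)}[\nu]$. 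Since $\sigma_c(G)$ is itself such a vector, the triangle inequality shows it is enough to produce $\mu_{\SBM}\in\cM_{\SBM}(\cG)$ with $\bigl\|\sigma_c(G)-\E{\sigma_c(G_{\SBM})}[\mu_{\SBM}]\bigr\|_2<\varepsilon/2$.

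\textbf{Step 2 (the top $c$ eigenvalues of a sparse SBM are the spikes of its mean matrix).} For a canonical kernel $f(\cdot;\bp,q,\bs)$, write the grid probability matrix as $P_n=ZBZ^{\top}$ with $Z\in\{0,1\}^{n\times c}$ the membership matrix and $B\in(0,1)^{c\times c}$ carrying $p_k$ on the diagonal and $q$ off it; then $\E{\bA}[\mu_{\SBM}]=P_n-\operatorname{diag}(P_n)$ has at most $c$ nonzero eigenvalues, which equal those of $D^{1/2}BD^{1/2}$ up to a bounded perturbation, where $D=\operatorname{diag}(\lfloor s_1n\rfloor,\dots,\lfloor s_cn\rfloor)$. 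I will take the expected density of $\mu_{\SBM}$ comparable to that of $G$, i.e.\ of order $\Theta(\ln^{3}n/n)$ in the sparse regime, so that $\mu_{\SBM}$ is admissible, the mean degrees diverge, and the edge probabilities vanish; this places each nonzero eigenvalue of $\E{\bA}[\mu_{\SBM}]$ above the spectral edge $O(\sqrt{n\rho_n})$ of the centred matrix $\bA-\E{\bA}[\mu_{\SBM}]$. Using the concentration of the adjacency matrix of a sparse inhomogeneous random graph in this regime --- this is where $\rho_n=\Omega(\ln^{3}n/n)$ is used --- together with the Baik--Ben~Arous--P\'ech\'e/F\"uredi--Koml\'os description of outlier eigenvalues under a bounded-rank deformation, the top $c$ eigenvalues of $\bA$ concentrate on the corresponding eigenvalues of $\E{\bA}[\mu_{\SBM}]$ with fluctuations $O(\sqrt{\rho_n})=o(1)$; taking expectations, and absorbing the exceptional event on which $\|\bA-\E{\bA}[\mu_{\SBM}]\|$ is atypically large (its contribution is at most $\sqrt c\,n$ times a probability that is $n^{-\omega(1)}$), yields
\[
 \E{\sigma_c(G_{\SBM})}[\mu_{\SBM}] \;=\; \bigl(\text{eigenvalues of }D^{1/2}BD^{1/2}\bigr)\;+\;o(1), \qquad n\to\infty.
\]
It therefore suffices to choose $(\bp,q,\bs)$ so that the eigenvalue vector of $D^{1/2}BD^{1/2}$ lies within $\varepsilon/4$ of $\sigma_c(G)$ for all large $n$.

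\textbf{Step 3 (the inverse eigenvalue problem over $\cK_{\SBM}$).} This is the crux and, I expect, the main obstacle. The target $\sigma_c(G)\in\R^{c}$ has entries bounded by $\|\bA_G\|\le\sqrt{2m}=O(n\sqrt{\rho_n})$ and obeys the ordering/interlacing and trace-sign constraints of a genuine graph spectrum; one must solve $\operatorname{eig}(D^{1/2}BD^{1/2})=\sigma_c(G)$ (or $\varepsilon/4$-close) in the block sizes $n_1,\dots,n_c$ (positive integers summing to $n$) and in $p_1,\dots,p_c,q\in(0,1)$. A workable sub-family is the planted partition ($q$ small, the $p_k$ free): it decouples, to leading order, into $c$ nearly independent blocks, the $k$-th contributing a single spike $\approx n_kp_k$, so one matches $n_kp_k$ to the $k$-th target entry and treats the cross-block interaction as a perturbation; the balance of $2c-1$ free real parameters against $c$ equations is favourable, and letting $n\to\infty$ relaxes both the integrality and the range restrictions. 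Turning this into a proof --- bounding the cross-block terms, keeping every probability strictly inside $(0,1)$, and handling target vectors whose entries are negative, nearly coincident, or close to the SBM's own bulk edge --- is where the real work lies, and it is naturally packaged as a separate lemma, namely the second of the two intermediate results announced in the introduction.

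\textbf{Conclusion.} Choosing $\mu_{\SBM}$ as in Step 3 makes $\operatorname{eig}(D^{1/2}BD^{1/2})$ within $\varepsilon/4$ of $\sigma_c(G)$; Step 2 promotes this to $\|\sigma_c(G)-\E{\sigma_c(G_{\SBM})}[\mu_{\SBM}]\|_2<\varepsilon/2$ for $n$ large; and Step 1 then gives $d_{A_c}(G,\F(\mu_{\SBM}))<\varepsilon$, which is the assertion.
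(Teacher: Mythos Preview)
Your approach is genuinely different from the paper's. You reduce the problem to a parametric inverse eigenvalue problem over the SBM family (Step~3) and invoke random-matrix concentration (BBP/F\"uredi--Koml\'os) to link the SBM's expected truncated spectrum to the nonzero eigenvalues of its mean matrix (Step~2). The paper never solves an inverse spectral problem directly: it writes the target graph $G$ as $\F(\mu)$ for some kernel probability measure $\mu$, invokes the Olhede--Wolfe universality result (Theorem~\ref{thm:USBM2}) to approximate the kernel of $\mu$ by an SBM kernel in $L^2$, transfers kernel closeness to closeness of the integral-operator spectra (Theorem~\ref{thm:EqNorms}), and then pushes through a continuity map from $\sigma(\E{\bA})$ to $\E{\sigma(\bA)}$ (Theorem~\ref{thm:CtsMap}) together with the identification of the SBM Fr\'echet mean with the expected spectrum (Theorem~\ref{thm:FMES}). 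In short, the paper borrows \emph{existence} of a good SBM from a kernel-level density theorem and then chases it through spectral continuity, whereas you try to manufacture the SBM parameters explicitly from the target eigenvalues.

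There is, however, a genuine gap in your Step~3, which you yourself flag but do not close. Your planted-partition heuristic produces spikes $\approx n_kp_k>0$, so it cannot reach targets $\sigma_c(G)$ with negative coordinates, with nearly repeated coordinates, or with coordinates below the SBM's own bulk edge --- and a general $G\in\cG_s$ can exhibit all of these. Allowing disassortative blocks ($q>p_k$) helps with negativity, but then your ``nearly decoupled'' perturbation picture collapses because the off-diagonal of $B$ dominates and the spikes are no longer approximately $n_kp_k$. You defer this to ``a separate lemma,'' but that lemma is exactly what the paper sidesteps by routing through kernel approximation: Olhede--Wolfe hands you an SBM kernel close to \emph{some} kernel that generates $G$, so existence comes for free and only continuity is needed, not an inversion. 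The price the paper pays is the step ``any sparse $G$ may be written as $\F(\mu)$ for a kernel probability measure $\mu$ with H\"older kernel,'' which is asserted rather than argued (the trivial choice $\mu=\delta_G$ gives $G=\F(\mu)$ but is not a kernel measure in the sense of Theorem~\ref{thm:USBM2}). Your route would be cleaner and more constructive if Step~3 could be completed, but as written it is a programme rather than a proof; the paper's route trades that difficulty for a different soft spot at the very start of its chain.
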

\begin{proof}
  The proof is given in  \ref{proof-density}.
\end{proof}
Theorem \ref{thm:Density} states that for any sparse graph, there exists a stochastic block model who's first moment has a
similar spectrum. This density result is an extension of theorem \ref{thm:USBM2} in the appendix which is a result given in
\cite{OW14} but restated to fit the notations of this paper. The results in \cite{OW14} use a distance defined as follows, let
$f,g \in K$ be kernels for two different \textbf{kernel probability measures}, $\mu_f$ and $\mu_{g}.$ Let $\tau$ be a bijection
on the interval $[0,1]$ and let $\mathcal{T}$ be the set of all bijections on $[0,1]$. Define the distance $d_k$ as 
\begin{equation}
d_k(f,g) := \underset{\tau \in \mathcal{T}}{\text{inf }} \int \int_{(0,1)^2} \left|  f(\tau(x),\tau(y)) - g(x,y)\right|^2 dx dy.
\end{equation}
The result given in \cite{OW14} shows that for any sparse graph, there exists a kernel probability measure $\mu_k$ with kernel
function $k$ that generates the original sparse graph. It is then shown that the kernel $k$ may be approximated by a kernel
$k_{\SBM} \in K_{\SBM}$ when considering the distance $d_k$.

We are aware of two reasons the use of $d_k$ will lead to difficulties when computing the empirical Fr\'echet mean. Primarily,
the distance $d_k$ acts on a kernel rather than on a graph. As such, the kernel for a graph must be estimated. This is done in
\cite{OW14} when the graph is known, but in the case of the empirical Fr\'echet mean, the target graph whose kernel must be
found is unknown. Additionally, it is unclear how to easily compare the distance between multiple graphs, as needs to be done in
the empirical Fr\'echet mean problem since for each distance computation a specific bijection, $\tau^*$, needs to be
identified. When the graphs considered do not have node correspondence this becomes a non-trivial problem. One solution is to
assume the empirical Fr\'echet mean and the sample graphs have node correspondence but this restricts the applicability of the
theory.

The advantage to using $d_{A_c}$ is largely due to the ignorance $d_{A_c}$ has of the particular vertex labels in each
graph. Since the eigenvalues are consistent with respect to any permutation of the adjacency matrix we do not have to worry
about identifying a correct labelling of the nodes. Furthermore, since the distance $d_{A_c}$ acts directly on the graphs, there
is no intermediate step of identifying a kernel that best represents a given graph as in \cite{OW14}.

A necessary result for theorem \ref{thm:AEFM} is to show the empirical Fr\'echet mean meets the sparsity condition of theorem
\ref{thm:Density} so that it may be approximated by the Fr\'echet mean of a stochastic block model. Conjecture \ref{cjt:EFMS}
gives a states when these conditions are met. 
\begin{conjecture}[The empirical Fr\'echet mean of sparse graphs is sparse]
  \label{cjt:EFMS}
  Let $\mu$ be a probability measure in $\cM\left(\cG\right)$. Let $\left\{ G_i \right\} 1\leq i \leq N$ be a sample of $N$
  sparse graphs from $\left(\cG_s,\mu\right)$. We consider the empirical Fr\'echet mean computed according to the metric
  $d_{A_c}$ ,
  \begin{equation}
G_N^* = \argmin{G \in \cG}\sum_{i=1}^N d_{A}^2(G,G_i)
\end{equation}
Then, in the limit of large graph size $G_N^*$ is sparse.
\end{conjecture}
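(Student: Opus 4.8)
The plan is to reduce the statement to a truncated inverse eigenvalue problem and then attempt to force a density lower bound on its solution; the reduction is routine, and the density bound is where the real difficulty (and the reason this is only a conjecture) lies.

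\emph{Reduction.} Set $\bar{\sigma}_c = \frac{1}{N}\sum_{i=1}^{N}\sigma_c(G_i)$. The bias--variance identity $\sum_{i=1}^{N} d_{A_c}^2(G,G_i) = N\,\lVert\sigma_c(G)-\bar{\sigma}_c\rVert_2^2 + \sum_{i=1}^{N}\lVert\sigma_c(G_i)-\bar{\sigma}_c\rVert_2^2$ shows that $G_N^*$ is precisely the graph on $n$ vertices whose truncated spectrum best matches $\bar{\sigma}_c$. So it suffices to show: any $G\in\cG$ minimizing $\lVert\sigma_c(G)-\bar{\sigma}_c\rVert_2$ has density $\Omega(\ln^3 n/n)$, i.e.\ $\Omega(n\ln^3 n)$ edges.

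\emph{What the spectrum gives for free.} Each sample $G_i$ is sparse, so its mean degree $\rho(G_i)(n-1)$ is $\Omega(\ln^3 n)$, and since $\lambda_1(\bA_i)$ dominates the mean degree, $(\bar{\sigma}_c)_1 = \frac{1}{N}\sum_i\lambda_1(\bA_i)=\Omega(\ln^3 n)$, hence $\lVert\bar{\sigma}_c\rVert_2=\Omega(\ln^3 n)$. Because each $G_i$ is itself a competitor in the minimization, $\lVert\sigma_c(G_N^*)-\bar{\sigma}_c\rVert_2 \le \min_i\lVert\sigma_c(G_i)-\bar{\sigma}_c\rVert_2$; combining this with the trace identity $2m(G)=\sum_{k=1}^{n}\lambda_k(\bA)^2 \ge \lVert\sigma_c(G)\rVert_2^2$ yields $2m(G_N^*)\ge \bigl(\lVert\bar{\sigma}_c\rVert_2-\min_i\lVert\sigma_c(G_i)-\bar{\sigma}_c\rVert_2\bigr)^2$. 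In the regime $\rho_n=\Omega(\ln^3 n/n)$ one has enough spectral concentration that the subtracted term is of lower order than $\lVert\bar{\sigma}_c\rVert_2$, so this certifies $m(G_N^*)=\Omega(\ln^6 n)$ edges.

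\emph{The obstacle.} The bound just obtained, read off from the top $c$ eigenvalues, saturates at $\Omega(\ln^6 n)$ edges, while membership in $\cG_s$ demands $\Omega(n\ln^3 n)$; closing this factor-$n/\ln^3 n$ gap is the entire content of the conjecture, and the reason it is not a theorem. The map $G\mapsto\sigma_c(G)$ is simply not robust with respect to edge count --- a star and an \ER graph in the sparse regime can share their leading eigenvalue while differing in density by a factor $n/\ln^3 n$ --- so no estimate depending only on the leading few entries of $\bar{\sigma}_c$ can certify that $G_N^*$ is sparse; the missing spectral mass lives in the bulk, which $\sigma_c$ discards. To finish one would need either (i) a concentration hypothesis on the sample strong enough that $\bar{\sigma}$ is, up to $o(\sqrt{n\rho_n})$ corrections, the spectrum of an honest sparse graph, together with a structural argument that the minimizer must reproduce that graph's \emph{bulk} and not merely its top $c$ values; or (ii) to route through Theorem~\ref{thm:Density}, placing $G_N^*$ within $\varepsilon$ of some $\F(\mu_{\SBM})$, checking that the stochastic block model returned by the procedure of Section~\ref{sec:OptParam} is by construction in the sparse regime, and transferring the density of its Fr\'echet mean --- whose spectrum concentrates on that of a $\Theta(n\rho_n)$-scaled matrix of rank at most $c$ --- back to $G_N^*$. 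In either route the hard step is the one the spectral-norm argument cannot see: controlling the \emph{density}, not just a handful of eigenvalues, of an exact spectral minimizer.
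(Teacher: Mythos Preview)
The paper offers no proof of this statement --- it is explicitly labelled a conjecture, and the remark immediately following it says ``we do not have a proof at the time of writing.'' Your proposal is therefore not being compared against a proof but against the paper's \emph{explanation of why no proof exists}. On that comparison you do well: you correctly recognise that the statement is open, and your analysis of the obstruction is sharper than the paper's. The bias--variance reduction to a truncated inverse eigenvalue problem and the partial edge bound $m(G_N^*)=\Omega(\ln^6 n)$ are concrete contributions the paper does not make; the authors only remark that relationships between degrees and eigenvalues are too coarse, and that their attempt via stochastic block models (your route~(ii)) bottomed out in the numerical root-finding of Theorem~\ref{thm:ExEigs}, which they could not convert into an analytic density estimate.

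One caution on route~(ii): as you have written it, it is circular. Theorem~\ref{thm:Density} takes $G\in\cG_s$ as a hypothesis, so invoking it to place $G_N^*$ near some $\F(\mu_{\SBM})$ already assumes the conjecture. The paper's remark records essentially the same dead end: they tried to prove the narrower claim that the Fr\'echet mean of a sample of sparse SBMs is sparse (which would let one close the loop), but the spectrum of that Fr\'echet mean is accessible only through the numerical procedure of Theorem~\ref{thm:ExEigs}, and no density bound could be extracted. Your route~(i), demanding control of the bulk spectrum rather than just the top $c$ entries, is the honest statement of what is missing, and it matches the paper's diagnosis that ``we do not have a fine characterization of the spectra of sparse graphs.'' In short: neither you nor the authors prove the conjecture, but your write-up isolates the gap more precisely.
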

  \begin{remark}
    In the case where the distance is the edit distance, then there is a simple constructive proof. In this work, we use a
    norm based on the eigenvalues of the adjacency matrices, $d_{A_c}$, and we do not have a proof at the time of writing.  While
    there are several relationships between the degrees and the eigenvalues, we do not have a fine characterization of the
    spectra of sparse graphs. Using the density of stochastic block models in the space of sparse graphs, we have tried to prove
    a simpler result: the Fr\'echet mean of a sample of sparse stochastic block models is sparse. Unfortunately, the state of
    the art techniques to compute the spectrum of the Fr\'echet mean of a stochastic block model relies on numerical algorithms
    to compute eigenvalues and eigenvectors in addition to a root finding procedure, which we have not been able to use to prove
    that the density of the Fr\'echet mean is in fact sparse.
\end{remark}
\section{Identifying $(\bm p^*,q^*, \bs^*)$ for theorem 1
\label{sec:OptParam}}
While the result of theorem \ref{thm:AEFM} is interesting in its own right, this section provides a numerical method to
determine the optimal set of parameters $(\bm p^*,q^*, \bs^*)$ so the theorem may be used in practice. First we equate the
search over the space of graphs with a search over the set of distributions. Let $\mu$ be a probability measure on $\cG$. Let $M
= \lbrace G_i \rbrace_{i = 1}^N$ be an iid sample distributed according to $\mu$. Let $G_N^*$ be the empirical Fr\'echet mean
and let $m^*$ be the objective value of the empirical Fr\'echet mean problem, equation (\ref{eqn:SFM}), evaluated at $G_N^*$. 

Since there exists many distributions $\nu^* \in \cM (\cG)$ with the property $\F(\nu^*) = G_N^*$, we may find any distribution with this property. This observation allows us to rewrite equation (\ref{eqn:SFM}) in the following way. 
\begin{align}
  m^*     &= \underset{G \in \cG}{\text{min }}\sum_{i = 1}^N d^2_{A_c}(G,G_i) \label{eqn:FM-D-orig}\\ 
          &= \sum_{i=1}^N d_{A_c}^2(G_N^*,G_i)\\
          &= \sum_{i=1}^N d_{A_c}^2(\F(\nu^*),G_i)\\
          &= \underset{\nu \in \cM (\cG)}{\text{min }}\sum_{i = 1}^N d^2_{A_c}(\F(\nu),G_i) \label{eqn:FM-D}
\end{align}
At this point it is worth briefly mentioning this re-characterization of the optimization procedure is not unique to the metric space of graphs. This change of space, from the metric space to the space of probability distributions, can be applied to any metric space though whether this change is helpful in solving the optimization problem is unknown.

By theorem \ref{thm:AEFM}, if $G_N^*$ is sparse then we may approximate it by taking the Fr\'echet mean of a suitable stochastic
block model. Therefore we may restrict the search space in equation (\ref{eqn:FM-D}) from $\cM (\cG)$ to $\cM_{\SBM}(\cG)$ if we
allow for an approximate solution. The approximate solution of equation (\ref{eqn:FM-D}) is the solution to the following
minimization problem. 
\begin{align}
  m_{\SBM}^* = \underset{\nu_{\SBM} \in \cM_{\SBM}(\cG)}{\text{min }}\sum_{i = 1}^N d^2_{A_c}(\F(\nu_{\SBM}),G_i). \label{eqn:FM-D1}
\end{align}
By taking the argmin of equation (\ref{eqn:FM-D1}) we identify the correct stochastic block model, 
\begin{align}
  \nu_{\SBM}^* = \underset{\nu_{\SBM} \in \cM_{\SBM}(\cG)}{\text{argmin }}\sum_{i = 1}^N d^2_{A_c}(\F(\nu_{\SBM}),G_i) \label{eqn:FM-D2}
\end{align}
and only need to evaluate $\F(\nu_{\SBM}^*)$ to determine the approximate empirical Fr\'echet mean,
\begin{equation}
G_{\SBM}^* = \F(\nu_{\SBM}^*).
\end{equation}

The change of the space from $\cG$ to $\cM (\cG)$ is motivated by ideas present in \cite{AGS08} which shows that the space of probability measures has curvature. This curvature is essential for searching $\cG$ in a principled way. Rather than rely on the Wasserstein metric between distributions, we restricted $\cM(\cG)$ to the class of stochastic block models, $\cM_{\SBM}(\cG)$, where a Euclidean distance between the parameters is sufficient.

By restricting to the subset of distributions associated with stochastic block models, $\cM_{\SBM}(\cG)$, the search over the distributions is equivalent to a search over the parameters $\bm p, q, \text{ and } \bs$. The equivalent problem to equation (\ref{eqn:FM-D2}) in terms of the parameters is
\begin{align}
  (\bm p^*, q^*, \bs^*) = \underset{\bm p, q, \bs}{\text{argmin }} \sum_{i=1}^N d_{A_c}^2(\F(\nu(\bm p, q, \bs)), G_i). \label{eqn:AFM-P}
\end{align}

Before implementing any optimization procedure we simplify the objective by utilizing some theory. Motivated by the practical implementation of this work, the theory in \cite{OW14} gives a method to determine the entries of $\bs^*$ for a finite $n$ and known number of communities $c$. In algorithm \ref{alg:DetC} we outline a heuristic approach to determine $c$ which, when coupled with a finite graph, determines the entries of $\bs^*$. It is worth noting that any heuristic algorithm that estimates the number of communities in a graph is sufficient.

\begin{algorithm}[H]
  \caption{Determine $c^*$ for the approximate empirical Fr\'echet mean}
  \label{alg:DetC}
  \begin{algorithmic}[1]
    \Require Set of graphs, $M$, and integer $K$
    \State Compute the average spectrum of graphs in $M$ as $ \bm \bar{\lambda}$.
    \State Initialize $i = 0$.
    \State \textbf{Do}
    \State \hspace{0.5cm} $i=i+1$
    \State \hspace{0.5cm} Initialize $r = \bm \bar{\lambda}(i)$
    \State \hspace{0.5cm} Initialize the semi-circle probability density function, \cite{ACK15}, as $s(\lambda;r)$ where $r$ is the radius.
    \State \hspace{0.5cm} Assume $\bm \bar{\lambda}(j) \sim s(\lambda;r)$ for $j = i,...,n$. Distributed according to $s(\lambda;r)$
    \State \hspace{0.5cm} Create the pdf of the $K$ largest order statistics with a sample size $n-i$, $\lambda_{(n-i)},...,\lambda_{(n-i-K+1)}$
    \State \hspace{0.5cm} Numerically compute the expected value of the $K$ largest order statistics from the pdf $s(\lambda;r)$ with a sample size of $n-i$.
    \State \hspace{0.5cm} With sample size $n-i$ numerically compute the standard deviation of the $K$ largest order statistics, $\sigma_{n-i},...,\sigma_{n-i-K+1}$\\
    \textbf{While} $\vert \bm \bar{\lambda}(1+i) - \E{\lambda_{(n-i)}}\vert >\sigma_{n-i} \lor ... \lor \vert \bm \bar{\lambda}(K+i) - \E{\lambda_{(n-i-K)}}\vert >\sigma_{n-i-K+1}$
    \State \textbf{Return:} $c^* = i-1$
  \end{algorithmic}
\end{algorithm}
This algorithm assumes that the eigenvalues of the empirical Fr\'echet mean can be partitioned into bulk eigenvalues and extremal eigenvalues, refer to \cite{ACK15,FFHL19,T12} for further information on bulk eigenvalue distributions. The algorithm presented then detects when $K$ sequential eigenvalues are drawn from the distribution of the bulk and states that the number of extremal eigenvalues is equal to the number of communities present in the graph. We assume the shape of the bulk follows the classic semi-circle law. While the shape of the bulk is important, the crucial part of the distribution to consider in this algorithm is the shape at the edge of the bulk since this region impacts the location of the largest order statistics. Note that in practice, for any heuristic algorithm used, it may be useful to implement an upper bound on the result of the algorithm to limit the number of communities.

The knowledge of $c^*$ dictates the number of non-zero entries in both $\bs^*$ and $\bm p^*$. To further simplify the objective we restrict the class of stochastic block models considered to ones where all communities are equal sized with the exception that one community is allowed to be larger. Since the density result in theorem \ref{thm:Density} holds with this restricted ensemble, as shown in the appendix, we may now uniquely determine the entries of $\bs^*$. This restricted class of stochastic block models means that for $n < \infty$, and given $c^*$, the number of vertices may be written as
\begin{equation}
n = c^* w + r, \qquad w,r \in \mathbb{N}.
\end{equation}

We then set the non-zero entries of $\bs^*$ as 
\begin{align}
  \bs^*(1) &= \frac w n + \frac r n\\
  \bs^*(i) &= \frac{w}{n} \quad i = 2,...,c^*. \label{eqn:optS}
\end{align}
This reduces equation (\ref{eqn:AFM-P}) to
\begin{align}
  (\bm p^*, q^*) = \underset{\bm p, q}{\text{argmin }} \sum_{i=1}^N d_{A_c}^2(\F(\nu(\bm p, q, \bs^*)), G_i). \label{eqn:AFM-P1}
\end{align}
A further simplification to equation (\ref{eqn:AFM-P1}) is due to the following conjecture.
\begin{conjecture}
  \label{cjt:AvgSpar}
  \textbf{The empirical Fr\'echet mean with metric $d_{A_c}$ inherits the euclidean averaged density of the observed graphs} \hfill \\
  Let $\mu$ be a probability measure on $\cG$. Let $M = \lbrace G_i \rbrace_{i = 1}^N$ be an iid sample distributed according to $\mu$ such that for all $i$, $G_i \in \cG_s$. For the metric $d_{A_c}$ let 
  \begin{equation}
G_N^* = \argmin{G \in \cG}\sum_{i=1}^N[d_{A_c}^2(G,G_i)]
\end{equation}

  be the empirical Fr\'echet mean with density $\rho^*$. Let $\rho_i$ denote the density of $G_i$ and denote the average density of the observed graphs as $\bar \rho = \frac 1 N \sum_{i=1}^N \rho_i,$ then
  \begin{equation}
\rho^* = \bar \rho.
\end{equation}

\end{conjecture}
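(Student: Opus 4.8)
The plan is to convert the claim about edge densities into a claim about the squared spectral norm, use the spectral characterisation of the empirical Fr\'echet mean, and then isolate the contribution of the $c$ largest eigenvalues (which is all the pseudometric $d_{A_c}$ controls) from that of the remaining ``bulk'' eigenvalues. \textbf{Step 1, density as spectral mass.} For any graph $G$ with adjacency matrix $\bA$ one has $\|\sigma(G)\|_2^2 = \sum_{k=1}^n\lambda_k^2 = \mathrm{tr}(\bA^2) = \sum_{i,j}a_{ij}^2 = 2m$, since $\bA$ is symmetric with zero diagonal; hence $\rho(G) = \|\sigma(G)\|_2^2/(n(n-1))$. Consequently $\bar\rho = \frac{1}{n(n-1)}\,\frac1N\sum_{i=1}^N\|\sigma(G_i)\|_2^2$ and $\rho^* = \|\sigma(G_N^*)\|_2^2/(n(n-1))$, so the conjecture is \emph{exactly} equivalent to $\|\sigma(G_N^*)\|_2^2 = \frac1N\sum_{i=1}^N\|\sigma(G_i)\|_2^2$.

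\textbf{Step 2, reduction to the bulk mass.} By the remark following the definition of the empirical Fr\'echet mean, any minimiser of $\sum_i d_{A_c}^2(G,G_i)$ satisfies $\sigma_c(G_N^*) = \overline{\sigma_c} := \frac1N\sum_i\sigma_c(G_i)$. Writing $R^* = \sum_{k>c}\lambda_k(G_N^*)^2$, $R_i = \sum_{k>c}\lambda_k(G_i)^2$, and using the decomposition $\frac1N\sum_i\|\sigma_c(G_i)\|_2^2 = \|\overline{\sigma_c}\|_2^2 + V_c$ with $V_c := \frac1N\sum_i\|\sigma_c(G_i)-\overline{\sigma_c}\|_2^2 \ge 0$, Step~1 shows the conjecture is equivalent to the single scalar identity $R^* = V_c + \frac1N\sum_i R_i$: the bulk mass of the empirical Fr\'echet mean must equal the average bulk mass of the sample plus the top-$c$ spectral variance of the sample. \textbf{Step 3, the sample side.} In the regime $\rho_n = \Omega(\ln^3 n/n)$, for the kernel/SBM-type measures of interest the $c$ largest eigenvalues of a graph in $\cG_s$ are of lower order, $O(n\rho_n)$, so $\|\sigma_c(G)\|_2^2 = O(n^2\rho_n^2) = o(n^2\rho_n)$ while $2m = \|\sigma(G)\|_2^2 = \Theta(n^2\rho_n)$; hence $\|\overline{\sigma_c}\|_2^2$ and $V_c$ are both $o(n^2\rho_n)$ and $R_i = 2m_i(1+o(1))$. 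Moreover, for iid graphs $m_i$ is a sum of independent Bernoulli variables of mean $\Theta(n^2\rho_n)$, so it concentrates with relative error $O((n\ln^3 n)^{-1/2}) = o(1)$, whence $\frac1N\sum_i R_i = n(n-1)\bar\rho\,(1+o(1))$. So the target reduces, to leading order, to showing that the empirical Fr\'echet mean has $\approx\bar m$ edges.

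\textbf{Step 4, the mean side --- and the main obstacle.} The difficulty, and the reason this is stated as a conjecture, is that $d_{A_c}$ constrains only $\sigma_c(G_N^*)$, so the bulk mass $R^*$ (equivalently the edge count $m^*$) is \emph{not} pinned down by the minimisation problem: there are graphs with the correct top-$c$ spectrum and widely different densities. I would attack this through Theorem~\ref{thm:AEFM} and the construction of Section~\ref{sec:OptParam}: realise $G_N^*$ up to $\varepsilon$ in $d_{A_c}$ by $\F(\nu_{\SBM}^*)$ with $\nu_{\SBM}^*$ as in (\ref{eqn:FM-D2}); use matrix concentration (valid precisely because $\rho_n\gtrsim\ln^3 n/n$) to say that a Fr\'echet mean of an SBM measure can be taken to be a graph whose adjacency matrix lies within operator norm $O(\sqrt{n\rho_n})$ of the block-constant mean $P$, so that its spectral mass equals $\mathrm{tr}(P^2)+\sum_{i<j}2f(\xi_i,\xi_j)(1-f(\xi_i,\xi_j)) = n(n-1)\,\E{\rho_n}[\nu_{\SBM}^*]\,(1+o(1))$; and finally argue that, by Steps~1--3, the objective in (\ref{eqn:FM-D2}) is minimised over densities at $\E{\rho_n}[\nu_{\SBM}^*]\to\bar\rho$. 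Making this last step rigorous is exactly where the present techniques fall short, since (as noted in the remark after Conjecture~\ref{cjt:EFMS}) the spectrum of the Fr\'echet mean of an SBM is currently only accessible numerically; a cleaner alternative would be to strengthen the notion of empirical Fr\'echet mean so that its \emph{full} spectrum equals $\bar\sigma = \frac1N\sum_i\sigma(G_i)$, in which case Step~4 collapses to the identity $\|\bar\sigma\|_2^2 = \frac1N\sum_i\|\sigma(G_i)\|_2^2 - \frac1N\sum_i\|\sigma(G_i)-\bar\sigma\|_2^2$ combined with the spectral concentration (eigenvalue rigidity) already used in Step~3.
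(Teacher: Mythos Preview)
The paper does not prove this statement: it is explicitly labelled a \emph{conjecture} and is used only as a heuristic constraint to reduce the parameter search (fixing $q$ once $\bm p$ is chosen). There is therefore no ``paper's own proof'' to compare against; you are attempting to close something the authors leave open.

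Your Steps~1--3 are a clean and correct reduction. The identity $\|\sigma(G)\|_2^2 = 2m$ is exactly right, and the decomposition in Step~2 isolates precisely what must be shown. Your Step~3 estimate that $\|\sigma_c(G)\|_2^2 = O(n^2\rho_n^2) = o(n^2\rho_n)$ is also correct in the sparse regime.

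The genuine gap is Step~4, and you name it honestly. But I would push harder on the implication of your own Step~3: it shows that the objective $\sum_i d_{A_c}^2(G,G_i)$ depends \emph{only} on $\sigma_c(G)$ and is completely insensitive to $R^* = \sum_{k>c}\lambda_k(G)^2$, hence to the edge count $m^*$, hence to $\rho^*$. Consequently the set of minimisers in $\cG$ contains graphs of many different densities (any graph whose top-$c$ spectrum equals, or is closest to, $\overline{\sigma_c}$ is a minimiser, regardless of how many edges it has in its bulk). The paper's standing assumption that the empirical Fr\'echet mean is unique is therefore not innocuous here: without an explicit tie-breaking rule that selects a particular density from this set, the identity $\rho^* = \bar\rho$ cannot follow from the minimisation problem alone. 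Your proposed route through Theorem~\ref{thm:AEFM} and (\ref{eqn:FM-D2}) does not escape this, because those results also use $d_{A_c}$ and inherit the same density-blindness; at best they show that \emph{one can choose} an approximating SBM with expected density $\bar\rho$, not that the Fr\'echet mean must have it. Your closing suggestion --- strengthen the definition so that the \emph{full} spectrum of $G_N^*$ equals $\bar\sigma$ --- is the honest fix, and under it the claim becomes the parallelogram-type identity you wrote down, together with showing that the sample variance term $\frac1N\sum_i\|\sigma(G_i)-\bar\sigma\|_2^2$ is $o(n^2\rho_n)$; but that changes the statement being proved.
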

Conjecture \ref{cjt:EFMS} states the empirical Fr\'echet mean of sparse graphs is sparse. The statement here is stronger. In conjecture \ref{cjt:AvgSpar} we claim that the average density of the observed graphs is the density of the empirical Fr\'echet mean. This conjecture gives a necessary conditions on one of the parameters and so for any choice of $\bm p$, the parameter $q$ is chosen to meet the expected density requirement of conjecture \ref{cjt:AvgSpar}. These results reduce equation (\ref{eqn:AFM-P1}) to the final simplified form in equation (\ref{eqn:Conv-P}).

\subsection{Simplified objective and numerical algorithm}
We now restate the approximate Fr\'echet mean problem in its simplified form and present an algorithm to determine the solution numerically.

Let $\mu$ be a probability measure on $\cG$. Let $M = \lbrace G_i \rbrace_{i = 1}^N$ be an iid sample distributed according to $\mu$ such that 
for all $i$, $G_i \in \cG_s$. Let $\rho_i$ denote the density of each observed graph with $\bar{\rho}$ denoting the euclidean averaged density of the observed graphs. For any choice of parameters $\bm p, q, \bs$, let $f(x,y; \bm p, q, \bs)$ be the canonical kernel function and $\nu_{\SBM}$ denote the associated distribution. Let $\E{\rho \vert f}$ denote the expected density of graphs sampled from the $\nu_{\SBM}$.
\begin{align}
  \bm p^* &= \underset{\E{\rho \vert f} = \bar \rho}{\underset{\bm p}{\text{argmin }}}\sum_{i=1}^N d^2_{A_c}(\F(\nu(\bm p,q,\bs^*)),G_i) \label{eqn:Conv-P}\\
  G_{\SBM}^* &= \F(\nu(\bm p^*, q^*, \bs^*))
\end{align}
We claim equation (\ref{eqn:Conv-P}) is convex and can be minimized by taking projected gradient descent steps. As we do not have access to the derivative of the objective we instead use a simple second order centered difference numerical approximation which leads to the following algorithm to determine the empirical Fr\'echet mean.

\begin{algorithm}[H]
  \caption{Approximating the empirical Fr\'echet mean}
  \label{alg:DetFM_C}
  \begin{algorithmic}[1]
    \Require Set of graphs, $M = \lbrace G_i \rbrace_{i = 1}^N$
    \State Compute the average density $\bar{\rho}_n$ of the graphs in $M$
    \State Compute $c^*$ via algorithm \ref{alg:DetC} and determine $\bs^*$ via equation (\ref{eqn:optS}) 
    \State Initialize $\bm p$
    \State Initialize $q$ such that $\bar{\rho} = \E{\rho \vert f}$
    \While{Relative change in $\bm p$ and $q$ is large}
    \State {\hspace{1cm} Estimate the gradient of $\sum_{i = 1}^N d^2_{A_c}(\F(\nu(\bm p, q, \bs^*)),G_i)$ via centered differences using equation (26) of \cite{FFHL19} when necessary to determine the spectrum}
    \State{\hspace{1cm}Take a projected gradient descent step to update $\bm p$}
    \State {\hspace{1cm} Determine $q$ such that $\bar{\rho} = \E{\rho \vert f}$ with updated $\bm p$}
    \EndWhile
    \State {Estimate $G_N^* = \F(\nu(\bm p^*,q^*, \bs^*))$ via theorem \ref{thm:FiniteSampFM}}\\
    \textbf{Return: }$G_N^*$.
  \end{algorithmic}
\end{algorithm}

\section{Application to Regression}
\label{sec:Reg}
In this section we provide an application of the computation of the empirical Fr\'echet mean: the construction of a regression
function in the context where we observe graphs that depend on a real-valued random variable. We follow the approach described
in \cite{PM19}, and we replace the computation of the Fr\'echet mean with our algorithm. We consider the following scenario. Let
$\mu \in \cM\left( \cG \right)$, and let $T$ be a random variable with probability density $\prob{t}[T]$. We consider the 
random variable formed by the pair $G$ and $T$, distributed with the joint distribution formed by the product $\mu \times
\prob{}[T]$.  We wish to compute the regression function
\begin{equation}
\E{G \vert T = t}.
\end{equation}
The authors in \cite{PM19} propose to compute the following regression function
\begin{align}
  m(t) = \argmin{G \in \cG}\E{s(T,t)d^2(G,G_{\mu})}[\mu \times \prob{}[T]] \label{eqn:FR}
\end{align}
where the expectation in \eqref{eqn:FR} is computed jointly over $G_\mu$ distributed according to $\mu$, and $T$, distributed
according to $\prob{}[T]$, and the bilinear form $s$ is defined by
\begin{equation}
s(T,t) = 1 + (T - \E{T}) \left[ \var{T} \right] ^{-1}(t-\E{T}).
\end{equation}
The bilinear form $s(T,t)$ plays the role of a kernel, returning the location of $t$ with respect to the location ($\E{T}$) and
scale $\var{T}$) of $T$. The regression function $m(t)$ returns a kernel estimate of the linear regression function by summing
over all the possible pairs $(G_\mu,T)$. We note that the regression function returns the Fr\'echet mean $G^*$, when evaluated
at $t=\E{T}$.  

Given a finite sample, $M = \lbrace (t_i,G_i) \rbrace_{i=1}^N$ from $\mu$, we would like to estimate

The proposed estimation function in 
for any arbitrary distance $d$. The objective in equation (\ref{eqn:FR}), when considering a Euclidean metric for data in $R^d$
rather than graphs in $\cG$, computes the classic linear regression solution found from least squares. The weight function,
$S(T,t)$ computes some measure of how far the data is from the expected value and weights the observations accordingly with the
property that $\E{S(T,\E{T})} = 1$. Note that when $S(T,t) = 1,$ equation (\ref{eqn:FR}) reduces to the standard definition of
the Fr\'echet mean. The empirical estimate of equation (\ref{eqn:FR}) is the natural estimate where each unknown term is
replaced with the empirical alternative as
\begin{align}
  \hat{m}(t) = \argmin{G \in \cG} \sum_{i=1}^N s_{i,N}(t)d^2(G,G_i) \label{eqn:EFR}
\end{align}
where 
\begin{equation}
s_{i,N}(t) = 1 + (t_i - \bar{T})\hat{V}(t - \bar{T}).
\end{equation}

Here we have used $\bar{T}$ and $\hat{V}$ as the empirical estimate of the mean and variance of $T$. The objective in
(\ref{eqn:EFR}) can be interpreted as a weighted empirical Fr\'echet mean with weight function $s_{i,N}(t)$.  

Assume $G_i \in \cG_s$ and when choosing the distance function as $d_{A_c}$ that
\begin{equation}
\hat{m}(t) \in \cG_s.
\end{equation}

Then by theorem \ref{thm:AEFM}, we may estimate the values of $\hat{m}(t)$ by the Fr\'echet means of appropriate stochastic
block models. Theorem 1 says that for every $t \in [0,1]$, and for every $\varepsilon > 0$, there exists a set of parameters for
the stochastic block model dependent on $t,$  
\begin{equation}
(\bm p^*(t), q^*(t), \bs^*(t))
\end{equation}

such that in the limit of large graph size \begin{equation}
d_{A_c}(\hat{m}(t),\F(\mu_{\SBM}(\bm p^*(t), q^*(t), \bs^*(t)))) < \varepsilon.
\end{equation}

For every time $t$, we have related the value of the regression function $\hat m(t)$ with a set of parameters for the stochastic
block model where the Fr\'echet mean of the stochastic block model is close, with respect to $d_{A_c}$, to the optimal graph
$\hat m(t)$.

To evaluate the regression function $\hat m(t)$ an empirical Fr\'echet mean problem must be solved. Therefore, the computation
time for any time $t$ is equivalent to the speed at which we can determine the optimal set of parameters for a stochastic block
model. In section \ref{sec:OptParam} we discuss one approach to identify the optimal parameters but it involves a costly
optimization procedure. In coming papers we explore methods to speed up the process of determining these parameters but this
subject is out of scope for the current paper.

To measure the quality of the fit given by $\hat m (t)$ we define the error as 
\begin{equation}
e = \sum_{i=1}^N d_{A_c}^2(\hat{m}(t_i), G_i)
\end{equation}

for some distance function $d$. Incorporating the error in approximating the empirical Fr\'echet mean, the error term is then 
\begin{equation}
e_{\SBM} = \sum_{i=1}^N d^2(\F(\mu_{\SBM}(\bm p^*(t_i), q^*(t_i), \bs^*(t_i))), G_i).
\end{equation}

This is analogous to a sum of square errors for linear regression performed in Euclidean space and informs us as to the quality of the fit of the linear regression.
\section{Experiments}
\label{sec:Exp}
We illustrate the theory of the previous sections by examining experimental results using various synthetic datasets of
graphs. We first validate the consistency of the theory and then explore some limitations. Each data set consists of $N = 50$
graphs on $n = 600$ nodes. 

We consider five different iid data sets of graphs, $M_1,...,M_5$, drawn from distributions $\mu_1,..,\mu_5$ respectively. The distributions have the following high level descriptions. 
\begin{center}
  \begin{tabular}{ l l}
    $\mu_1$: &A stochastic block model\\ 
    $\mu_2$: &Distribution of dense graphs\\  
    $\mu_3$: &Variable community sizes\\
    $\mu_4$: &Small world\\
    $\mu_5$: &Barabasi-Albert
  \end{tabular}
\end{center}
For each dataset, we determine the parameters of the stochastic block model whose Fr\'echet mean is close to the empirical Fr\'echet mean and label these distributions $\nu_1,..., \nu_5$. Within each subsection we discuss the specific parameters for each distribution when applicable. All the code and data is provided at \url{https://github.com/dafe0926/approx_Graph_Frechet_Mean}.\\

\subsection{Objective Decay}

Prior to examining the quality of the estimate of the sample Fr\'echet mean, it is necessary to verify that the objective in
equation (\ref{eqn:Conv-P}) is indeed convex. While we provide no analytic result, figure \ref{fig:objDec} serves to justify
taking projected gradient descent steps minimizes the objective. Furthermore, the objective converges to zero for all sample
sets except for the sample of dense graphs from $\mu_2$ which is consistent with theorem \ref{thm:Density}.
\subsection{Consistency
\label{subsec:cons}}
To verify the consistency of our algorithm we would like to exactly recover the empirical Fr\'echet mean up to a relabeling of the nodes when the graphs in our dataset are drawn from a stochastic block model as in dataset $M_1$. In figure \ref{fig:consistencyFM} we display the adjacency matrix of an arbitrary graph from the set $M_1$ and the adjacency matrix of the estimated empirical Fr\'echet mean $G_N^*$. In figure \ref{fig:consistencyFM}, the empirical Fr\'echet mean is similar to an observed graph. This is because the stochastic block model induces a normal distribution for the extreme eigenvalues with small variance in the limit of large graph size resulting in any observation being close to the Fr\'echet mean. 

As is clear in figure \ref{fig:consistencyFM}, the community strengths of the observed graph are not aligned with the community
strengths of the empirical Fr\'echet mean. This misalignment is both the advantage and disadvantage to using the distance
$d_{A_c}$. In general, we do not expect the observations in set $M_1$ to have a consistent node labeling with the empirical
Fr\'echet mean so there is no reason to preserve the node labels of the graphs in the dataset. However, when the node labels of
the empirical Fr\'echet mean and the graphs in the dataset should align, a heuristic algorithm must be introduced to recover the
correct vertex labeling for the vertices in the empirical
\begin{figure}[H]
  \centerline{
  \includegraphics[width=0.6\textwidth]{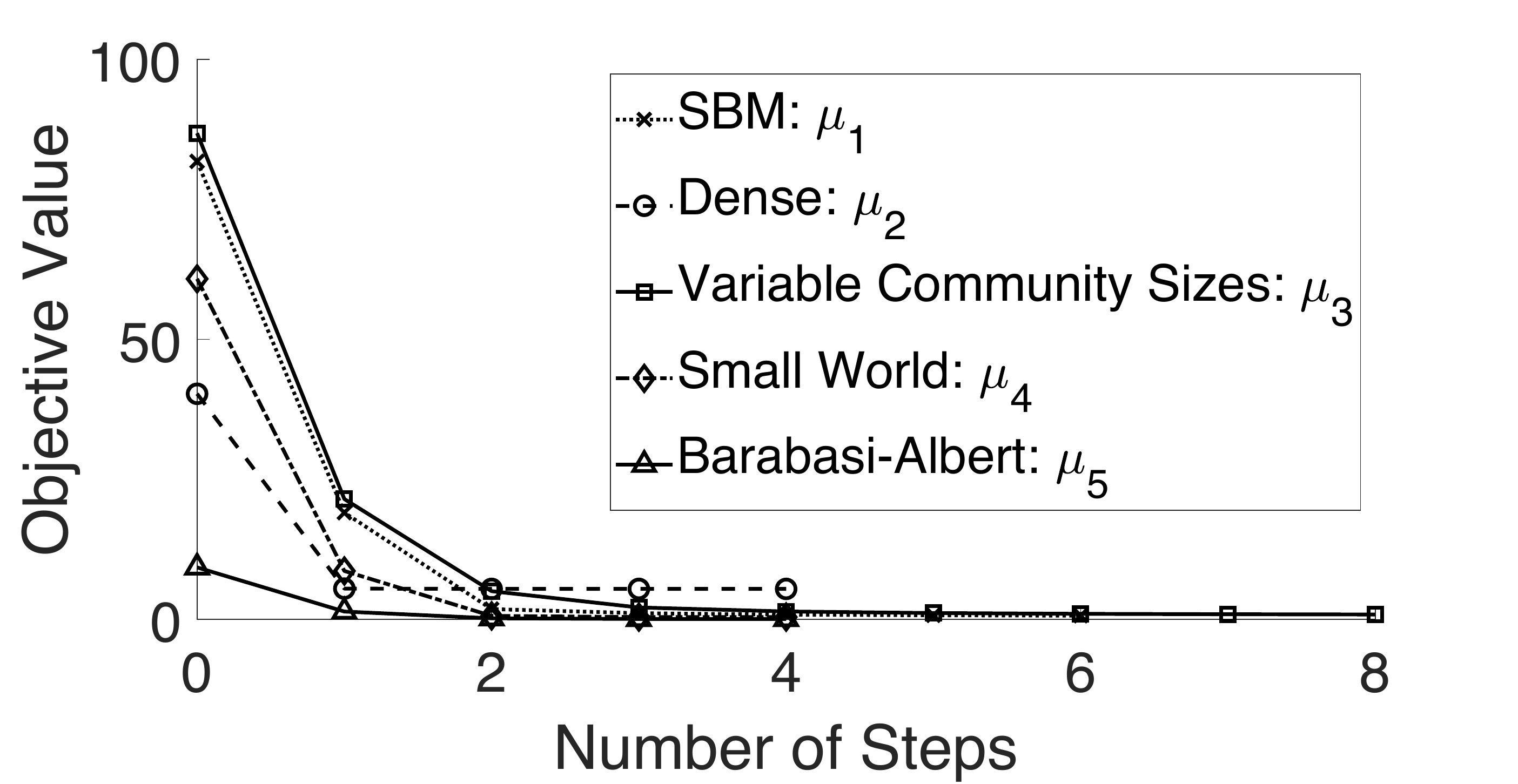}
}
  \caption{Verification of convex objective}
  \label{fig:objDec}
\end{figure}
\begin{figure}[H]
  \centerline{
    \includegraphics[width=0.6\textwidth]{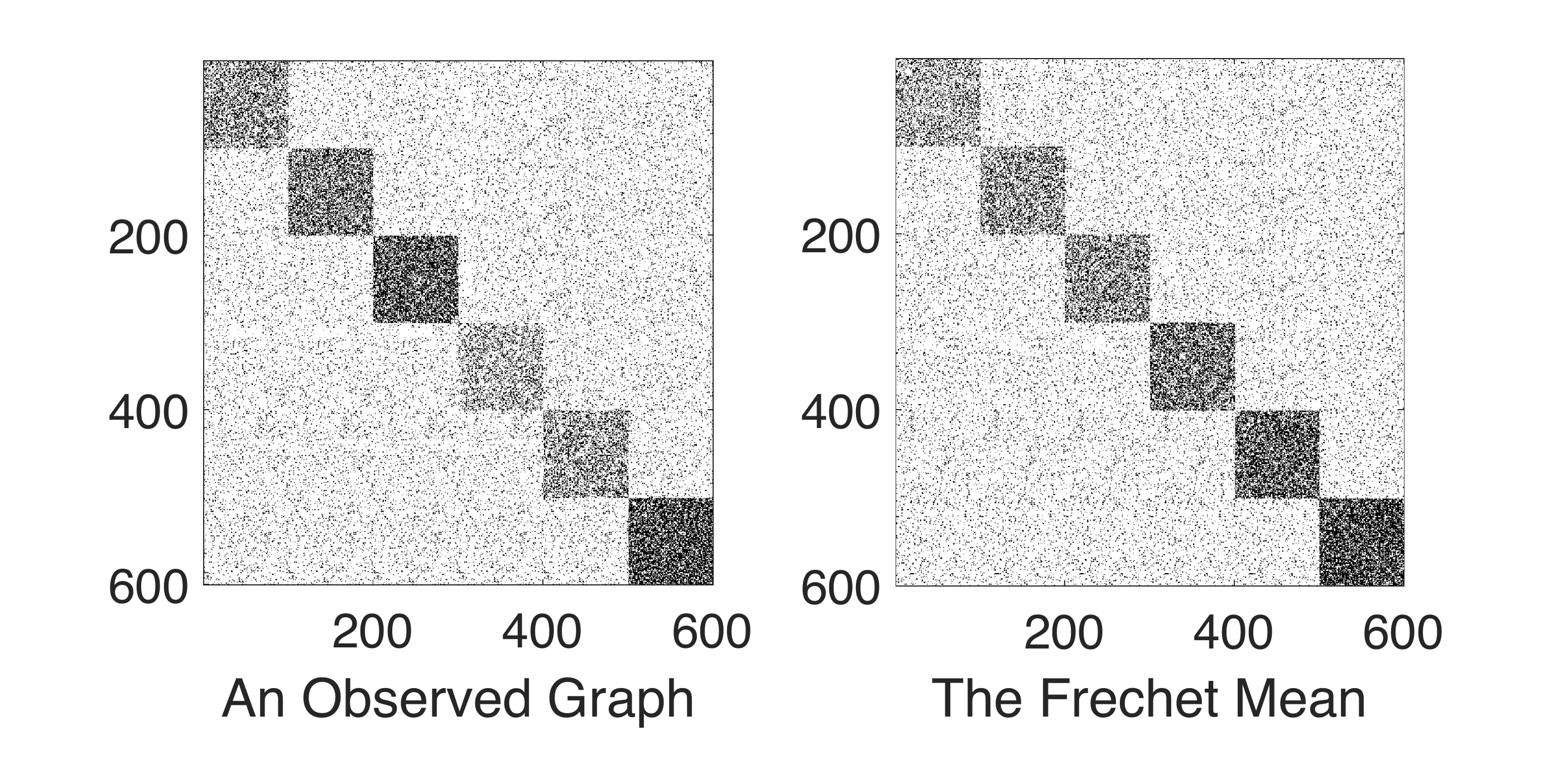}
    }
  \caption{Visual comparison between an observation from $\mu_1$ and the approximate Fr\'echet mean of $\mu_1$}
  \label{fig:consistencyFM}
\end{figure}
\noindent Fr\'echet mean. An example of such a dataset could be a temporal set in which the node labels are consistent across
all the graphs. 

Due to the mislabeling of nodes, a better graphic to visually inspect the quality of the estimate in theorem \ref{thm:AEFM} is
to compare the spectra of the observed graphs versus the spectra of the empirical Fr\'echet mean. In figure
\ref{fig:consistencyHist} below, we compute the euclidean average of the observed spectra from the graphs in $M_1$. We then
determine the Fr\'echet mean graph as described in algorithm \ref{alg:DetFM_C} and compute the histogram of its spectra. The
histogram of spectral values for random matrices is discussed throughout the works of \cite{ACK15, T12, T18,
  FFHL19,Z18}. Overlaying the average histogram from the observed set and the histogram of $G_N^*$ gives a sense as to how well
the approximation recovers the euclidean averaged eigenvalues.

Figure \ref{fig:consistencyHist} shows that by capturing the behavior of the largest eigenvalues, as guaranteed by theorem
\ref{thm:AEFM}, we exactly recover the entire distribution of the set of graphs as is shown by the alignment of the distribution
of the extremal and bulk eigenvalues. This result suggest that the largest eigenvalues of the stochastic block model completely
characterize the models behavior. 
\subsection{Dense Empirical Fr\'echet Mean
\label{subsec:dense}}
In an obvious extension of the theory, we attempt to understand the consequences that arise when given a sample of dense
graphs. The probability measure, $\mu_2$, for this section has an expected density
\begin{equation}
\E{\rho \vert \mu_2} > \frac{\ln^3(n)}{n}.
\end{equation}

In fig. \ref{fig:denseFM} we display the adjacency matrix of an arbitrary graph from the set $M_2$ and the adjacency matrix of
the estimated empirical Fr\'echet mean $G_N^*$.
\begin{figure}[H]
  \centerline{
  \includegraphics[width=0.6\textwidth]{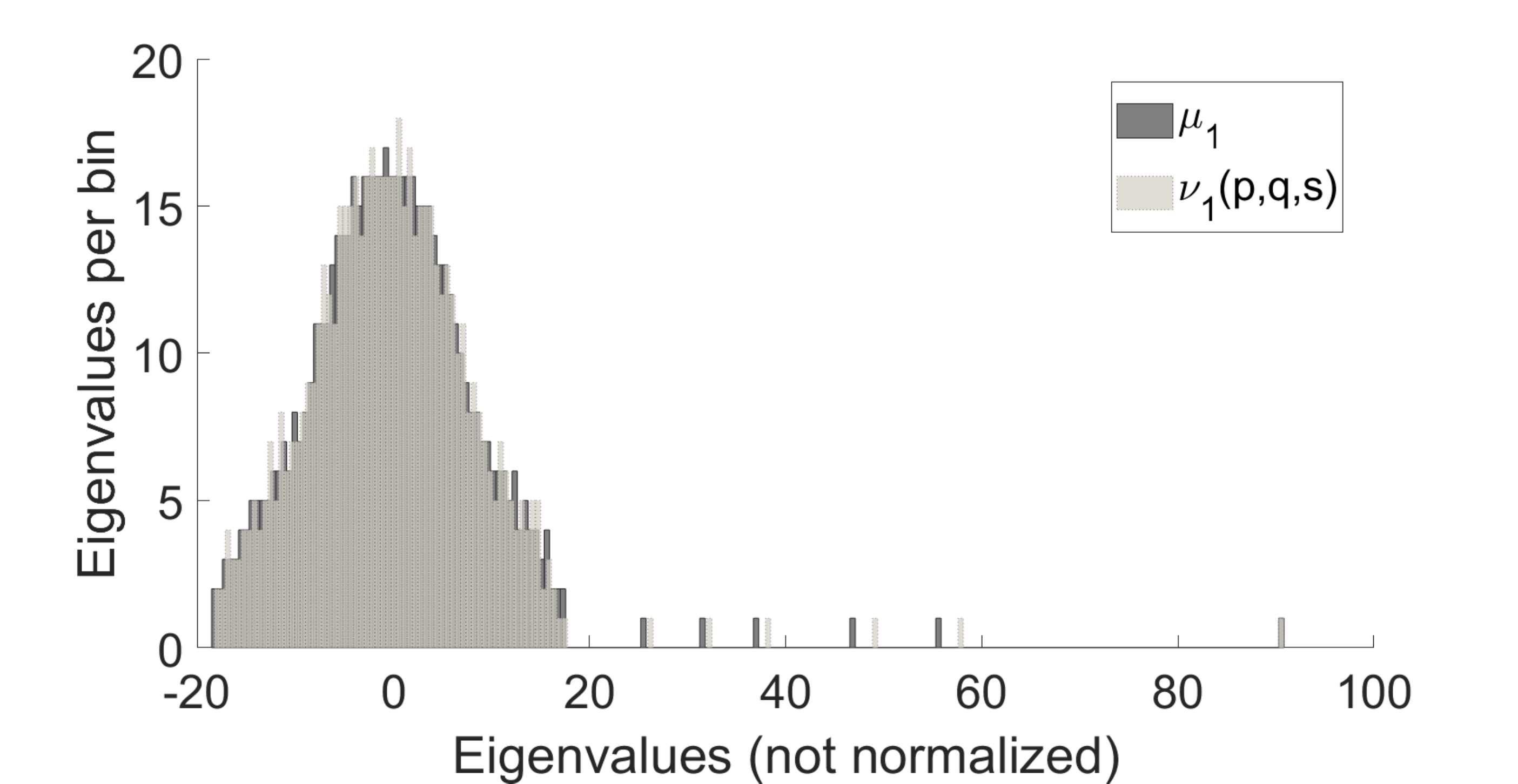}}
  \caption{Average histogram of spectra from $M_1$ overlayed by histogram of spectra from $\F(\nu_1)$}
  \label{fig:consistencyHist}
\end{figure}
\begin{figure}[H]
  \centerline{
    \includegraphics[width=0.6\textwidth]{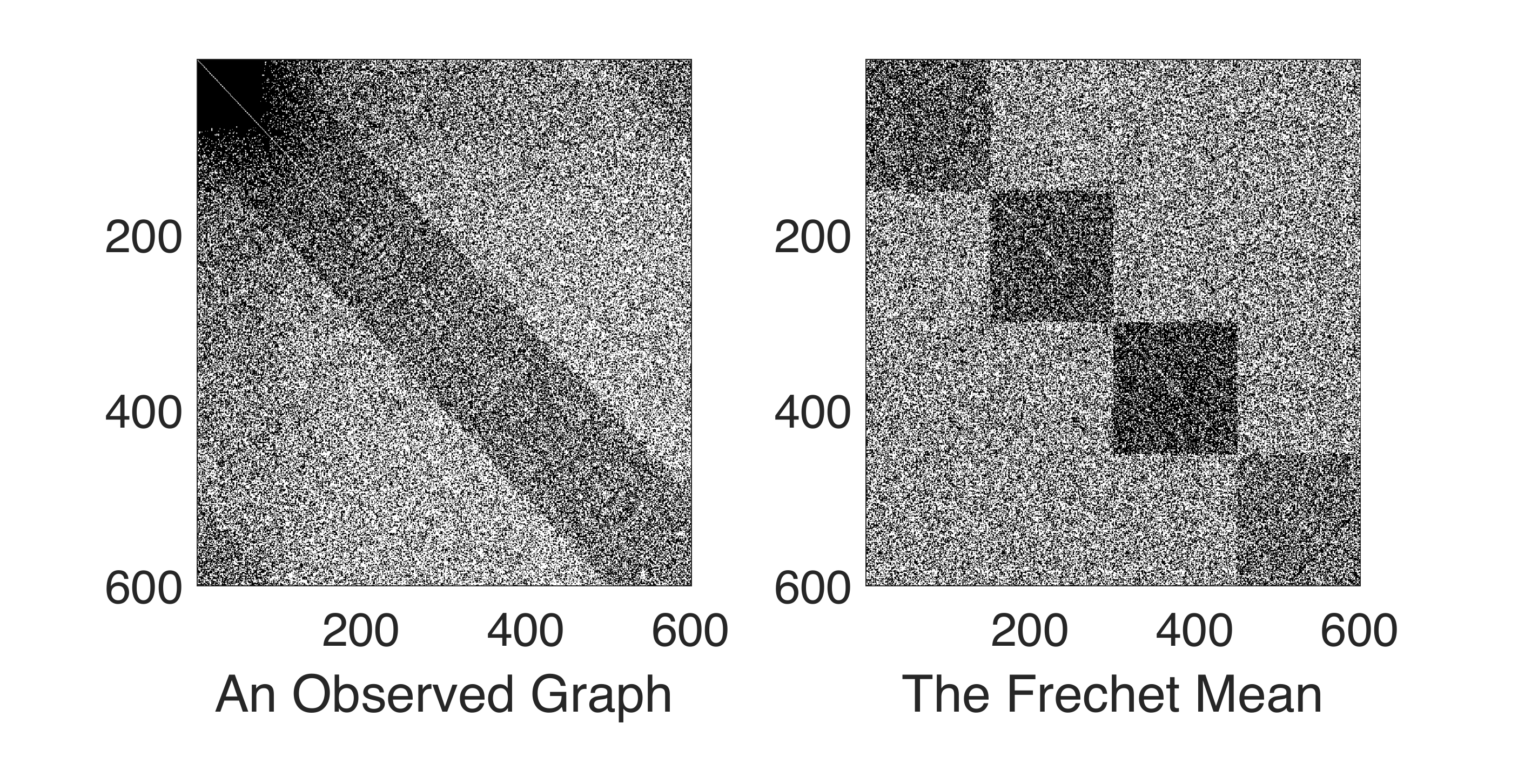}
    }
  \caption{Visual comparison between an observation from $\mu_2$ and the approximate Fr\'echet mean of $\mu_2$}
  \label{fig:denseFM}
\end{figure}
\begin{figure}[H]
  \centering
  \includegraphics[width=0.6\textwidth]{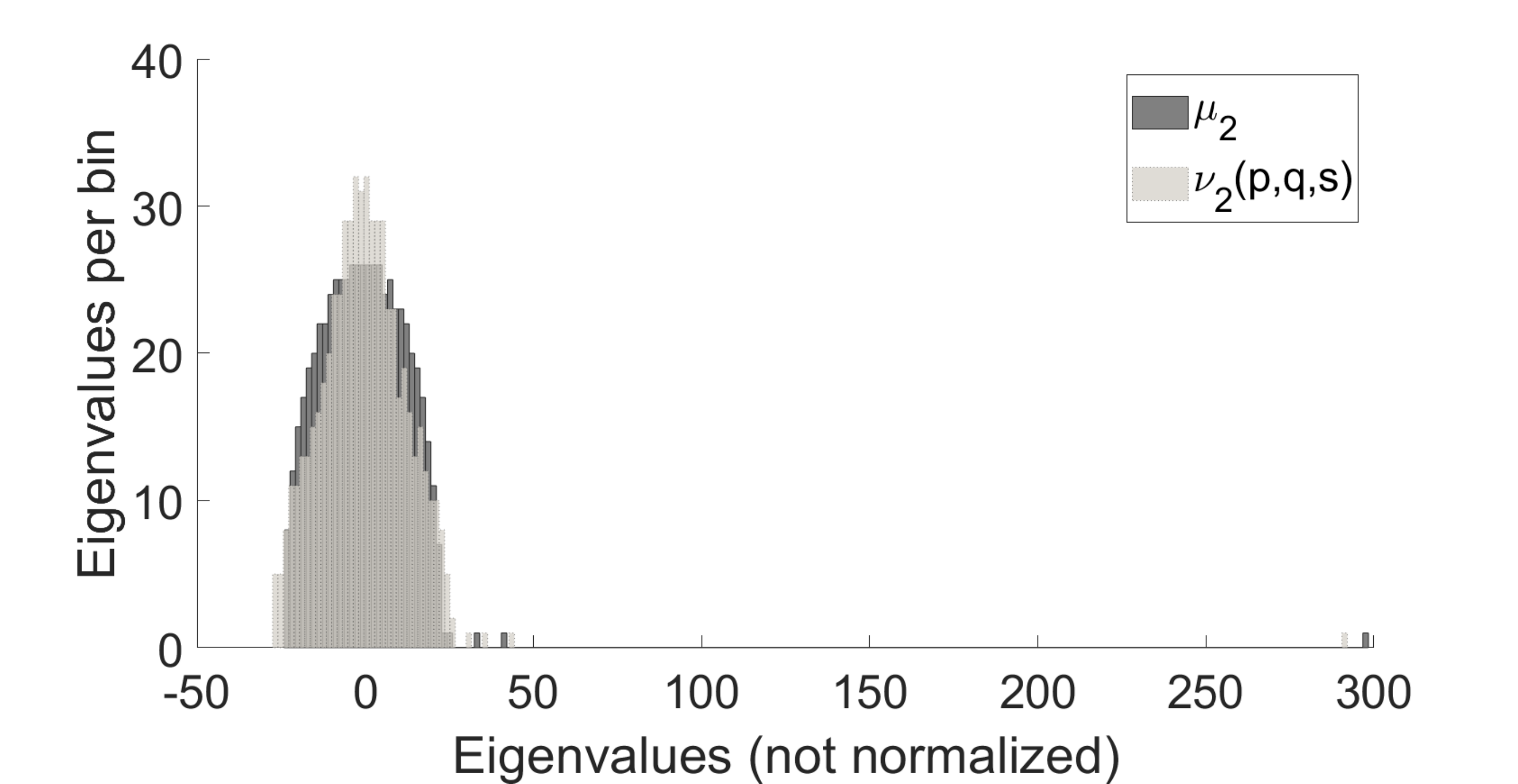}
  \caption{Average histogram of spectra from $M_2$ overlayed by histogram of spectra from $\F(\nu_2)$}
  \label{fig:denseHist}
\end{figure}
Figure \ref{fig:denseFM} illustrates  the obvious distinction between the visualization of the adjacency matrix of the empirical
Fr\'echet mean and an arbitrary adjacency matrix of a graph from $M_2$. This distinction is expected since the average of a set
need not resemble any one element of the set in theory. In addition, the only guarantee from theorem 1 is that the extremal
eigenvalues of the empirical Fr\'echet mean matches the average of the extremal eigenvalues of the graphs in the dataset. In
figure \ref{fig:denseHist} we again plot the average histograms of the observed graphs against the histogram of the empirical
Fr\'echet mean as in section \ref{subsec:cons}. In figure \ref{fig:objDec} we saw that the objective did not decay to zero so we
do not expect a perfect alignment of the extremal eigenvalues in figure \ref{fig:denseHist}.
Figure \ref{fig:denseHist} shows the misalignment of the largest eigenvalue of the empirical Fr\'echet mean from the average
largest eigenvalue of the graphs in the dataset $M_2$. This misalignment could be due to the graphs in the dataset being too
dense which is well known to be related to the magnitude of the largest eigenvalue for a graph.

\subsection{Variable Community Sizes in Stochastic Block Models}

While we introduce stochastic block models for variably sized communities, the practical applications of the theory resulted in
restricting the class of stochastic block models in our search space to those with equal sized communities with at most one
larger community. Nonetheless, a stochastic block model with variably sized communities can be approximated by a stochastic
block model with equal sized communities just as well. In this section we explore this idea. While we see clear distinctions
between the visualization of the adjacency matrices in figure \ref{fig:varSizeFM}, the alignment of the extremal eigenvalues
remains accurate as is shown in figure \ref{fig:varSizeHist}. The distinction then comes from the distribution of the bulk
eigenvalues.

\begin{figure}[H]
  \centerline{
    \includegraphics[width=0.6\textwidth]{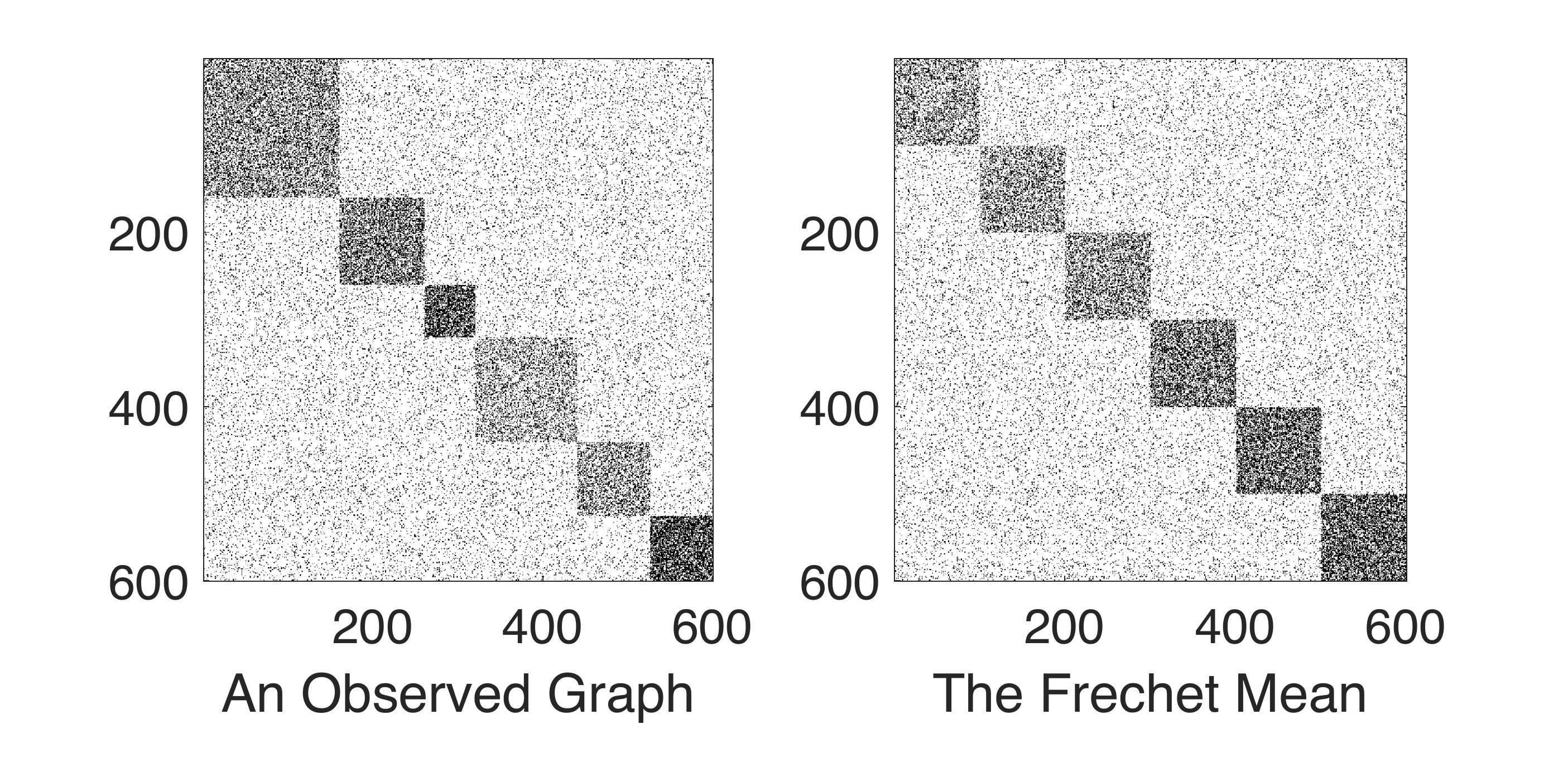}
    }
  \caption{Visual comparison between an observation from $\mu_3$ and the approximate Fr\'echet mean of $\mu_3$}
  \label{fig:varSizeFM}
\end{figure}

\begin{figure}[H]
  \centerline{
    \includegraphics[width=0.6\textwidth]{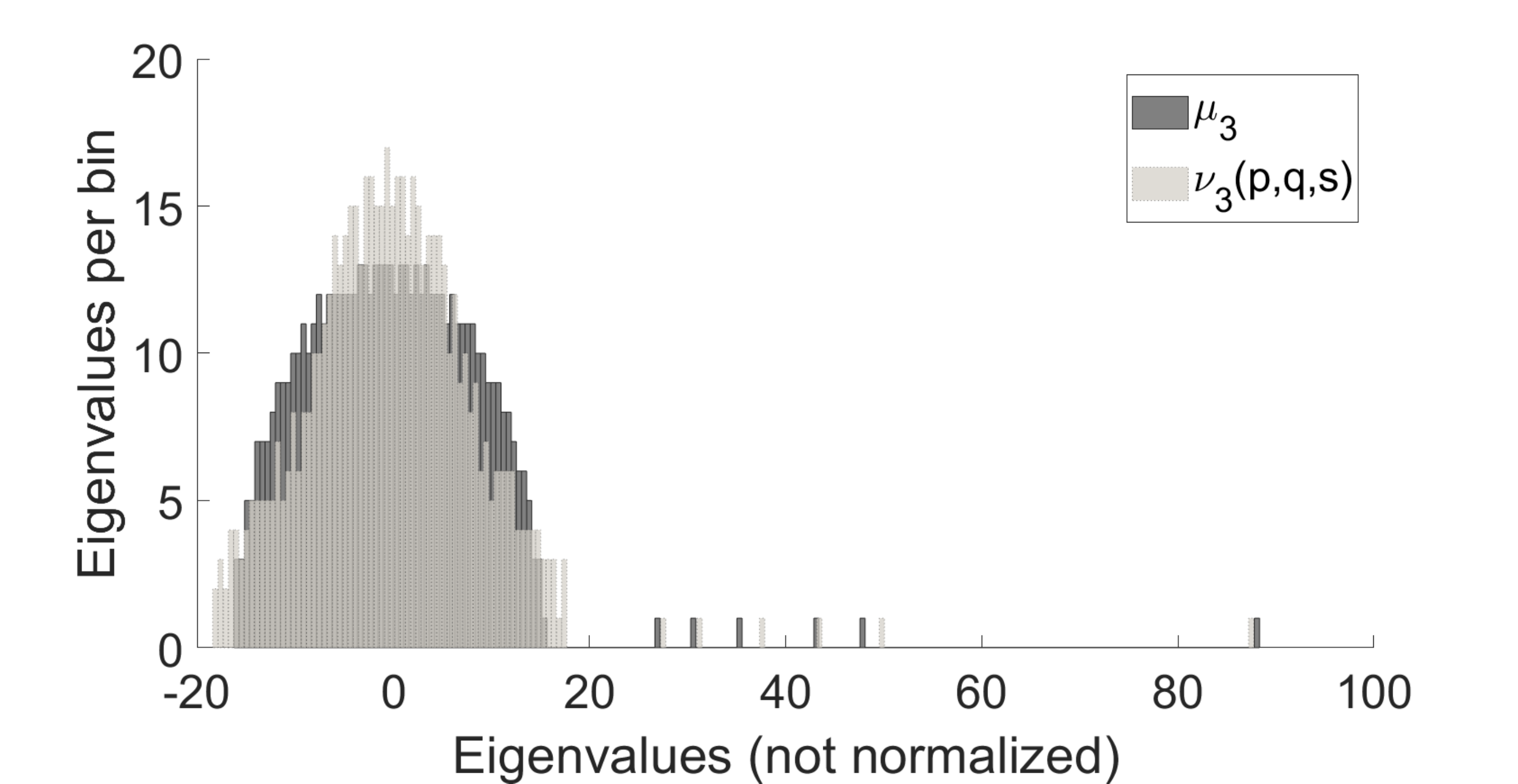}
    }
  \caption{Average histogram of spectra from $M_3$ overlayed by histogram of spectra from $\F(\nu_3)$}
  \label{fig:varSizeHist}
\end{figure}
Both of these figures suggest further research into the effect of community size and community strength on the extremal eigenvalues of a stochastic block model. It is worth noting that the extremal eigenvalues of stochastic block models are not solely dictated by the vector of community strengths $\bm p$ which suggests there exists sets  of parameters $(\bm p_1,q_1,\bs_1)$ and $(\bm p_2,q_2,\bs_2)$ such that
$(\bm p_1,q_1,\bs_1) \neq (\bm p_2,q_2,\bs_2)$ but $\sigma_c(\F(\mu_{\SBM}(\bm p_1,q_1,\bs_1))) = \sigma_c(\F(\mu_{\SBM}(\bm p_2,q_2,\bs_2))).$
This is noteworthy as it implies that within the class of stochastic block models with variable community sizes, the solution to (\ref{eqn:FM-D}) is not unique. If we instead had utilized a distance that measured the distribution of the bulk eigenvalues a unique solution would likely exist though this idea is not explored in this paper.

\subsection{Small World Empirical Fr\'echet Mean} 
The probability measure in this section is associated with the small world ensemble where the number of connected nearest
neighbors is $K = 22$ and the probability of rewiring is $\beta = 0.7$. Figure \ref{fig:swFM} gives a visualization of the
approximation performed by the stochastic block model ensemble.

One interpretation of figure \ref{fig:swFM} is to think of the stochastic block model kernel as a blockwise constant
approximation of a kernel for $\mu_4$. This interpretation is related to the work done in \cite{OW14} which motivated much of
the theory in this paper.

In figure \ref{fig:swHist} we again overlay the histograms of the graphs we consider. Notice that the histogram of eigenvalues
of the Fr\'echet mean of the stochastic block model approximation, $\nu_4$ seemingly has eigenvalues outside of the bulk on the
left as well as to the right. One potential cause of this phenomenon could be that the large number of communities leads to a
slower convergence to the generalized ``semi-circle" law, see discussion of this in \cite{ACK15}, and a larger graph is needed
to get a better estimation of the histogram over the bulk.
\begin{figure}[H]
  \centerline{
  \includegraphics[width=0.6\textwidth]{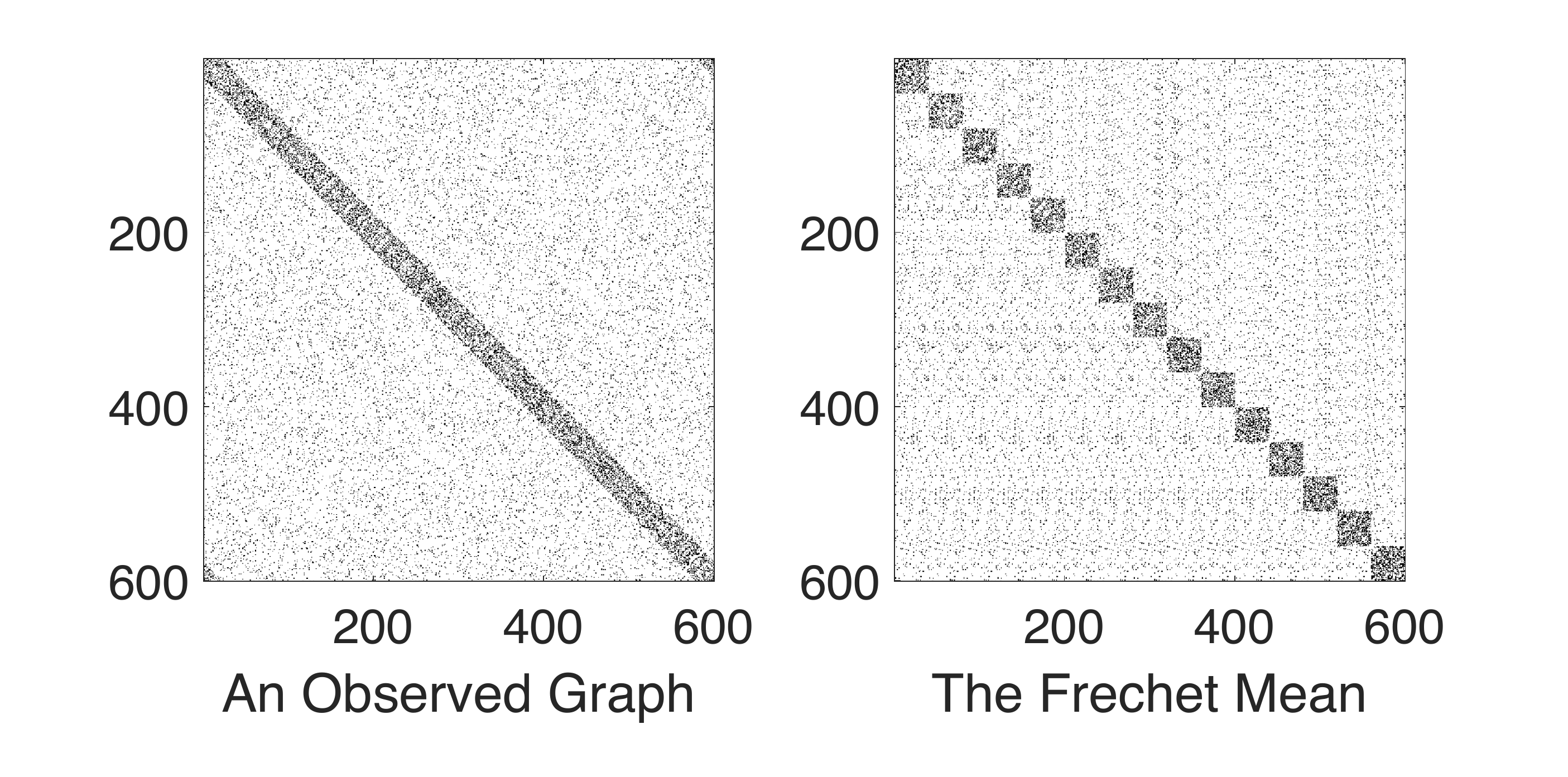}
  }
  \caption{Visual comparison between an observation from $\mu_4$ and the approximate Fr\'echet mean of $\mu_4$}
  \label{fig:swFM}
\end{figure}
\begin{figure}[H]
  \centerline{
    \includegraphics[width=0.6\textwidth]{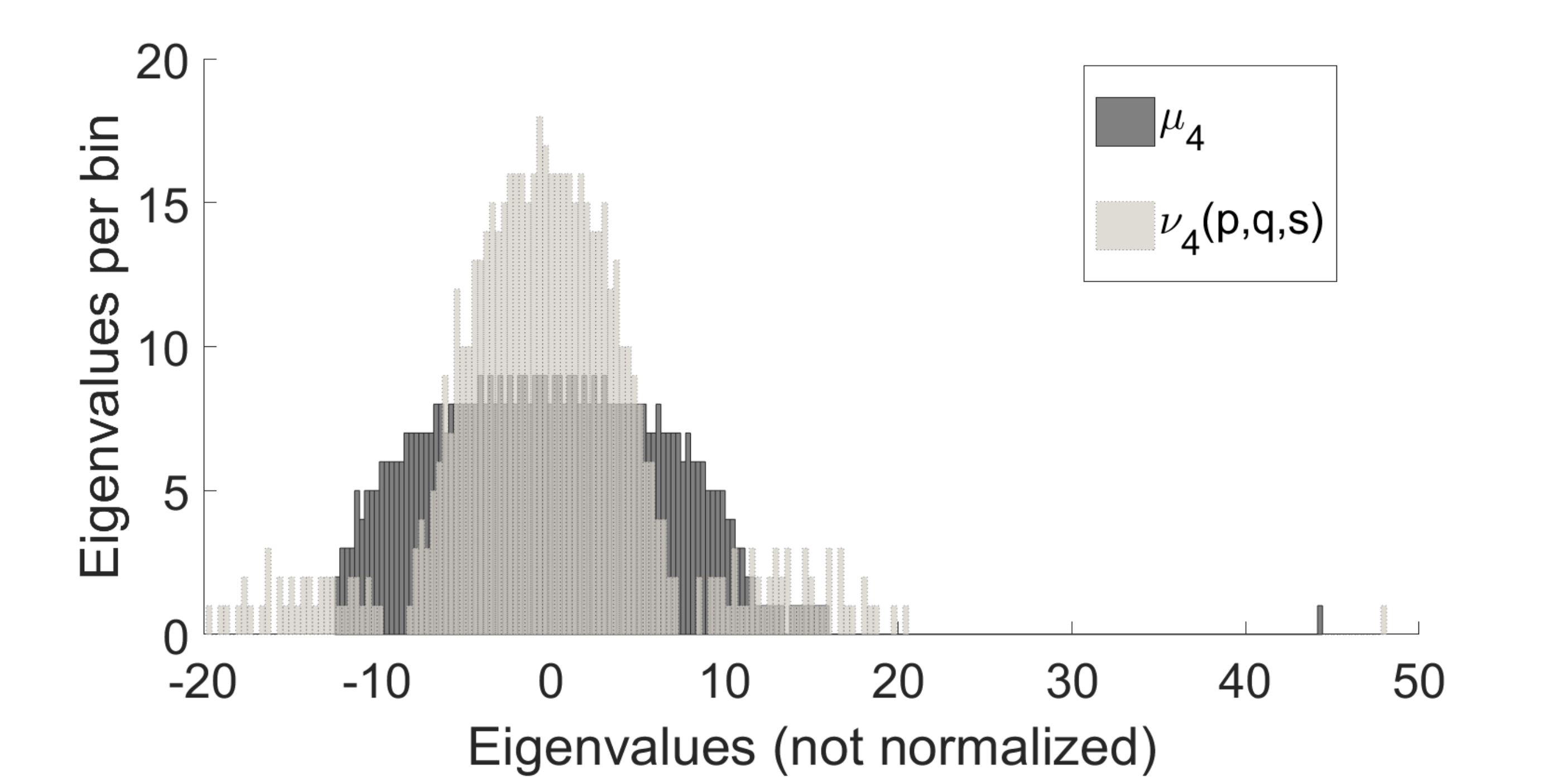}
    }
  \caption{Average histogram of spectra from $M_4$ overlayed by histogram of spectra from $h(\nu_4)$}
  \label{fig:swHist}
\end{figure}
\begin{figure}[H]
  \centerline{
    \includegraphics[width=0.6\textwidth]{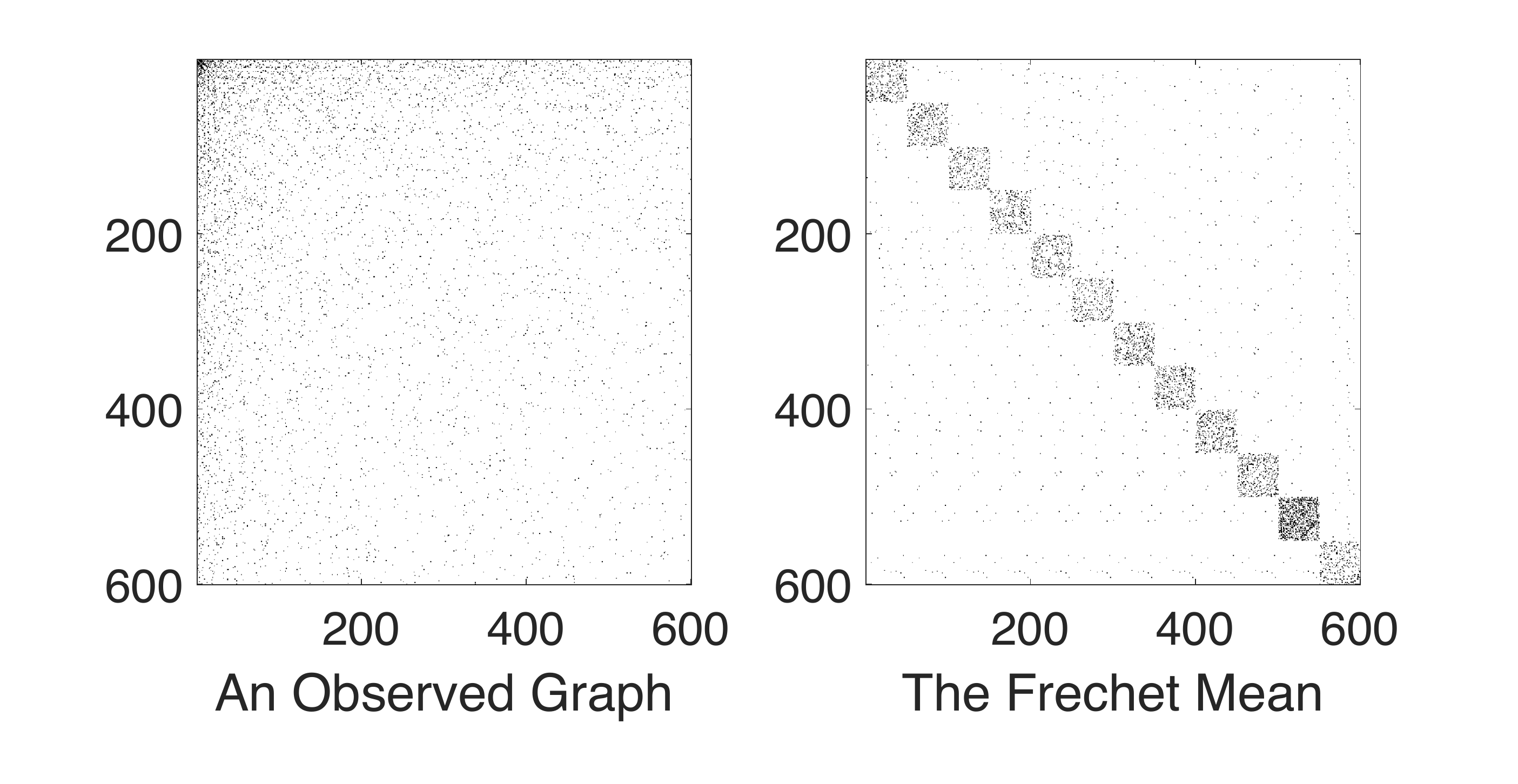}
    }
  \caption{Visual comparison between an observation from $\mu_5$ and the approximate Fr\'echet mean of $\mu_5$}
  \label{fig:baFM}
\end{figure}
Regardless of this we observe obvious visual similarities between the empirical Fr\'echet mean and an arbitrary graph in the
observed set $M_4$.
\begin{figure}[H]
  \centerline{
    \includegraphics[width=0.6\textwidth]{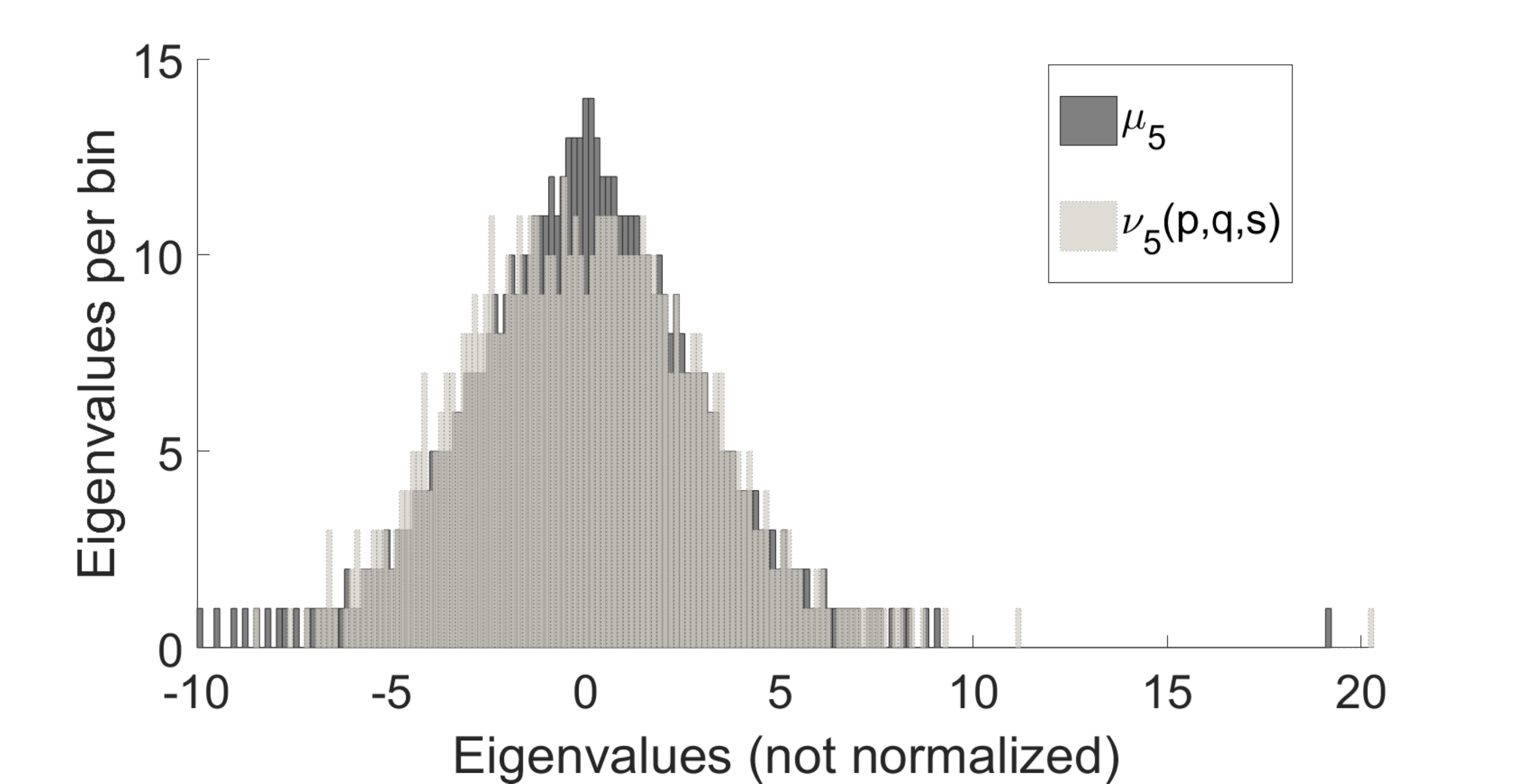}
    }
  \caption{Average histogram of spectra from $M_5$ overlayed by histogram of spectra from $\F(\nu_5)$}
  \label{fig:baHist}
\end{figure}
\subsection{Barabasi-Albert Empirical Fr\'echet Mean} 
The probability measure in this section is associated with a Barabasi-Albert ensemble. The initial graph for the ensemble is
fully connected on $m_0 = 5$ nodes and $m = 5$ edges were added at each step. In figure \ref{fig:baFM} we reorder the nodes
based on their degree for the Barabasi-Albert graph to get a better visual understanding of the similarities between an observed
graph and the Fr\'echet mean. The estimate for the number of communities from algorithm \ref{alg:DetC} is $c^* =12$ resulting in
$50$ vertices per community.

Figure \ref{fig:baHist} again depicts the alignment of the spectra from the approximate Fr\'echet mean with that of the average
spectra of the graphs from set $M_5$. Note the misalignment in the largest eigenvalues could be due to the finite graph
approximation. Recall all results hold in the limit of large graph size but throughout all of these experiments we are
approximating infinite graphs with finite graphs in addition to making the approximation by the stochastic block model ensemble.

\subsection{Application to regression
\label{subsec:ExpReg}}
This subsection is focused on performing a simplified experiment addressing the theory presented in section \ref{sec:Reg}. We first generate a synthetic data set of graphs by allowing the parameters of the stochastic block model to vary with time. For simplicity we hold $q$ and the non-zero entries of $\bs$ fixed as
\begin{equation}
\bs(t) = \begin{bmatrix}1/3 \\ 1/3 \\ 1/3 \\ 0 \\ \vdots \end{bmatrix}, q = 0.08.
\end{equation}
 
For $t \in [0,1]$ we let the non-zero entries of $\bm p$ vary linearly as
\begin{equation}
\bm p(t) = \begin{bmatrix} 0.1 + 0.1 t\\ 0.2 + 0.15t\\ 0.35 + 0.2t \\ 0 \\ \vdots \end{bmatrix}.
\end{equation}

For $T \sim unif(0,1)$, the distribution over $\cG$ is given as $\mu_{\SBM}(\bm p (T),q, \bs)$. For each sample from $unif(0,1)$ there is a corresponding sample from the stochastic block model. By construction we know the number of communities in the observed graphs will be constant at $c^* = 3$ dictating the number of non-zero entries of $\bm p$ we allow to vary when searching for a solution to equation (\ref{eqn:EFR}).\\
\begin{figure}[H]
  \centerline{
    \includegraphics[width=0.6\textwidth]{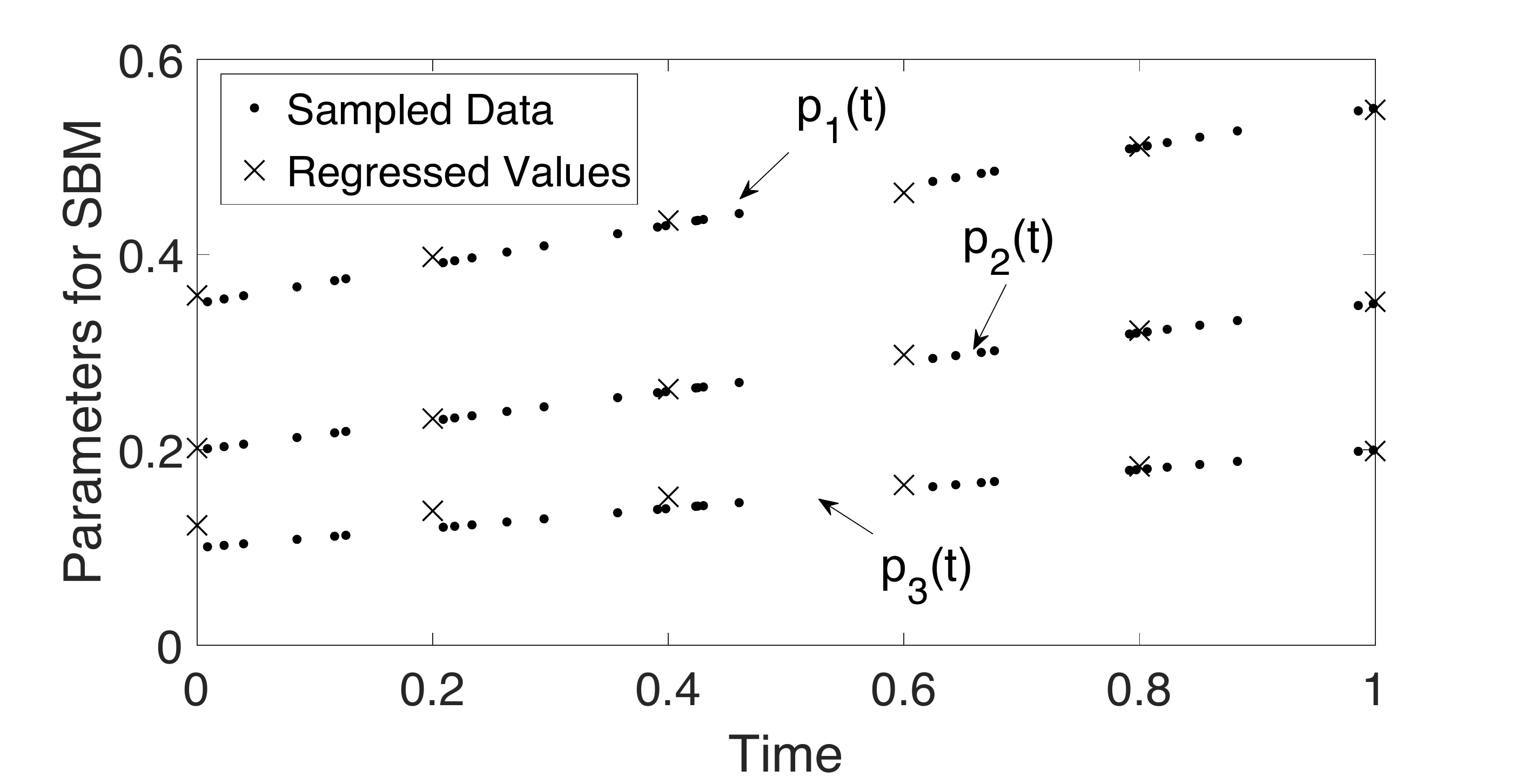}
    }
  \caption{Regression for Stochastic Block Models}
  \label{fig:FR}
\end{figure}
We take $N = 30$ samples for the sample set $M = \lbrace (t_i, G_i) \rbrace_{i=1}^{30}$ in the experiment. In an effort of visualization, since we are unable to plot a graph $G$ on the y-axis, we plot the estimated values $\bm p^*(t)$ at new time points. We expect to recover the lines that define $\bm p(t)$ for time values that were not sampled. Below we mark the estimated values of parameters for 6 graphs at the times $t \in \lbrace 0,0.2,0.4,0.6,0.8,1 \rbrace$ with a $\times$. A vertical line in figure \ref{fig:FR} indicates the three entries of $\bm p^*(t)$.

The true parameter values for the sampled graphs are displayed for comparison with the estimated values of the parameters at new
times. In practice, the true values are unknown to us without a significant amount of work and we would only be able to display
the recovered parameter values. A problem with this approach is that for each new time point, the corresponding Fr\'echet mean
must be found. This is incredibly costly since the evaluation of the objective involves solving a minimization problem at each
time point. We address this issue in forthcoming papers by performing regression on the recovered parameter values after
determining a few Fr\'echet means at select time points.

\section{Conclusion}
In the area of statistical analysis of for graph valued data, determining an average graph is a point of priority among
researchers. The standard practice in the field is to utilize the most central graph among the observed set of graphs as a
makeshift Fr\'echet mean. Throughout this paper, we have shown that when considering the metric $d_{A_c}$ it is possible to
determine an approximation to the empirical Fr\'echet mean given a dataset of sparse graphs.

How this approximate Fr\'echet mean is utilized is up to the discretion of the researcher however in section \ref{sec:Reg} we
explore one motivating idea that utilizes the Fr\'echet mean, termed Fr\'echet regression in the work in \cite{PM19}. This is
but one example of the utility of the Fr\'echet mean graph, another interesting application of this graph is to further push the
work in \cite{LOW20} which introduces a centered random graph model to capture the variance of a set of observations around a
mean graph.

Beyond the applicability of the Fr\'echet mean, theorem \ref{thm:Density} identifies a set of graphs that is dense, in the large
graph limit and with respect to $d_{A_c}$, in the set of sparse graphs. This result is useful in many respects as now we may
``project" (in some sense of the term) any large sparse graph onto the set of Fr\'echet means of stochastic block models and
begin to understand its structure as captured by the largest $c$ eigenvalues. This representation of a graph by the Fr\'echet
mean of a stochastic block model can be seen as a $2c+1$ dimensional approximately invertible embedding of a graph where the
embedding is the the $c$ non-zero entries of $\bm p$ and $\bs$ and the parameter $q$. This embedding furthermore allows for
natural analysis in the parameter space of the stochastic block model ensemble rather than analysis in $\cG$.
\section*{Acknowledgments}
F.G.M was supported by the National Natural Science Foundation (CCF/CIF 1815971).
\clearpage
\appendix
\label{sec:app}
\section{Classic Results}
\subsection{Universality of the Stochastic Block Model with respect to the 2-norm}
\begin{theorem} [\cite{OW14}]
  \label{thm:USBM2}
  Let $\mu_g$ be a probability measure with kernel function $g: [0,1]^2 \mapsto (0,1)$ such that $g$ is H$\ddot{o}$lder-continuous with expected density
  \begin{equation}
\E{\rho_n} = \Omega \left(\frac{\ln^3(n)}{n} \right).
\end{equation}

  Let $\mathcal{T}$ be the set of bijective functions on the interval $[0,1]$.
  For any $\varepsilon > 0$, there exists a kernel $f(x,y;\bm p, q, \bs)$ for a stochastic block model with the following properties:
  \begin{itemize}
  \item $\bs$ has $c$ non-zero entries
  \item $\bs(1) \geq \bs(i)$ for $i = 1,2,...,c$
  \item $\bs(i) = \bs(j)$ for $i,j = 2,...,c$
  \end{itemize}
  such that
  \begin{equation}
d_k(g(x,y),f(x,y;\bm p, q, \bs)) = \underset{\tau \in \mathcal{T}}{\text{inf }} \int \int_{(0,1)^2} \left| g(\tau(x),\tau(y)) - f(x,y;\bm p, q, \bs )\right|^2 dx dy \leq \varepsilon
\end{equation}

\end{theorem}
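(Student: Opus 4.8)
The plan is to reduce this to the elementary fact that a continuous function on $[0,1]^2$ is $L^2$--approximated by a step function adapted to a fine grid, and then to read that step function as the canonical kernel of a stochastic block model with (nearly) equal community sizes. The first move is to take $\tau$ to be the identity in the definition of $d_k$: since the infimum over $\mathcal{T}$ can only decrease the value, it suffices to exhibit $f\in\cK_{\SBM}$ with $\iint_{(0,1)^2}\lvert g(x,y)-f(x,y)\rvert^2\,dx\,dy\le\varepsilon$.

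Because $g$ is H\"older continuous on the compact square it is uniformly continuous with an explicit modulus, $\lvert g(x,y)-g(x',y')\rvert\le L\bigl(\lvert x-x'\rvert^\alpha+\lvert y-y'\rvert^\alpha\bigr)$ for some $L>0$ and $\alpha\in(0,1]$. I would choose the number of communities $c=c(\varepsilon)$ large enough that $2Lc^{-\alpha}\le\sqrt\varepsilon$, i.e.\ $c=\Theta(\varepsilon^{-1/(2\alpha)})$, and partition $[0,1]$ into the $c$ consecutive intervals $I_1,\dots,I_c$ of equal length $1/c$; this fixes $\bs$ with $s_1=\dots=s_c=1/c$, which satisfies $s_1\ge s_i$ and $s_i=s_j$ for $i,j\ge 2$ with equality throughout (in the finite--$n$ reduction of Section~\ref{sec:OptParam} the single community allowed to be larger simply absorbs the remainder $r=n-c\lfloor n/c\rfloor$, preserving all three properties). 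Then I would define $f$ to be constant on each block $I_k\times I_\ell$, equal there to the block average $\overline g_{k\ell}:=c^2\iint_{I_k\times I_\ell}g$; the diagonal values $\overline g_{kk}$ supply $\bp$ and the off--diagonal values supply $q$. Since $g$ varies by at most $2Lc^{-\alpha}\le\sqrt\varepsilon$ over any block, one gets $\lvert g(x,y)-f(x,y)\rvert\le\sqrt\varepsilon$ pointwise, hence $\iint_{(0,1)^2}\lvert g-f\rvert^2\le\varepsilon$, which is exactly the claimed bound. Two routine verifications then close the construction: $g$ is continuous and strictly $(0,1)$--valued on a compact set, hence bounded away from $0$ and $1$, so every $\overline g_{k\ell}\in(0,1)$ and $f$ is a legitimate element of $\cK_{\SBM}$; and the equipartition is symmetric about $1/2$ (the reflection $x\mapsto 1-x$ permutes the $I_k$ up to null sets), so $f$ inherits the kernel symmetry $f(x,y)=f(1-y,1-x)$ from $g$.

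It is worth noting that the hypothesis $\E{\rho_n}[\mu_g]=\Omega(\ln^3(n)/n)$ is not actually used by this deterministic argument: since $g$ is bounded below by a positive constant, $\E{\rho_n}[\mu_g]\to\iint g>0$, so the condition holds automatically; it is really the regime in which the analogue for a \emph{sampled} finite graph (the network--histogram consistency statement in \cite{OW14}) remains valid, and here it merely transfers as a hypothesis. The part I expect to require the most care is structural bookkeeping rather than analysis: the construction naturally produces a \emph{general} block model, each off--diagonal block carrying its own average, and matching it to the single--parameter canonical form $f(x,y;\bp,q,\bs)$ of $\cK_{\SBM}$ needs either a mild near--assortativity assumption on $g$ or the convention that $q$ is read blockwise. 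Quantifying $c$ through the H\"older modulus and checking that the resulting parameters stay in the admissible ranges is otherwise mechanical.
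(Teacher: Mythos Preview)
The paper does not actually prove this statement; its entire proof reads ``The proof can be found in \cite{OW14}.'' Your blockwise-averaging construction---partition $[0,1]$ into equal pieces, replace $g$ by its block averages, and use the H\"older modulus to control the sup error and hence the $L^2$ error---is precisely the network-histogram mechanism that underlies \cite{OW14}, so in substance you have reconstructed the intended argument rather than merely pointed at it. Your side remark that the sparsity hypothesis $\E{\rho_n}=\Omega(\ln^3(n)/n)$ plays no role in this purely deterministic approximation is also correct.

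However, the issue you raise in your last paragraph is not ``structural bookkeeping''; it is a genuine obstruction to the statement \emph{as the paper has written it}. The canonical kernel $f(x,y;\bp,q,\bs)\in\cK_{\SBM}$ of Section~\ref{subsec:sbm} carries a \emph{single} scalar $q$ on the entire off-diagonal region, whereas your construction (and the construction in \cite{OW14}) assigns a separate average $\overline g_{k\ell}$ to each off-diagonal block $I_k\times I_\ell$. These are different function classes, and for a generic H\"older $g$ the single-$q$ class is too small: with $c$ equal blocks the off-diagonal region has measure $1-1/c$, and for any $q$ and any measure-preserving $\tau$ the integral $\iint_{\text{off-diag}}(g\circ(\tau,\tau)-q)^2$ is bounded below by the variance of $g$ over a set of measure $1-1/c$, which for a non-constant $g$ (e.g.\ $g(x,y)=0.3+0.2xy$) stays bounded away from zero uniformly in $c$, $q$, $\bs$ and $\tau$. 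So neither a ``mild near-assortativity assumption'' nor a relabeling can rescue the argument; the ``convention that $q$ is read blockwise'' that you propose is really a change of definition replacing $\cK_{\SBM}$ by the full class of step graphons. Your proof is correct for that larger class---which is what \cite{OW14} in fact establishes---but it does not, and cannot, yield the theorem with the paper's restrictive $\cK_{\SBM}$. You should flag this as a discrepancy between the paper's definition and the cited result rather than treat it as a detail to be tidied up.
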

\begin{proof}
  The proof can be found in \cite{OW14}.
\end{proof}
The theorem  states that the class of kernels that generate sparse graphs in expectation can be approximated by
a kernel from the class of stochastic block models when we allow for the correct $\tau$ on the interval $[0,1]$. The choice of
$\tau$ is interpreted as a relabeling of the nodes.  
\subsection{Expected Value of the Largest Eigenvalues of a Stochastic Block Model}
\begin{theorem} [\cite{FFHL19}]
  \label{thm:ExEigs}
  Given kernel $f(x,y; \bm p, q, \bs) \in K_{\SBM}$ with probability measure $\mu_{\SBM}$ such that $\bs$ has $c$ non-zero
  entries. Let $G_{\mu_{\SBM}}$ be a random graph with adjacency matrix $\bA$. In the limit of large graph size, let
  $\blamb^{\bE} = \E{\sigma(\bA)}$. For $k = 1,...,c$, the $k^{th}$ largest eigenvalue, $\blamb^{\bE}(k) $, is the unique root
  of 
  \begin{equation}
f_k(z) = 1 + \blamb_{\bE}(k) \lbrace R(\bm v_k, \bm v_k,z) - R(\bm v_k, \bm V_{-k},z)[(D_{-k})^{-1}+ R(\bm V_{-k},\bm
V_{-k},z)]^{-1} \times R(\bm V_{-k},\bm v_k,z)\rbrace 
\end{equation}

  in the interval $[a_k,b_k]$ where
  \begin{equation}
a_k = \begin{cases} \frac{\blamb_{\bE}(k)}{1+c_0/2} \quad \blamb_{\bE}(k) > 0\\ \frac{1+c_0/2}{\blamb_{\bE}(k)} \quad
  \blamb_{\bE}(k) < 0\end{cases} \quad b_k = \begin{cases} \frac{1+c_0/2}{\blamb_{\bE}(k)} \quad \blamb_{\bE}(k) > 0\\
  \frac{\blamb_{\bE}(k)}{1+c_0/2}  \quad \blamb_{\bE}(k) < 0\end{cases}  
\end{equation}

  Asymptotically, \begin{equation}
f_k(z) = 1 + \blamb_{\bE}(k) R(\bm v_k, \bm v_k,z) + H.O.T.
\end{equation}

  Here $\blamb_{\bE} = \sigma(\E{\bA})$ and $\blamb_{\bE}(k)$ is the $k^{th}$ largest eigenvalue while $\bm v_k$ denotes the
  associated eigenvector. $\bm V$ is the $n \times c$ matrix of orthonormal eigenvectors of $\E{\bA}$ and $\bm V_{-k}$ is the
  submatrix of $\bm V$ with the $k^{th}$ column  and row removed. $\bm D$ is the diagonal matrix of eigenvalues of $\E{\bA}$
  organized in decreasing order. $\bm D_{-k}$ is the submatrix of $\bm D$ with the $k^{th}$ column and row removed.  
  \begin{equation}
R(M_1,M_2,z) = - \sum_{l = 0, l \neq 1}^L z^{-(l+1)} M_1^T \E{\bm W^l} M_2 
\end{equation}

  where $\bm W = \bA - \E{\bA}$ and $L = 4$ is typically sufficient and $c_0 \in (0,1)$.
\end{theorem}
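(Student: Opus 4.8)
The plan is to treat the adjacency matrix as a bounded-rank deterministic perturbation of a sparse random matrix and to run the resolvent (self-consistent equation) argument familiar from deformed-Wigner / spiked random-matrix theory, adapted to the sparse inhomogeneous \ER regime. First I would write $\bA = \E{\bA} + \bm{W}$ with $\bm{W} = \bA - \E{\bA}$, whose entries are independent (up to symmetry), mean zero, and of variance $O(\rho_n)$; the mean matrix $\E{\bA}$ is block-constant, hence of rank exactly $c$, with spectral decomposition $\E{\bA} = \bm{V}\bm{D}\bm{V}^{T}$, where $\bm{V}\in\R^{n\times c}$ is orthonormal and $\bm{D} = \mathrm{diag}(\blamb_{\bE}(1),\dots,\blamb_{\bE}(c))$. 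For $z$ outside the spectrum of $\bm{W}$, an eigenvector $v$ of $\bA$ with eigenvalue $z$ satisfies $(zI-\bm{W})v = \bm{V}\bm{D}\bm{V}^{T}v$; projecting onto the range of $\bm{V}$ and setting $y=\bm{V}^{T}v$ gives the exact secular equation $\det\!\big[\bm{D}^{-1}-\bm{V}^{T}(zI-\bm{W})^{-1}\bm{V}\big]=0$, a $c\times c$ determinant condition on $z$.

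Next I would expand the resolvent. On the interval $[a_k,b_k]$ — which is designed to keep $z$ a fixed fraction $c_0$ away from the edge of the limiting spectral support of $\bm{W}$ — the Neumann series $(zI-\bm{W})^{-1}=\sum_{l\ge0}z^{-(l+1)}\bm{W}^{l}$ converges geometrically with high probability, and each bilinear form $\bm{v}_i^{T}\bm{W}^{l}\bm{v}_j$ self-averages to $\bm{v}_i^{T}\E{\bm{W}^{l}}\bm{v}_j$ with a vanishing relative error; the $l=1$ contribution is zero since $\E{\bm{W}}=0$, and truncating the sum at $L$ produces precisely the matrix $-R(\bm{V},\bm{V},z)$. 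Plugging this into the secular equation and applying a Schur complement with respect to the split $\{k\}\cup\{-k\}$ — legitimate because the $(-k,-k)$ block $\bm{D}_{-k}^{-1}+R(\bm{V}_{-k},\bm{V}_{-k},z)$ is invertible on $[a_k,b_k]$ — collapses the $c\times c$ condition to the scalar equation $f_k(z)=0$ with $f_k$ as stated.

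For uniqueness of the root in $[a_k,b_k]$ I would invoke monotonicity: $z\mapsto \bm{v}_k^{T}(zI-\bm{W})^{-1}\bm{v}_k$ is (the restriction of) the Stieltjes transform of a positive measure, hence strictly monotone on any interval disjoint from that measure's support, and the Schur-complement correction, being a quadratic form built from resolvents of $\bm{W}$, inherits a derivative of definite sign; thus $f_k$ is strictly monotone on $[a_k,b_k]$ and has at most one zero, while a sign check at the endpoints yields existence. The asymptotic identity $f_k(z)=1+\blamb_{\bE}(k)R(\bm{v}_k,\bm{v}_k,z)+\mathrm{H.O.T.}$ then follows because the off-diagonal couplings $R(\bm{v}_k,\bm{V}_{-k},z)$ have no $l=0$ term (orthonormality of $\bm{V}$) and no $l=1$ term, so they begin at order $z^{-3}\bm{v}_k^{T}\E{\bm{W}^{2}}\bm{V}_{-k}$; with $z$ of order $n\rho_n$ the whole correction block is negligible against the leading $1$, and truncation at $L=4$ perturbs only the higher-order part.

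The hard part is the resolvent control in the sparse regime. Near the threshold $\rho_n\sim \ln n/n$ the matrix $\bm{W}$ is genuinely sparse: a few high-degree vertices can push $\|\bm{W}\|$ well above $2\sqrt{n\rho_n}$ and destroy the geometric convergence of the Neumann series, and the self-averaging of $\bm{v}_i^{T}\bm{W}^{l}\bm{v}_j$ needs concentration of degree-$l$ polynomials in sparse independent variables together with delocalization of the eigenvectors of $\E{\bA}$ (here automatic, since block-constant eigenvectors are flat). The device that the hypothesis $\rho_n=\Omega(\ln^{3}n/n)$ is tailored to is to work on the high-probability event where a degree-regularized version of $\bA$ satisfies $\|\bm{W}\|\le (2+o(1))\sqrt{n\rho_n}$, carry every estimate there, and afterwards argue that regularization does not move the $c$ extremal eigenvalues; this reconciliation, rather than the algebra above, is where the bulk of the effort in \cite{FFHL19} goes.
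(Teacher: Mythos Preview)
The paper does not prove this theorem at all: its entire ``proof'' is the single sentence ``The proof can be found in \cite{FFHL19}.'' This is a quoted result from Fan, Fan, Han and Lv, restated in the paper's notation, and the authors make no attempt to rederive it. Your sketch therefore goes well beyond what the paper itself provides.

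That said, your outline is a faithful summary of the machinery behind results of this type and is broadly consistent with the approach in \cite{FFHL19}: the low-rank plus noise decomposition $\bA=\E{\bA}+\bm W$, the secular determinant $\det\big[\bm D^{-1}-\bm V^{T}(zI-\bm W)^{-1}\bm V\big]=0$, the Neumann expansion of the resolvent truncated at level $L$ (which is exactly how $R(M_1,M_2,z)$ is defined in the statement), the Schur complement to isolate the $k$th coordinate, and monotonicity of the Stieltjes-type quantity for uniqueness on $[a_k,b_k]$. Your identification of the sparse-regime concentration of $\|\bm W\|$ and the self-averaging of $\bm v_i^{T}\bm W^{l}\bm v_j$ as the genuine technical core is accurate; this is indeed where the work in \cite{FFHL19} is concentrated, and where the sparsity hypothesis enters. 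For the purposes of this paper, however, none of that needs to be reproduced --- a citation suffices, and that is all the authors give.
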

\begin{proof}
  The proof can be found in \cite{FFHL19}.
\end{proof}
 This result gives an expression to determine the expected eigenvalues of graphs drawn from the stochastic block model
ensemble. As suggested in \cite{FFHL19}, the roots of $f_{k}(z)$ can be found using Newton's method.
\subsection{Convergence of Spectrum to Operator Spectrum}
\begin{theorem}
  \label{thm:SpecConv}
  Given a canonical kernel $f(x,y; \bm p, q, \bs) \in K_{\SBM}$ with probability measure $\mu_{\SBM}$ such that $\bs$ has $c$
  non-zero entries. Let $G_{\mu_{\SBM}}$ be a random graph with adjacency matrix $\bA$. For equispaced points $\lbrace \xi_i
  \rbrace_{i=1}^n$ in the interval $[0,1]$, define the expected adjacency matrix as 
  \begin{equation}
\E{\bA} = f(\xi_i,\xi_j; \bm p, q, \bs).
\end{equation}
  Denote the spectrum of the expected adjacency matrix by vector in $\R^n$ as
  \begin{equation}
\sigma(\E{\bA}) = \blamb_{\bE}
\end{equation}

  where $\blamb_{\bE}$ is sorted in descending order. Define the linear integral operator with kernel $f$, $L_f: L^2([0,1]) \mapsto L^2([0,1])$ as 
  \begin{equation}
L_f(g(x)) = \int_{0}^1 f(x,y; \bm p, s, \bs)g(y)dy.
\end{equation}

  Denote the spectrum of the linear integral operator as
  \begin{equation}
\sigma(L_f) = \blamb_{L_{f}}
\end{equation}

  where $\blamb_{L_{f}}$ is sorted in descending order and indexed by $i$.
  \begin{equation}
\lim_{n \to \infty} \frac{1}{\sqrt n} \blamb_{E}(i) = \blamb_{L_{f}}(i)
\end{equation}
\end{theorem}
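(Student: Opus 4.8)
The plan is to exploit the piecewise-constant structure of the canonical kernel, which forces both the matrix $\E{\bA}$ and the integral operator $L_f$ to have rank at most $c$, and thereby to reduce the claimed spectral convergence to a single $c \times c$ eigenvalue comparison. Write $I_1,\dots,I_c$ for the subintervals of $[0,1]$ on which the geometry vector $\bs$ places the communities, so that the length of $I_k$ is $s_k$, and let $M \in \R^{c\times c}$ be the matrix of kernel values, $M_{k\ell} = p_k$ when $k=\ell$ and $M_{k\ell}=q$ otherwise. Both the discrete and the continuous object are built from the single matrix $M$ together with a weighting of the blocks, so the entire argument reduces to comparing the grid weighting against the Lebesgue weighting.

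For the operator, I would restrict $L_f$ to the $c$-dimensional space of step functions constant on each $I_k$, which is precisely the range of $L_f$. Writing a step function as $\sum_\ell b_\ell \bm{1}_{I_\ell}$ and computing directly, $L_f\left(\sum_\ell b_\ell \bm{1}_{I_\ell}\right) = \sum_{k}\left(\sum_\ell M_{k\ell} s_\ell b_\ell\right)\bm{1}_{I_k}$, so the nonzero spectrum of $L_f$ coincides with the spectrum of $M\,\mathrm{diag}(s_1,\dots,s_c)$ and every other eigenvalue is zero. This identifies $\blamb_{L_f}$ with the ordered eigenvalues of an explicit $c\times c$ matrix.

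For the matrix, I would write $\E{\bA}$ as a block-constant matrix whose blocks are determined by which grid points $\xi_i = i/n$ land in each $I_k$. Letting $\bm{1}_k \in \R^n$ be the indicator of block $k$ and $n_k = \bm{1}_k^{\top}\bm{1}_k$ its size, one has $\E{\bA} = \sum_{k,\ell} M_{k\ell}\,\bm{1}_k \bm{1}_\ell^{\top}$ up to the bounded diagonal correction enforcing $a_{ii}=0$. The same computation shows the nonzero spectrum of $\sum_{k,\ell} M_{k\ell}\,\bm{1}_k \bm{1}_\ell^{\top}$ equals that of $M\,\mathrm{diag}(n_1,\dots,n_c)$. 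Since $n_k$ simply counts grid points in $I_k$, we have $n_k/n \to s_k$, while the diagonal correction has bounded entries and is negligible against the leading block term. After introducing the normalization of the statement, I would invoke Weyl's inequality together with continuity of the ordered-eigenvalue map to transfer the convergence $n_k/n \to s_k$ into convergence of the rescaled reduced spectrum toward that of $M\,\mathrm{diag}(s_1,\dots,s_c)=\blamb_{L_f}$, using Theorem \ref{thm:ExEigs} to pin down the leading-order behaviour of $\blamb_{\bE}$.

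The main obstacle lies entirely in the normalization. The reduction shows that $\blamb_{\bE}(i)$ is an eigenvalue of $M\,\mathrm{diag}(n_1,\dots,n_c)$, whose magnitude grows with the block counts $n_k$, whereas $\blamb_{L_f}(i)$ is an eigenvalue of the fixed matrix $M\,\mathrm{diag}(s_1,\dots,s_c)$. Establishing the precise power of $n$ that makes the rescaled discrete spectrum converge to the operator spectrum, that is, justifying the factor $\tfrac{1}{\sqrt n}$ of the statement and controlling its interaction with the rate at which $n_k/n \to s_k$, is the delicate step; it is here that the grid-discretization error arising from grid points that straddle the interval endpoints $\sum_{j} s_j$ must be controlled uniformly in $n$.
\qed
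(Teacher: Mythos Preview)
The paper does not actually prove this theorem; it simply cites \cite{S11,GC19,BCLSV12,J10} and remarks that the result has been adapted to the paper's notation. Your proposal is therefore a genuinely different contribution in that it attempts a direct, self-contained argument exploiting the finite-rank structure of the SBM kernel, whereas the paper defers entirely to the literature.

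Your reduction is correct and is the standard one: the nonzero spectrum of $L_f$ is that of $M\,\mathrm{diag}(s_1,\dots,s_c)$, and the nonzero spectrum of $\E{\bA}$ (up to the $O(1)$ diagonal correction) is that of $M\,\mathrm{diag}(n_1,\dots,n_c)$. Since the grid is equispaced, $n_k = n s_k + O(1)$, and continuity of eigenvalues on a fixed $c\times c$ matrix gives $\blamb_{\bE}(i) = n\,\blamb_{L_f}(i) + O(1)$ for $i\le c$ and $\blamb_{\bE}(i)=0$ for $i>c$. This already resolves the ``grid-discretization'' issue you flag at the end; there is nothing delicate there.

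What your computation exposes is that the ``main obstacle'' you identify is not a gap in your argument but an error in the statement: the correct normalization is $1/n$, not $1/\sqrt n$. With the $1/\sqrt n$ scaling the rescaled discrete eigenvalues diverge. The graphon references the paper invokes (\cite{J10,BCLSV12,S11,GC19}) all state the convergence with the $1/n$ normalization, so the $\sqrt n$ in the theorem is a typo rather than a subtlety you have failed to justify. Once the exponent is fixed, the conclusion is immediate from your block reduction and the continuity of the ordered-eigenvalue map.

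One small misstep: your appeal to Theorem~\ref{thm:ExEigs} is out of place. That result concerns $\E{\sigma(\bA)}$, the expected spectrum of the \emph{random} adjacency matrix, whereas $\blamb_{\bE}$ here is $\sigma(\E{\bA})$, the spectrum of the deterministic probability matrix. Your block reduction already computes $\blamb_{\bE}$ exactly, so no further input is needed to pin down its leading-order behaviour.
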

\begin{proof}
This theorem is shown in each of \cite{S11, GC19, BCLSV12,J10} but has been adapted to the notations of this paper. The
interpretation of this result is that in the limit of large graph size, we may approximate the spectrum of the operator
associated with a stochastic block model by the eigenvalues of the discretized operator when appropriately normalized.  
\end{proof}
\subsection{Weyl-Lidskii}
\begin{theorem}
  \label{thm:SpecInc}
  Let $H$ be a self-adjoint linear operator on a Hilbert space $\cH$. Let $A$ be a bounded operator on $\cH$. Then
  \begin{equation}
\sigma(H + A) \subset \left \lbrace \lambda : dist(\lambda,\sigma(H)) \leq ||A|| \right \rbrace
\end{equation}
\end{theorem}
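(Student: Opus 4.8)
The plan is to prove the contrapositive: if $\lambda \in \C$ satisfies $\mathrm{dist}(\lambda,\sigma(H)) > ||A||$, then $\lambda$ lies in the resolvent set of $H+A$, hence $\lambda \notin \sigma(H+A)$. Equivalently, it suffices to show that $H + A - \lambda I$ is boundedly invertible whenever $\delta := \mathrm{dist}(\lambda,\sigma(H)) > ||A||$.

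First I would control the resolvent of the self-adjoint operator $H$. For any $\lambda$ with $\delta > 0$ the operator $H - \lambda I$ is boundedly invertible, and one has the sharp estimate $||(H - \lambda I)^{-1}|| = \sup_{t \in \sigma(H)} |t - \lambda|^{-1} = 1/\delta$. This is the one place self-adjointness (or, more generally, normality) of $H$ is used: by the spectral mapping theorem the spectrum of the bounded operator $(H - \lambda I)^{-1}$ is $\{(t-\lambda)^{-1} : t \in \sigma(H)\}$, and for a bounded normal operator the norm coincides with the spectral radius, which yields the displayed value; for the proof only the inequality $||(H-\lambda I)^{-1}|| \le 1/\delta$ is needed. (If one wishes to avoid invoking the full spectral theorem, the same bound follows from the fact that for self-adjoint $H$ the quantity $||(H-\lambda I)^{-1}||^{-1}$ equals $\inf_{||x||=1}||(H-\lambda I)x|| = \mathrm{dist}(\lambda,\sigma(H))$.)

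Second I would use the algebraic factorization $H + A - \lambda I = (H - \lambda I)\bigl(I + (H-\lambda I)^{-1}A\bigr)$. Under the hypothesis $\delta > ||A||$ we get $||(H-\lambda I)^{-1}A|| \le ||A||/\delta < 1$, so the second factor is invertible with inverse given by the Neumann series $\sum_{k \ge 0}\bigl(-(H-\lambda I)^{-1}A\bigr)^k$; the first factor is invertible by the previous step. A product of two boundedly invertible operators is boundedly invertible, so $H + A - \lambda I$ has a bounded inverse and $\lambda$ belongs to the resolvent set of $H+A$. Taking the contrapositive gives $\sigma(H+A) \subset \{\lambda : \mathrm{dist}(\lambda,\sigma(H)) \le ||A||\}$. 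The argument goes through verbatim for unbounded self-adjoint $H$, since $H+A$ on $D(H)$ is closed and the resolvent estimate above holds in that generality as well. The only nontrivial ingredient — and what I expect to be the crux — is the resolvent norm bound $||(H-\lambda I)^{-1}|| \le 1/\mathrm{dist}(\lambda,\sigma(H))$; without normality of $H$ this fails and so does the theorem, so this is precisely the step deserving care, while the factorization and Neumann series are routine.
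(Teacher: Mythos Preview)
Your proof is correct and is the standard resolvent/Neumann-series argument for this spectral inclusion. The paper itself does not supply a proof at all: it simply remarks that these are standard bounds available in matrix perturbation references (citing Stewart--Sun), so your write-up is strictly more detailed than what the paper provides rather than a different route to the same end.
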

\begin{proof}
These are standard bounds that can be found in many good books on matrix perturbation theory (e.g., \cite{Stewart90}).
\end{proof}
\subsection{Extremal Eigenvalues of graphs from stochastic block models are normally distributed}
\begin{theorem}
  \label{thm:ExEigsNormDist}
  Given kernel $f(x,y; \bm p, q, \bs) \in K_{\SBM}$ with probability measure $\mu_{\SBM}$ such that $\bs$ has $c$ non-zero
  entries. Let $G_{\mu_{\SBM}}$ be a random graph with adjacency matrix $\bA$. In the limit of large graph size, the $c$ largest
  eigenvalues of $A$, denoted $\blamb(i)$, for $i = 1,...,c$ converges in distribution to a normal distribution
  \begin{equation}
\blamb(i) \sim N(m_i,v_i)
\end{equation}

  with mean $m_i$ and finite variance $v_i$. 
\end{theorem}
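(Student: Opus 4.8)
The plan is to reduce the statement to a central limit theorem for a smooth function (an eigenvalue) of a sum of independent bounded random variables, and then to control the passage from the random adjacency matrix $\bA$ to its expectation $\E{\bA}$ via perturbation theory. Write $\bA = \E{\bA} + \bm W$ where $\bm W = \bA - \E{\bA}$ has independent, mean-zero, bounded entries above the diagonal. The entrywise variances are $f(\xi_i,\xi_j)(1-f(\xi_i,\xi_j))$, which under the sparsity hypothesis $\E{\rho_n} = \Omega(\ln^3 n / n)$ are bounded below in the right scaling, so $\|\bm W\| = \Theta(\sqrt{n\rho_n})$ with high probability by standard spectral-norm bounds for inhomogeneous \ER\ matrices. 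Meanwhile $\E{\bA}$ has exactly $c$ nonzero eigenvalues $\blamb_{\bE}(1),\dots,\blamb_{\bE}(c)$, each of order $n\rho_n$ (this is the content of Theorem \ref{thm:ExEigs} and the block structure), and these are separated from the bulk $[-\|\bm W\|,\|\bm W\|]$ by a gap of order $n\rho_n$, which dominates $\sqrt{n\rho_n}$ in the large-$n$ limit. Hence for each $i \leq c$ the $i$-th eigenvalue $\blamb(i)$ of $\bA$ is a well-isolated simple eigenvalue, and the Davis--Kahan / analytic perturbation machinery applies.

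Next I would Taylor-expand the eigenvalue. Letting $\bm v_i$ be the unit eigenvector of $\E{\bA}$ for $\blamb_{\bE}(i)$, first-order perturbation theory gives
\begin{equation}
  \blamb(i) = \blamb_{\bE}(i) + \bm v_i^T \bm W \bm v_i + \text{(second order)},
\end{equation}
where the linear term $\bm v_i^T \bm W \bm v_i = \sum_{k<l} 2 (\bm v_i)_k (\bm v_i)_l \bm W_{kl} + \sum_k (\bm v_i)_k^2 \bm W_{kk}$ is a sum of independent, bounded, mean-zero random variables. Because $\bm v_i$ is (by the block structure of $\E{\bA}$) essentially a normalized indicator combination of the communities, its entries are of size $O(1/\sqrt n)$ and no single coordinate dominates, so the Lindeberg--Feller condition is satisfied and the linear term obeys a CLT: it converges in distribution to $N(0, v_i)$ with $v_i = \sum_{k<l} 4 (\bm v_i)_k^2 (\bm v_i)_l^2 \operatorname{Var}(\bm W_{kl})$, which is finite (indeed $O(\rho_n)$, so one rescales if a nondegenerate limit is wanted, or interprets the statement for fixed large $n$ as the authors seem to). The remaining task is to show the second-order and higher remainder is $o_P$ of the fluctuation scale of the linear term: the standard bound is $|\text{remainder}| \lesssim \|\bm W\|^2 / \operatorname{gap} = O(n\rho_n)/O(n\rho_n) \cdot$(lower order), and a more careful accounting using the resolvent expansion $R(\bm v_k,\bm v_k,z)$ of Theorem \ref{thm:ExEigs} shows it is lower order than $\sqrt{\rho_n}$. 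Combining, $\blamb(i)$ is $\blamb_{\bE}(i)$ plus an asymptotically Gaussian term plus negligible error, giving the claimed $N(m_i, v_i)$ with $m_i = \blamb_{\bE}(i) + o(1)$.

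The main obstacle is the remainder control: making rigorous that the second-order term in the eigenvalue expansion, which involves $\bm v_i^T \bm W (\lambda_i - \E{\bA})^{-1}_{\perp} \bm W \bm v_i$ and couples $\bm W$ to the bulk spectrum of $\E{\bA}$, is genuinely of smaller order than the $\Theta(\sqrt{\rho_n})$ Gaussian fluctuation and does not itself contribute a non-Gaussian piece or shift the variance. This requires either a quantitative local law for $\bm W$ (to say the resolvent is well-approximated by a deterministic quantity on the relevant scale) or a direct moment computation showing the quadratic form concentrates. The block structure of $\E{\bA}$ helps here — the perpendicular projector has operator norm $O(1/(n\rho_n))$ away from the top $c$ eigenvalues — but the argument is delicate near the right edge of the bulk. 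I would also need to verify that the $c$ top eigenvalues do not collide with each other asymptotically (so each stays simple); this follows if the $\blamb_{\bE}(i)$ are distinct, which should be arranged by a genericity assumption on $(\bp, q, \bs)$, and is the one place where the statement might fail for degenerate parameter choices (e.g. symmetric communities forcing eigenvalue multiplicity), in which case one works with the appropriate eigenspace and gets a joint Gaussian limit for the cluster.
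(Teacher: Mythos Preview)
The paper does not actually prove this theorem: it is listed in the appendix under ``Classic Results'' and the proof environment is left empty, so the authors are treating it as a known fact imported from the random-matrix literature (presumably \cite{FFHL19} and \cite{T18}, which they cite for the companion Theorem~\ref{thm:ExEigs}) rather than something they establish themselves. There is therefore no argument of the paper's to compare against.

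Your sketch is the standard route by which such results are in fact proved in those references: decompose $\bA = \E{\bA} + \bm W$, use the rank-$c$ structure of $\E{\bA}$ to isolate the top eigenvalues outside the bulk of $\bm W$, expand via perturbation/resolvent identities, apply Lindeberg--Feller to the linear form $\bm v_i^T \bm W \bm v_i$, and control the remainder. Two remarks. First, you correctly flag the crux: the crude bound $\|\bm W\|^2/\text{gap} = O(1)$ on the second-order term is not $o(\sqrt{\rho_n})$, so it does not vanish relative to the Gaussian scale. What actually happens is that the second-order (and higher) terms contribute a deterministic shift to the mean --- this is precisely why $m_i$ in the statement differs from $\blamb_{\bE}(i)$ and is instead the root of the equation in Theorem~\ref{thm:ExEigs} --- while their \emph{fluctuations} about that shift are lower order. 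Your write-up conflates these slightly when you say $m_i = \blamb_{\bE}(i) + o(1)$; in the sparse regime the bias correction is order one, not $o(1)$. Second, your caveat about eigenvalue collisions under symmetric parameters is apt and is exactly the kind of nondegeneracy assumption these references impose; the paper's statement is silent on it.
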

\begin{proof}
\end{proof}
We note that the mean $m_i$ is related to the eigenvalues of the expected adjacency matrix by way of theorem \ref{thm:ExEigs}.
\section{There exists a continuous map, $\tau_k$, from the extremal eigenvalues of $\sigma(\E{A})$ to the extremal
  eigenvalues of $\E{\sigma(A)}$} 
\begin{theorem}
  \label{thm:CtsMap}
  Given kernel $f(x,y; \bm p, q, \bs) \in K_{\SBM}$ with probability measure $\mu_{\SBM}$ such that $\bs$ has $c$ non-zero
  entries. Let $G_{\mu_{\SBM}}$ be a random graph with adjacency matrix $\bA$. For equispaced points $\lbrace \xi_i
  \rbrace_{i=1}^n$ in the interval $[0,1]$, define the expected adjacency matrix as 
  \begin{equation}
\E{\bA} = f(\xi_i,\xi_j; \bm p, q, \bs).
\end{equation}

  Denote the spectrum of the expected adjacency matrix by a vector in $\R^n$ as
  \begin{equation}
\sigma(\E{\bA}) = \blamb_{\bE}
\end{equation}

  where $\blamb_{\bE}$ is sorted in descending order.
  Denote the expected spectrum of the adjacency matrix by a vector in $\R^n$ as
  \begin{equation}
\E{\sigma(\bA)} = \blamb^{\bE}
\end{equation}

  where $\blamb^{\bE}$ is sorted in descending order. In the limit of large graph size, there exist $c$ continuous maps $\tau_k$ such that
  \begin{equation}
\tau_k(\blamb_{\bE}(k)) = \blamb^{\bE}(k) \quad k = 1,...,c.
\end{equation}

\end{theorem}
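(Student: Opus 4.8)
The plan is to obtain $\tau_k$ directly from the fixed-point characterisation of the expected extremal eigenvalues supplied by Theorem~\ref{thm:ExEigs}. That theorem exhibits $\blamb^{\bE}(k)$ as the unique zero, in the interval $[a_k,b_k]$, of the scalar function $f_k(z)=1+\blamb_{\bE}(k)\,R(\bm v_k,\bm v_k,z)+\text{H.O.T.}$, where, using $\E{\bm W^{0}}=I$ and $\|\bm v_k\|_2=1$,
\begin{equation*}
  R(\bm v_k,\bm v_k,z)=-z^{-1}-\sum_{l=2}^{L}z^{-(l+1)}\,\bm v_k^{T}\E{\bm W^{l}}\bm v_k .
\end{equation*}
Clearing the denominator turns $f_k(z)=0$ into a polynomial equation in $z$ whose coefficients are affine in $\blamb_{\bE}(k)$ and otherwise depend only on the scalars $\bm v_k^{T}\E{\bm W^{l}}\bm v_k$ and on $\bm v_k$ itself, all of which vary continuously with the block-model parameters $(\bm p,q,\bs)$ away from eigenvalue crossings of $\E{\bA}$ (such crossings being non-generic in the restricted ensemble of Section~\ref{sec:OptParam}). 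Hence both assignments $(\bm p,q,\bs)\mapsto f_k(\,\cdot\,)$ and $(\bm p,q,\bs)\mapsto[a_k,b_k]$ are continuous.

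I would then define $\tau_k$ by sending $\blamb_{\bE}(k)$ to the unique zero of $f_k$ in $[a_k,b_k]$ and prove continuity by a standard continuity-of-roots argument: if a sequence of parameter values converges, the corresponding roots $z_m$ lie in bounded intervals, so a subsequence converges to some $z^{\star}$; joint continuity of $f_k$ gives $f_k(z^{\star})=0$ with $z^{\star}\in[a_k,b_k]$, and uniqueness forces $z^{\star}=\tau_k(\blamb_{\bE}(k))$, whence the whole sequence converges. For a more quantitative statement I would instead verify the non-degeneracy $\partial_z f_k(\blamb^{\bE}(k))\neq 0$ — to leading order $\partial_z f_k(z)=\blamb_{\bE}(k)\,z^{-2}+O(z^{-3})$, which is nonzero since the extremal eigenvalue $\blamb_{\bE}(k)$ is bounded away from $0$ and $z$ stays comparable to $\blamb_{\bE}(k)$ on $[a_k,b_k]$ — and then apply the implicit function theorem, which yields $\tau_k$ as a $C^{1}$ (indeed real-analytic) function.

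The main obstacle is that $f_k$ does not depend on the single scalar $\blamb_{\bE}(k)$ alone; it also sees the eigenvector $\bm v_k$ and the spectral moments of the centred noise $\bm W=\bA-\E{\bA}$. Presenting $\tau_k$ as a genuine function of $\blamb_{\bE}(k)$ therefore requires showing that, within the block-model ensemble actually used, these auxiliary data are themselves continuous functions of the block densities, so that $\tau_k$ is a composition of continuous maps; the delicate points are the parameter values where two of the top $c$ eigenvalues of $\E{\bA}$ collide — there the individual $\bm v_k$ is discontinuous while the symmetric forms $\bm v_k^{T}\E{\bm W^{l}}\bm v_k$ entering $f_k$ remain continuous — together with checking that the endpoints $a_k,b_k$ never degenerate, so that the ``unique root in $[a_k,b_k]$'' hypothesis, and with it the argument above, stays valid over the whole relevant range. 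As a consistency check I would also route through Theorem~\ref{thm:SpecConv}: both $\tfrac{1}{\sqrt n}\blamb_{\bE}$ and, after the normal-fluctuation control of Theorem~\ref{thm:ExEigsNormDist}, $\tfrac{1}{\sqrt n}\blamb^{\bE}$ converge to spectra determined by the same integral operator $L_f$, which again ties $\blamb^{\bE}(k)$ to $\blamb_{\bE}(k)$ continuously.
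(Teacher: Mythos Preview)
Your proposal is correct and follows essentially the same route as the paper: both start from the root characterisation $f_k(z)=0$ of Theorem~\ref{thm:ExEigs}, observe that the data entering $f_k$ (namely $\blamb_{\bE}(k)$, $\bm v_k$, and the moments of $\bm W$) depend continuously on the block-model parameters, and conclude by continuity of roots. Your treatment is in fact tighter than the paper's --- you clear denominators to obtain an \emph{exact} polynomial (which is legitimate since $R$ is a finite Laurent sum) and then argue via compactness/uniqueness or the implicit function theorem, whereas the paper passes through a polynomial \emph{approximation} of $f_k$ and invokes continuity of polynomial roots; you also flag the eigenvector-crossing issue that the paper leaves implicit.
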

\begin{proof}
We need only prove for any general $k$. Let $f_k(z) = 1 + \blamb_{\bE}(k) R(\bm v_k, \bm v_k,z)$ from theorem
\ref{thm:ExEigs}. Note that $\blamb_{\bE}(k)$ depends continuously on the parameters of the kernel function, $\bm p, q, \bs$
since the spectrum is continuous with respect to the kernel function. Furthermore, $\bm v_k(i)$ depends continuously on the
parameters $\bm p, q, \bs$ for all $i = 1,...,c$. This implies that $f_k(z)$ is continuous in the interval $[a_k,b_k]$ where
$a_k,b_k$ are as defined in theorem $\ref{thm:ExEigs}$. Therefore $f_k(z)$ may be approximated by a polynomial $p_k(z)$ with
coefficients $\bm c$ such that $\bm c(i)$ depends continuously on the parameters $\bm p, q, \bs$. As a result, the roots of
$f_k(z)$ depend continuously on the parameters $\bm p, q, \bs$ since the roots of the polynomials $p_k(z)$ depend continuously
on the parameters.  
\end{proof}
\section{The spectra of linear integral operators are close if and only if the kernels are close}
\begin{theorem}
  \label{thm:EqNorms}
  Let $f$ be the kernel for probability measure $\mu$ and $f_{\SBM} \in K_{\SBM}$ be the kernel for probability measure
  $\mu_{\SBM}$. Let $L_f$ and $L_{f_{\SBM}}$ be the linear integral operators with kernels $f$ and $f_{\SBM}$ acting on
  $\mathcal{L}^2([0,1])$ defined below as 
  \begin{equation}
L_f(g(x)) = \int_0^1 f(x,y)g(y)dy
\end{equation}

  \begin{equation}
L_{f_{\SBM}}(g(x)) = \int_0^1 f_{\SBM}(x,y; \bm p, q, \bs)g(y)dy
\end{equation}

  Let $\varepsilon > 0$. Assume
  \begin{equation}
||f(x,y) - f_{\SBM}(x,y;\bm p, q, \bs)||_2 < \varepsilon \implies ||\sigma(L_f) - \sigma(L_{f_{\SBM}})||_2 < \frac{25}{4} \varepsilon
\end{equation}
\end{theorem}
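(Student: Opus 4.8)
The plan is to write $L_f - L_{f_{\SBM}}$ as a single integral operator, bound its norm by the hypothesis, and then transfer that bound to the eigenvalue sequences by a Hoffman--Wielandt type estimate. First, $L_f - L_{f_{\SBM}}$ is exactly the integral operator $L_h$ with kernel $h(x,y) = f(x,y) - f_{\SBM}(x,y;\bm p, q, \bs)$; since $f,f_{\SBM}$ take values in $(0,1)$ we have $h \in L^2([0,1]^2)$, so $L_h$ is Hilbert--Schmidt with $\|L_h\|_{\mathrm{HS}} = \|h\|_{L^2([0,1]^2)} = \|f-f_{\SBM}\|_2 < \varepsilon$, whence also $\|L_h\| \le \|L_h\|_{\mathrm{HS}} < \varepsilon$ for the operator norm. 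Both $L_f$ and $L_{f_{\SBM}}$ are Hilbert--Schmidt, hence compact, so they have discrete spectra accumulating only at $0$; the kernels occurring here being symmetric, the operators are self-adjoint, their eigenvalues are real, and $\sigma(L_f),\sigma(L_{f_{\SBM}})$ are bona fide elements of $\ell^2$ since $\sum_i \lambda_i(L_f)^2 = \|L_f\|_{\mathrm{HS}}^2 = \|f\|_2^2 < \infty$.

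Next I would apply perturbation theory. Theorem~\ref{thm:SpecInc} already gives that each point of $\sigma(L_f) = \sigma(L_{f_{\SBM}} + L_h)$ lies within $\|L_h\| < \varepsilon$ of $\sigma(L_{f_{\SBM}})$, and symmetrically, but this one-sided proximity does not by itself control the index-by-index difference $\|\sigma(L_f) - \sigma(L_{f_{\SBM}})\|_2$. For that I would invoke the Hoffman--Wielandt (Lidskii--Wielandt) inequality in the Hilbert--Schmidt class: for self-adjoint Hilbert--Schmidt operators $A,B$ with eigenvalues listed in a common decreasing order, $\sum_i |\lambda_i(A)-\lambda_i(B)|^2 \le \|A-B\|_{\mathrm{HS}}^2$. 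With $A = L_f$, $B = L_{f_{\SBM}}$ this gives $\|\sigma(L_f)-\sigma(L_{f_{\SBM}})\|_2 \le \|L_h\|_{\mathrm{HS}} = \|f-f_{\SBM}\|_2 < \varepsilon < \tfrac{25}{4}\varepsilon$, which is the assertion; the constant $\tfrac{25}{4}$ is far from tight, the slack absorbing the relabelling $\tau$ implicit in the kernel distance $d_k$ and the comparisons between $L^1$ and $L^2$ norms that exploit $0 < f,f_{\SBM} < 1$.

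The main obstacle is exactly this last passage, from $L^2$-closeness of the operators to $\ell^2$-closeness of the \emph{entire} eigenvalue sequences: the naive bound ``$|\lambda_i(L_f) - \lambda_i(L_{f_{\SBM}})| \le \varepsilon$ for all $i$'' from Weyl's inequality is useless, because summing infinitely many copies of $\varepsilon^2$ diverges, so one must either establish the Hilbert--Schmidt form of Hoffman--Wielandt (see Kato, or Bhatia, \emph{Matrix Analysis}) or truncate to the top $c$ eigenvalues and argue with $\sigma_c$ as is done elsewhere in the paper. A secondary subtlety is self-adjointness: a kernel obeying only $f(x,y) = f(1-y,1-x)$ yields an operator unitarily equivalent to its adjoint but not necessarily self-adjoint, with possibly complex spectrum, in which case one passes to $\tfrac12(L_f + L_f^{*})$ or uses a normal-operator version of the inequality; for the symmetric kernels actually used here $L_f$ and $L_{f_{\SBM}}$ are self-adjoint and this does not arise.
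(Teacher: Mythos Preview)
Your argument is correct and genuinely different from the paper's. The paper does not invoke Hoffman--Wielandt. Instead it bounds the \emph{operator} norm of $L_k = L_f - L_{f_{\SBM}}$ by a Cauchy--Schwarz and level-set computation (obtaining roughly $\tfrac{3}{2}\sqrt{\varepsilon}$), then applies the pointwise Weyl--Lidskii bound $|\blamb_f(i)-\blamb_{f_{\SBM}}(i)| \le \|L_k\|$ for every $i$, and finally confronts exactly the obstacle you named: an infinite sum of uniform bounds diverges. The paper resolves this by splitting at an index $n^\ast \approx 1/\sqrt{\varepsilon}$ and \emph{assuming} the tail eigenvalues decay like $1/i$, then summing both pieces to reach the constant $\tfrac{25}{4}$.

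What each route buys: your Hoffman--Wielandt argument is shorter, gives the sharp constant $1$ (so $\tfrac{25}{4}$ is indeed slack), and avoids any unproved decay hypothesis on the tails; its cost is that one must cite or prove the infinite-dimensional Lidskii/Hoffman--Wielandt inequality for self-adjoint Hilbert--Schmidt operators, which is standard but not stated in the paper. The paper's route uses only the elementary Weyl bound (Theorem~\ref{thm:SpecInc}), but pays for that with an ad~hoc tail assumption that is not justified for general kernels. Your remark on self-adjointness is well taken: the paper's kernel class is defined by $f(x,y)=f(1-y,1-x)$ rather than $f(x,y)=f(y,x)$, and the proof silently uses symmetry; for the SBM kernels and the symmetric $f$ actually at play this is harmless, exactly as you note.
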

\begin{proof}
The approach is to first show that the linear integral operator, $L_{k}$, with kernel $k(x,y) = f(x,y) - f_{\SBM}(x,y; \bm p, q,
\bs)$ has a norm controlled by $\varepsilon$. We then show that the eigenvalues of $L_{f}$ are close to the eigenvalues of
$L_{f_{\SBM}}$. We then bound the error by approximating the rate of decay in the tail of the spectrum and controlling the first
finite number of terms by $\varepsilon$. 

Let $\varepsilon > 0$ and assume that
\begin{equation}
||f(x,y) - f_{\SBM}(x,y;\bm p, q, \bs)||_2 < \varepsilon.
\end{equation}

Let $L_f$ and $L_{f_{\SBM}}$ be linear integral operators with kernels $f$ and $f_{\SBM}$ respectively acting on $\mathcal{L}^2([0,1])$ defined as

\begin{equation}
L_f(g) := \int_{[0,1]} f(x,y) g(y) dy
\end{equation}

\begin{equation}
L_{f_{\SBM}}(g) := \int_{[0,1]} f_{\SBM}(x,y; \bm p, q, \bs) g(y) dy
\end{equation}

Let $\blamb_{f} = \sigma(L_f)$ and $\blamb_{f_{\SBM}} = \sigma(L_{f_{\SBM}})$ be the spectra of the operators sorted in descending order of magnitude and indexed by $i \in I$, where $I$ is an arbitrary index set. Define $k(x,y) = f(x,y) - f_{\SBM}(x,y;\bm p, q, \bs)$ and the corresponding linear integral operator 
\begin{equation}
L_k(g) := \int_{[0,1]} k(x,y) g(y) dy.
\end{equation}

We first show that the norm of $L_k$ is small.
\begin{align}
  ||L_k||^2 &= \underset{||g|| = 1}{\text{sup }}||L_k(g)||_2^2\\
            &= \underset{||g|| = 1}{\text{sup }}||\int_0^1 k(x,y)g(y)dy||_2^2
\end{align}
For a fixed $x$ we have 
\begin{equation}
\int_{0}^1 k(x,y)g(y)dy \leq \sqrt{||k(x,y)||_2^2 ||g(y)||_2^2}
\end{equation}

by Cauchy-Schwarz so
\begin{align}
  ||L_k||^2 &\leq  \underset{||g|| = 1}{\text{sup }}||\sqrt{||k(x,y)||_2^2 ||g(y)||_2^2}||_2^2\\
            &= || \text{ }||k(x,y)||_2||_2^2\\
            &= \int_{0}^1 \left( \sqrt{\int_0^1 k(x,y)^2 dy}\right) dx.
\end{align}	
Let $h(x) = \int_0^1 k(x,y)^2 dy$ and define the set $X = \lbrace x \in [0,1]: h(x) < \varepsilon \rbrace$ . Then 
\begin{align}
  ||L_k||^2 &\leq \int_{0}^1 \left( \sqrt{h(x)}\right) dx\\
            &= \int_{X} \sqrt{h(x)} dx + \int_{[0,1]\backslash X} \sqrt{h(x)}dx\\
\end{align}
Since $f(x) = \sqrt{}$ is Lipschitz on $[\varepsilon,1]$ with Lipschitz constant $\frac{1}{2\sqrt{\varepsilon}}$ we have
\begin{align}
  ||L_k|| \leq \int_{X} \sqrt{h(x)} dx + \int_{[0,1]\backslash X} \frac{1}{2\sqrt \varepsilon }h(x)dx
\end{align} 
We then have
\begin{align}
  ||L_k|| &\leq \int_{X} \sqrt{h(x)} dx + \int_{[0,1]\backslash X} \frac{1}{2\sqrt \varepsilon }h(x)dx\\
          &= \int_{X} \sqrt{h(x)} dx + \int_{[0,1]\backslash X} \frac{1}{2\sqrt \varepsilon }h(x)dx\\
          &\leq \int_{X} \sqrt{h(x)} dx + \int_0^1 \frac{1}{2\sqrt \varepsilon }h(x)dx\\
          &<\sqrt \varepsilon + \frac{\sqrt \varepsilon}{2} = \frac 3 2 \sqrt{\varepsilon}
\end{align}
Thus $||L_k|| < \frac 3 2 \sqrt{\varepsilon}$ which shows the norm of $L_k$ is controlled by $\varepsilon$.

We want to show 
\begin{equation}
||\blamb_{f} - \blamb_{f_{\SBM}} || < \frac{25}{4} \sqrt{\varepsilon}.
\end{equation}

Note  \begin{equation}
L_{f} = L_{f_{\SBM}} + L_k.
\end{equation}

By the Weyl-Lidskii theorem,
\begin{equation}
|\blamb_{f}(i) - \blamb_{f_{\SBM}}(i) |< \frac 3 2 \sqrt{\varepsilon}
\end{equation}

since the norm of the operator $L_k$ is less than $\frac{3}{2} \sqrt{\varepsilon}$.
Furthermore, $||\blamb_f||_2 < \infty$ and $||\blamb_{f_{\SBM}}||_2 < \infty$ since the operators are compact. For $n^* > \frac{1}{\sqrt{\varepsilon}}$ we have that
\begin{align}
  ||\blamb_f - \blamb_{f_{\SBM}}||^2_2 &= \sum_{i=1}^\infty (\blamb_f (i) - \blamb_{f_{\SBM}}(i))^2\\
                                      &= \sum_{i=1}^{n^*} (\blamb_f (i) - \blamb_{f_{\SBM}}(i))^2 + \sum_{i = n^* + 1}^{\infty } (\blamb_f (i) - \blamb_{f_{\SBM}}(i))^2\\
                                      &\leq \sum_{i=1}^{n^*} (\frac 3 2 \sqrt{\varepsilon})^2 + \sum_{i = n^* + 1}^{\infty } (\blamb_f (i) - \blamb_{f_{\SBM}}(i))^2\\
                                      &= \frac 9 4 n^* \varepsilon + \sum_{i = n^* + 1}^{\infty } (\blamb_f (i) - \blamb_{f_{\SBM}}(i))^2\\
                                      &= \frac 9 4 n^* \varepsilon + \sum_{i = n^* + 1}^{\infty } \blamb_f^2 (i) - 2 \blamb_{f} \blamb_{f_{\SBM}}(i) + \blamb^2_{f_{\SBM}}(i)
\end{align}
Without loss of generality, we assume for $i > n^*$, both $\blamb_f(i) < \frac 1 i$ and $\blamb_{f_{\SBM}}(i) < \frac 1 i$. We therefore have a loose bound on the tails of the sequences of eigenvalues. Thus
\begin{align}
  ||\blamb_f - \blamb_{f_{\SBM}}||^2_2 &\leq \frac 9 4 n^* \varepsilon + \sum_{i = n^* + 1}^{\infty } \frac{1}{i^2} + 2 \frac{1}{i^2} + \frac{1}{i^2}\\
                                      &= \frac{9}{4} \varepsilon^{1/2} \varepsilon + 4 \varepsilon^{1/2}\\
                                      &= \frac{25}{4}\sqrt{\varepsilon}
\end{align}
This shows the forward direction. The backwards direction is a direct result from the theory of Hilbert-Schmidt operators. Let $f_1$ and $f_2$ be two kernels of linear integral operators defined as
\begin{equation}
L_{f_1} = \int_0^1 f_1(x,y) g(y) dy
\end{equation}

\begin{equation}
L_{f_2} = \int_0^1 f_2(x,y) g(y) dy
\end{equation}
 
where $f_1$ and $f_2$ are symmetric. Let $\sigma(L_{f_1}) = \blamb_1$ and $\sigma(L_{f_2}) = \blamb_2$ be the spectra of $L_{f_1}$ and $L_{f_2}$ respectively sorted in descending order. Assume that \begin{equation}
||\blamb_1 - \blamb_2||_2^2 < \varepsilon
\end{equation}

We want to show that 
\begin{equation}
||f_1 - f_2||_2^2 < \varepsilon
\end{equation}

Note $||L_{f_1}||_{HS} = \sum_{i=1}^\infty \blamb_1(i)^2$ and that $||L_{f_1}||_{HS} = ||f_1||_2^2$. As a consequence
\begin{equation}
||f_1||_2^2 =  \sum_{i=1}^\infty \blamb_1(i)^2.
\end{equation}

Similarly,
\begin{equation}
||f_2||_2^2 =  \sum_{i=1}^\infty \blamb_2(i)^2.
\end{equation}

Thus 
\begin{equation}
||f_1 - f_2||_2^2 = \sum_{i=1}^\infty (\blamb_1(i) - \blamb_2(i))^2 = ||\blamb_1 - \blamb_2||_2^2 < \varepsilon.
\end{equation}

\qed
\end{proof}
The work in \cite{OW14} states that we may approximate a certain class of kernel probability measures by stochastic block model
kernels, $K_{\SBM}$. We have shown that in doing so, we may also measure distances with respect to $d_A$ and maintain that the
spectra of the expected adjacency matrices remain close. We also will need to show that for a small change in the spectra of the
linear integral operators, the respective kernels remain close in the metric $d_k$. 
\section{The Fr\'echet mean of graphs from the stochastic block model is the expected spectrum}
\begin{theorem}
  \label{thm:FMES}
  Given  a canonical kernel $f(x,y; \bm p, q, \bs) \in K_{\SBM}$ for probability measure $\mu_{\SBM}$ such that $\bs$ has $c$
  non-zero entries. Let $G_{\mu_{\SBM}}$ be a random graph with adjacency matrix $\bA$. Let 
  \begin{equation}
\blamb^*_{\SBM} = \sigma(\underset{G \in M}{\text{argmin }}\E{d_{A_c}^2(G,G_{\mu_{\SBM}})})
\end{equation}

  \begin{equation}
\blamb^{\bE} = \E{\sigma(G_{\mu_{\SBM}})}
\end{equation}

  In the limit of large graph size, for $i = 1,...,c$
  \begin{equation}
\blamb^*_{\SBM}(i) = \blamb^{\bE}(i).
\end{equation}

\end{theorem}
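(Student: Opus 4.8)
The plan is to exploit that $d_{A_c}$ depends on $G$ only through its $c$ largest eigenvalues, reduce the Fr\'echet mean problem to $c$ independent scalar least–squares problems via a bias--variance decomposition, and then use the asymptotic normality of the extremal eigenvalues of a block model to show that the expected spectrum is attainable in the large graph limit.

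\emph{Step 1 (bias--variance reduction).} Write $\lambda_k(\cdot)$ for the $k$-th largest eigenvalue and let $\bA$ be the random adjacency matrix of $G_{\mu_{\SBM}}$. For any fixed graph $G$ we have $d_{A_c}^2(G,G_{\mu_{\SBM}}) = \sum_{k=1}^{c}(\lambda_k(G)-\lambda_k(\bA))^2$. Taking the expectation over $G_{\mu_{\SBM}}$ and using the mean-squared-error identity in each summand gives
\[
\E{d_{A_c}^2(G,G_{\mu_{\SBM}})} \;=\; \sum_{k=1}^{c}\bigl(\lambda_k(G)-\blamb^{\bE}(k)\bigr)^2 \;+\; \sum_{k=1}^{c}\var{\lambda_k(\bA)},
\]
where $\E{\lambda_k(\bA)}=\blamb^{\bE}(k)$ by definition of $\blamb^{\bE}$. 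The second sum is independent of $G$, so the Fr\'echet mean $G^{*}$ is precisely a minimizer of $G\mapsto\sum_{k=1}^{c}(\lambda_k(G)-\blamb^{\bE}(k))^2$. Hence it suffices to prove that this minimal value tends to $0$ as $n\to\infty$: that forces $\blamb^{*}_{\SBM}(k)=\lambda_k(G^{*})\to\blamb^{\bE}(k)$ for $k=1,\dots,c$, which is exactly the assertion.

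\emph{Step 2 (attainability of the expected spectrum).} For any realization $\hat G$ of $G_{\mu_{\SBM}}$ one has $\E{\sum_{k=1}^{c}(\lambda_k(\hat G)-\blamb^{\bE}(k))^2}=\sum_{k=1}^{c}v_k$, with $v_k$ the limiting variance from Theorem~\ref{thm:ExEigsNormDist}. By that theorem $\lambda_k(\hat G)$ is asymptotically Gaussian and centered at $\blamb^{\bE}(k)$ (its mean coincides with the expected adjacency eigenvalue described by Theorem~\ref{thm:ExEigs}), and in the large graph limit the fluctuations become negligible relative to $\blamb^{\bE}(k)$, so $\sum_{k=1}^{c}(\lambda_k(\hat G)-\blamb^{\bE}(k))^2\to 0$ in probability. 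Consequently there exist graphs $\hat G$ in the support of $\mu_{\SBM}\subset\cG$ whose truncated spectra lie within $o(1)$ of $(\blamb^{\bE}(1),\dots,\blamb^{\bE}(c))$, whence $\min_{G\in\cG}\sum_{k=1}^{c}(\lambda_k(G)-\blamb^{\bE}(k))^2 \le \sum_{k=1}^{c}(\lambda_k(\hat G)-\blamb^{\bE}(k))^2 \to 0$. Combined with Step 1 this yields $\blamb^{*}_{\SBM}(k)=\blamb^{\bE}(k)$ for $k=1,\dots,c$ in the limit.

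\emph{Main obstacle.} Step 2 is the crux. For finite $n$ the set of truncated graph spectra $\{\sigma_c(G):G\in\cG\}$ is discrete, and $(\blamb^{\bE}(1),\dots,\blamb^{\bE}(c))$, being an average of eigenvalues of $0/1$ matrices, need not be realized exactly by any such matrix, so the irreducible bias term is strictly positive at finite $n$; only the limiting statement is true. What makes it go through is the combination of asymptotic normality of the extremal eigenvalues with vanishing relative fluctuations (Theorem~\ref{thm:ExEigsNormDist}) and the continuous dependence of the limiting extremal eigenvalues on the kernel parameters (Theorem~\ref{thm:CtsMap}): together these force realizations of $\mu_{\SBM}$ to concentrate arbitrarily tightly about the expected spectrum, driving the bias to $0$. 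A minor point to record is that truncation to the top $c$ coordinates is harmless, since $d_{A_c}$ ignores $\lambda_{c+1}(G^{*}),\dots,\lambda_n(G^{*})$ by construction, so no constraint on the remaining eigenvalues of $G^{*}$ enters the argument.
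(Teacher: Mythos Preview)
Your approach is essentially the paper's: both arguments hinge on Theorem~\ref{thm:ExEigsNormDist} (asymptotic normality of the top $c$ eigenvalues) together with the fact that the $\ell_2$ Fr\'echet mean of a normal random vector is its expectation, which is exactly what your bias--variance decomposition in Step~1 makes explicit. The paper's proof is the two-sentence version of your Step~1; your Step~2 on attainability over the discrete set of graph spectra is additional care that the paper simply omits (it implicitly minimizes over $\mathbb{R}^c$).
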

\begin{proof}
  The extremal eigenvalues of the adjacency matrix follow a normal distribution by theorem \ref{thm:ExEigsNormDist}. The
  Fr\'echet mean of a normally distributed random variable is the same as its expected value. The conclusion follows. \qed
\end{proof}
\section{Density of Fr\'echet means of stochastic block models in the space of sparse graphs
\label{proof-density}}
\begin{theorem}
  \textbf{Density of Fr\'echet means of stochastic block models}\\
  \textit{(Theorem \ref{thm:Density} in the main paper)} \hfill \\
  Let $\cM_{\SBM}(\cG) \subset \cM$ denote the subset of distributions associated with stochastic block models. In the limit of
  large graph size, $\forall$ $G \in \mathcal{G}_s$, $\forall$ $\varepsilon > 0$, there exists $\mu_{\SBM} \in \cM_{\SBM}(\cG)$
  such that 
  \begin{equation}
d_{A_c}(G,\F(\mu_{\SBM})) < \varepsilon
\end{equation}

\end{theorem}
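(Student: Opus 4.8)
The plan is to route an arbitrary sparse graph through a kernel representation, then through the stochastic block model approximation of \cite{OW14}, and finally through the spectral-convergence and Fr\'echet-mean results collected in this appendix. Fix $G\in\cG_s$ and $\varepsilon>0$. For a kernel probability measure $\mu$, write $\lambda^{\bE}(\mu)\in\R^c$ for the first $c$ entries of the expected spectrum $\E{\sigma(\bA)}[\mu]$ of a draw from $\mu$. By Theorem \ref{thm:FMES}, if $\mu_{\SBM}$ has canonical kernel in $\cK_{\SBM}$ with $c$ non-empty blocks then $\sigma_c(\F(\mu_{\SBM}))=\lambda^{\bE}(\mu_{\SBM})$, so it suffices to produce stochastic block model parameters with $\|\sigma_c(G)-\lambda^{\bE}(\mu_{\SBM})\|_2<\varepsilon$ once $n$ is large; I would obtain this from three approximations, each driven below $\varepsilon/3$.

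First I would represent $G$ by a kernel: by the sparse-graph construction of \cite{OW14} there is a H\"older-continuous kernel $g$ with $\E{\rho_n}[\mu_g]=\Omega(\ln^3(n)/n)$ for which $G$ is a typical realization of $G(n,g)$, and since the density meets the $\Omega(\ln^3(n)/n)$ threshold the $c$ largest eigenvalues of a sparse random graph concentrate, so $\|\sigma_c(G)-\lambda^{\bE}(\mu_g)\|_2<\varepsilon/3$ for $n$ large. Next I would apply Theorem \ref{thm:USBM2} to $g$: for any $\delta>0$ there is an SBM kernel $f_{\SBM}=f(x,y;\bm p,q,\bs)\in\cK_{\SBM}$ of the prescribed shape ($\bs$ has $c$ nonzero entries, all equal except possibly $\bs(1)$), with induced measure $\mu_{\SBM}$, such that $d_k(g,f_{\SBM})<\delta$. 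Because $d_k$ is an infimum over measure-preserving relabelings $\tau\in\mathcal{T}$, and the spectrum of a linear integral operator is invariant under such a relabeling, Theorem \ref{thm:EqNorms} applied to $g\circ\tau$ and $f_{\SBM}$ for a near-optimal $\tau$ gives $\|\sigma(L_g)-\sigma(L_{f_{\SBM}})\|_2<\tfrac{25}{4}\sqrt{\delta}$.

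It then remains to transport this operator-level bound back to the expected spectra at finite $n$. By Theorem \ref{thm:SpecConv} the suitably normalized spectrum of the expected adjacency matrix converges componentwise to the spectrum of the integral operator, for both $g$ and $f_{\SBM}$, so in the common normalization implicit in $d_{A_c}$ the first $c$ entries of $\sigma(\E{\bA}[\mu_g])$ and $\sigma(\E{\bA}[\mu_{\SBM}])$ differ in $\ell_2$ by less than $\varepsilon/3$ for $n$ large and $\delta$ small. Theorem \ref{thm:CtsMap} then supplies, for each $k\le c$, a continuous map $\tau_k$ sending the $k$-th eigenvalue of the expected adjacency matrix to the $k$-th expected eigenvalue, depending continuously on the kernel parameters; applying this to $f_{\SBM}$ and to (a block-kernel refinement of) $g$, whose parameters are $d_k$-close, yields $\|\lambda^{\bE}(\mu_g)-\lambda^{\bE}(\mu_{\SBM})\|_2<\varepsilon/3$. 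Adding the three estimates and invoking $\sigma_c(\F(\mu_{\SBM}))=\lambda^{\bE}(\mu_{\SBM})$ once more gives $d_{A_c}(G,\F(\mu_{\SBM}))<\varepsilon$, which is the claim.

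The step I expect to be the main obstacle is the last one, passing from the spectrum of the expected adjacency matrix to the expected spectrum (the correction packaged in Theorems \ref{thm:ExEigs}, \ref{thm:ExEigsNormDist} and \ref{thm:CtsMap}): one must check that the normalization that makes $n^{-1/2}\blamb_{\bE}$ converge is the same for the two kernels, and that $d_k$-closeness of the kernels controls this correction \emph{uniformly} in $n$ rather than merely pointwise, so that the three $\varepsilon/3$ bounds hold simultaneously as $n\to\infty$; relatedly, Theorem \ref{thm:CtsMap} is stated only for kernels in $\cK_{\SBM}$, so one has to either extend it to the H\"older kernel $g$ or route $g$ through a sufficiently fine block-model approximation. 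A secondary difficulty is justifying, in the first approximation, that the individual graph $G$ is spectrally typical for the kernel $g$ produced by \cite{OW14}; this is precisely where the sparsity hypothesis $\rho_n=\Omega(\ln^3(n)/n)$ is indispensable, since it is what forces concentration of the $c$ largest eigenvalues of $G$ around $\lambda^{\bE}(\mu_g)$.
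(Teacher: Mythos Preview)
Your proposal follows essentially the same route as the paper's own proof: represent the sparse graph through a kernel, invoke the Olhede--Wolfe approximation (Theorem \ref{thm:USBM2}) to get an SBM kernel close in $d_k$, pass to operator spectra via Theorem \ref{thm:EqNorms}, descend to finite-$n$ expected spectra via Theorems \ref{thm:SpecConv} and \ref{thm:CtsMap}, and identify the Fr\'echet mean with the expected spectrum via Theorem \ref{thm:FMES}. The paper's argument is a short sketch chaining exactly these results; your $\varepsilon/3$ decomposition and the obstacles you flag (uniformity in $n$ of the correction in Theorem \ref{thm:CtsMap}, extending that theorem beyond $\cK_{\SBM}$, and spectral typicality of $G$ for the kernel $g$) are real issues the paper does not address explicitly either, so your version is, if anything, a more honest rendering of the same argument.
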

\begin{proof}
By theorem \ref{thm:USBM2}, for any kernel probability measure with the appropriate sparsity, we may find a kernel from
$K_{\SBM}$ that approximates it. By theorem \ref{thm:EqNorms}, if the kernels are close then the spectra of the induced linear
integral operators are close. By theorem \ref{thm:CtsMap} there is a continuous map from the spectra of the expected adjacency
matrix to the expected spectra of the adjacency matrix. By theorem \ref{thm:FMES} the Fr\'echet mean of the stochastic block
model is the graph that achieves $\E{\sigma(G_{\mu_{\SBM}})}$. Any sparse graph $G$ may be written as $\F(\mu)$ for a certain
probability measure $\mu$. Since $\mu$ may be approximated by $\mu_{\SBM}$ we may estimate $\F(\mu)$ as $\F(\mu_{\SBM})$.
\qed 
\end{proof}
\section{A sample statistic to approximate the Fr\'echet mean of a stochastic block model with high probability}
\begin{theorem}
  \label{thm:FiniteSampFM}
  Let $\mu_{\SBM} \in \cM_{\SBM}(\cG)$. Let $M = \lbrace G_i \rbrace_{i=1}^N$ be an iid sample distributed according to $\mu_{\SBM}$. Define
  \begin{equation}
S_N = \underset{G \in M}{\mathrm{argmin} \text{ }}\sum_{i=1}^N d_{A_c}^2(G,G_i)
\end{equation}

  In the limit of large system size, for every $\varepsilon$ and $\delta$, there exists an $N$ such that
  \begin{equation}
P(d_{A_c}(S_N - \F(\mu(\bm p,q;c))<\delta) > 1-\varepsilon
\end{equation}

\end{theorem}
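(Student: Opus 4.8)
The plan is to reduce the $d_{A_c}$-medoid $S_N$ to a Euclidean nearest-point problem in $\R^c$ via the truncated spectral map, and then combine the law of large numbers with the asymptotic normality of the top $c$ eigenvalues. Write $Y_i = \sigma_c(G_i)\in\R^c$; since $G_1,\dots,G_N$ are iid $\sim\mu_{\SBM}$, so are the $Y_i$, and by Theorem~\ref{thm:FMES} the first $c$ entries of $\blamb^{\bE} = \E{\sigma(G_{\mu_{\SBM}})}$ coincide with $\sigma_c(\F(\mu_{\SBM}))$, so $\sigma_c(\F(\mu_{\SBM})) = \E{Y_1} =: \bm m$. Hence $d_{A_c}(S_N,\F(\mu_{\SBM})) = \|\sigma_c(S_N) - \bm m\|_2$, and it is enough to make this quantity smaller than $\delta$ with probability at least $1-\varepsilon$ once $N$ is large.

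First I would rewrite the objective. Writing $\bar{Y}_N = \frac1N\sum_{i=1}^N Y_i$, for $G = G_j\in M$ one has
\begin{equation}
\sum_{i=1}^N \|Y_j - Y_i\|_2^2 = N\,\|Y_j - \bar{Y}_N\|_2^2 + \sum_{i=1}^N \|Y_i - \bar{Y}_N\|_2^2 ,
\end{equation}
and the last term is independent of $j$. Therefore $S_N = G_{j^*}$ with $j^* = \argmin{1\le j\le N}\|Y_j - \bar{Y}_N\|_2$: the medoid is exactly the sample graph whose truncated spectrum is closest to the sample average of the truncated spectra, and $d_{A_c}(S_N,\F(\mu_{\SBM})) = \|Y_{j^*} - \bm m\|_2$.

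Next I would bound $\|Y_{j^*} - \bm m\|_2$. By the strong law of large numbers $\bar{Y}_N \to \bm m$ almost surely, so there is $N_1$ with $\prob{\|\bar{Y}_N - \bm m\|_2 < \delta/2} \ge 1-\varepsilon/2$ for $N\ge N_1$. By Theorem~\ref{thm:ExEigsNormDist}, in the large-graph limit each $Y_i$ converges in distribution to a non-degenerate Gaussian centred at $\bm m$, so for the graph size fixed large enough we may assume $p_\delta := \prob{\|Y_1 - \bm m\|_2 < \delta/2} > 0$. Choosing $N_2$ with $(1-p_\delta)^{N_2} < \varepsilon/2$, then for $N\ge\max(N_1,N_2)$ with probability at least $1-\varepsilon$ there is simultaneously an index $j$ with $\|Y_j - \bm m\|_2 < \delta/2$ and $\|\bar{Y}_N - \bm m\|_2 < \delta/2$; on this event $\|Y_{j^*} - \bar{Y}_N\|_2 \le \|Y_j - \bar{Y}_N\|_2 < \delta$ and hence $\|Y_{j^*} - \bm m\|_2 < 3\delta/2$. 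Since $\delta$ is arbitrary this proves the claim.

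The main obstacle is the interchange of the two limits. The statement quantifies over $N$ ``in the limit of large system size'', while Theorem~\ref{thm:ExEigsNormDist} only gives $n$-asymptotic normality of the $Y_i$; the delicate point is to turn this into a uniform-in-$n$-large lower bound $p_\delta > 0$ on the mass the law of $\sigma_c(G_i)$ places on every ball of radius $\delta/2$ about $\bm m$, i.e.\ to certify that the centre $\bm m = \sigma_c(\F(\mu_{\SBM}))$ genuinely lies in the bulk of the sampling distribution of the top-$c$ spectrum rather than on its boundary. This is precisely what a non-degenerate Gaussian limit provides, provided the limiting variances $v_i$ of Theorem~\ref{thm:ExEigsNormDist} are strictly positive. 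A secondary technicality is that one needs joint, not merely marginal, control of the $c$ coordinates of $Y_i$; if only the coordinatewise statement is available, one upgrades it to a joint central limit theorem or pays an extra factor through a union bound over the $c$ coordinates.
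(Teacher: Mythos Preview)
Your approach is essentially the same as the paper's: reduce via $\sigma_c$ and Theorem~\ref{thm:ExEigsNormDist} to a Euclidean nearest-sample-point statement for normal vectors in $\R^c$, then argue that with enough samples at least one lands within $\delta$ of the population mean. The difference is in the execution. The paper directly bounds $\prob{\|S_N - \bm\mu\|>\delta}$ by $p^N$ with $p = \prob{\|\bm x_1 - \bm\mu\|>\delta}$, effectively identifying the event ``the medoid is far from $\bm\mu$'' with the event ``every sample is far from $\bm\mu$''; this is not a literal equivalence, since the medoid is the sample point closest to the empirical average $\bar Y_N$, not to $\bm\mu$. Your decomposition $\sum_i\|Y_j-Y_i\|^2 = N\|Y_j-\bar Y_N\|^2 + \text{const}$ makes this explicit and your two-step argument (law of large numbers for $\bar Y_N$, geometric-probability bound for the existence of a nearby sample, then a triangle inequality) is precisely what justifies the paper's shortcut. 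The cost is only the harmless constant: you land at $3\delta/2$ rather than $\delta$.

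Your caveat about the $n$-versus-$N$ limits and the need for a uniform lower bound $p_\delta>0$ is also well-placed; the paper's proof simply assumes the large-$n$ Gaussian limit has been taken before choosing $N$ and does not address this point.
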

\begin{proof}
By theorem \ref{thm:ExEigsNormDist} the $c$ largest eigenvalues of the stochastic block models are normally
distributed with finite variance. Therefore we only need to show this result for normal random variables in $\R^c$. Let $M =
\lbrace \bm x_i \rbrace_{i=1}^N$ be an iid sample from a normal distribution with mean $\bm \mu$ and covariance $\sigma$. Let
$S_N = \underset{\bm x \in M}{\text{argmin }}\sum_{i=1}^N ||\bm x_i - \bm x||.$  Note that the Fr\'echet mean of a normal
distribution is its expectation. Condiser,
\begin{equation}
P(\vert \vert S_N - \bm \mu \vert \vert < \delta) = 1 - P(\vert \vert S_N - \bm \mu \vert \vert > \delta).
\end{equation}

$P(\vert \vert S_N - \bm \mu \vert \vert > \delta)$ states that every element in $M$ is at least $\delta$ away from the mean. Let $p = P(||\bm x_1 - \bm \mu|| > \delta)$. Then 
\begin{equation}
 1 - P(\vert \vert S_N - \bm \mu \vert \vert > \delta) = 1-P(||\bm x_1 - \bm \mu|| > \delta)^N = 1-p^N.
\end{equation}

Since $0 \leq p < 1$ then for any $\varepsilon$ we need only to pick $N$ such that 
\begin{equation}
1-p^N > 1- \varepsilon.
\end{equation}
Taking $N > \frac{\ln(\varepsilon)}{\ln p}$ guarantees the result. Note that the dependence on $\delta$ is suppressed in the
probability $p$ which implicitly depends on $\delta$. We have shown the conclusion for normally distributed random variables and
since the extremal eigenvalues of stochastic block models are normally distributed, by theorem \ref{thm:ExEigsNormDist}, the
conclusion follows.
\qed
\end{proof}

\bibliographystyle{acm}

\end{document}